\keywords{Multiparty Session Types, Fault-Tolerance, Link and Crash Failures}
\begin{document}

%%%%%%%%%%%%%%%%%%%%%%%
%  Title Information  %
%%%%%%%%%%%%%%%%%%%%%%%

\title{FTMPST: Fault-Tolerant Multiparty Session Types}

\author[K.~Peters]{Kirstin Peters\lmcsorcid{0000-0002-4281-0074}}[a]
\author[U.~Nestmann]{Uwe Nestmann\lmcsorcid{0000−0002−8520−5448}}[b]
\author[C.~Wagner]{Christoph Wagner}[c]

\address{Augsburg University, Germany}
\email{kirstin.peters@uni-a.de}

\address{TU Berlin, Germany}
\email{uwe.nestmann@tu-berlin.de}

\address{Paessler AG, Germany}

%%%%%%%%%%%%%%
%  Abstract  %
%%%%%%%%%%%%%%

\begin{abstract}
	Multiparty session types are designed to abstractly capture the structure of communication protocols and verify behavioural properties. One important such property is progress, \ie the absence of deadlock. Distributed algorithms often resemble multiparty communication protocols. But proving their properties, in particular termination that is closely related to progress, can be elaborate. Since distributed algorithms are often designed to cope with faults, a first step towards using session types to verify distributed algorithms is to integrate fault-tolerance.

	We extend multiparty session types to cope with system failures such as unreliable communication and process crashes. Moreover, we augment the semantics of processes by failure patterns that can be used to represent system requirements (as, \eg, failure detectors). To illustrate our approach we analyse a variant of the well-known rotating coordinator algorithm by Chandra and Toueg.
\end{abstract}

\maketitle

%%%%%%%%%%%%%%%%%%
%  Introduction  %
%%%%%%%%%%%%%%%%%%

\section{Introduction}
\label{sec:introduction}

Multi-Party Session Types (\MPST) are used to statically ensure correctly coordinated behaviour in systems without global control (\cite{hondaYoshidaCarbone16,gentleintro}).
One important such property is progress, \ie the absence of deadlock. Like with every other static typing approach, the main advantage is their efficiency, \ie they avoid the problem of state space explosion.
\MPST are designed to abstractly capture the structure of communication protocols.
They describe global behaviours as \emph{sessions}, \ie units of
conversations \cite{hondaYoshidaCarbone16,BettiniAtall08,BocciAtall10}. The participants of such sessions are called \emph{roles}.
\emph{Global types} specify protocols from a global point of view.
These types are used to reason about processes formulated
in a \emph{session calculus}.
%Most of the existing session calculi are variants of the $ \pi $-calculus \cite{milnerParrowWalker92}.

Distributed algorithms (DA) very much resemble multiparty communication protocols.
An essential behavioural property of DA is termination \cite{Tel94,Lynch96}, despite failures, but it is often elaborate to prove.
It turns out that progress (as provided by MPST) and termination (as required by DA) are closely related.

Many DA were designed in a fault-tolerant way, in order to work in environments, where they have to cope with system failures---be it links dropping messages or processes crashing.
G\"artner \cite{DBLP:journals/csur/Gartner99} suggested four different forms of fault-tolerance, depending on whether the safety and liveness requirements are met, or not. An algorithm is called \emph{masking} in the (best) case that both properties hold while tolerating faults transparently, \ie without further intervention by the programmer. It is called \emph{non-masking}, however, if faults are dealt with explicitly in order to cope with unsafe states, while still guaranteeing liveness. The \emph{fail-safe} case then captures algorithms that remain safe, but not live. (The fourth form is just there for completeness; here neither safety nor liveness is guaranteed.)
We focus on masking fault-tolerant algorithms.

While the detection of conceptual design errors is a standard property of type systems, proving correctness of algorithms despite the occurrence of system failures is not.
Likewise, traditional \MPST do not cover fault tolerance or failure handling.
There are several approaches to integrate explicit failure handling in \MPST (\eg \cite{CarboneHondaYoshida08,capecchi2016,chen.viering.bejleri.ziarek.eugster,viering18,dem15,adameitPetersNestmann17}).
These approaches are sometimes enhanced with recovery mechanisms such as \cite{castellaniEtAl} or even provide algorithms to help find safe states to recover from as in \cite{neykova2017let}.
Many of these approaches introduce nested \textsc{try}-and-\textsc{catch}-blocks and a challenge is to ensure that all participants are consistently informed about concurrent \textsc{throws} of exceptions.
Therefore, exceptions are propagated within the system.
Though explicit failure handling makes sense for high-level applications, the required message overhead is too inefficient for many low-level algorithms.
Instead these low-level algorithms are often designed to tolerate a certain amount of failures.
Since we focus on the communication structure of systems, additional messages as reaction to faults (\eg to propagate faults) are considered \emph{non-masking} failure handling.
In contrast, we expect masking fault-tolerant algorithms to cope without messages triggered by faults.
We study how much unhandled failures a well-typed system can tolerate, while maintaining the typical properties of \MPST.

We propose a variant of \MPST with unreliable interactions and augment the semantics to also represent failures such as message loss and crashing processes, as well as more abstract concepts of fault-tolerant algorithms such as the possibility to suspect a process to be faulty.
To guide the behaviour of unreliable communication, the semantics of processes uses failure patterns that are not defined but \emph{could} be instantiated by an application. This allows us to cover requirements on the system---as, \eg, a bound on the number of faulty processes---as well as more abstract concepts like failure detectors.
It is beyond the scope of this paper to discuss \emph{how} failure patterns could be implemented.

\subsection{Related Work}

Type systems are usually designed for failure-free scenarios.
An exception is \cite{KouzapasGutkovasGay14} that introduces unreliable broadcast, where a transmission can be received by multiple receivers but not necessarily all available receivers. In the latter case, the receiver is deadlocked. In contrast, we consider fault-tolerant interactions, where in the case of a failure the receiver is \emph{not} deadlocked.

The already mentioned systems in \cite{CarboneHondaYoshida08,capecchi2016,chen.viering.bejleri.ziarek.eugster,viering18,dem15} extend session types with exceptions thrown by processes within \textsc{try}-and-\textsc{catch}-blocks, interrupts, or similar syntax.
They structurally and semantically encapsulate an unreliable part of a protocol and provide some means to 'detect' a failure and 'react' to it.
For example \cite{viering18} proposes a variant of \MPST with the explicit handling of crash failures.
Therefore they coordinate asynchronous messages for run-time crash notifications using a coordinator.
Processes in \cite{viering18} have access to local failure detectors which eventually detect all failed peers and do not falsely suspect peers.
In contrast we augment the semantics of the session calculus with failure patterns that \eg allow to implement failure detectors but may also be used to implement system requirements.
Exceptions may also describe, why a failure occurred.
Here we deliberately do not model causes of failures or how to 'detect' a failure.
Different system architectures might provide different mechanisms to do so, for example, by means of time-outs.
As is standard for the analysis of DA, our approach allows us to port the verified algorithms on different system architectures that satisfy the necessary system requirements.

Another essential difference is how systems react to faults.
In~\cite{capecchi2016}, \textsc{throw}-messages are propagated among nested \textsc{try}-and-\textsc{catch}-blocks to ensure that all participants are \emph{consistently} informed about concurrent \textsc{throws} of exceptions.
Fault-tolerant DA, however, have to deal with the problem of inconsistency that some part of a system may consider a process/link as crashed, while at the same time the same process/link is regarded as correct by another part.
(This is one of the most challenging problems in the design and verification of \emph{fault-tolerant} DA.)
The reason is that \emph{distributed} processes usually cannot reliably
observe an error on another system part, unless they are informed by some
system ``device'' (like the ``coordinator'' of~\cite{viering18} or the ``oracle'' of~\cite{capecchi2016}).
Therefore, abstractions like unreliable failure detectors are used to model this restricted observability which can, for example, be implemented by time-outs.
Failure detectors are often considered to be local (see previous paragraph), but they cannot ensure global consistency.
Various degrees of consistency, or [un]reliability, of failure detectors are often defined via constraints that are expressed as global temporal properties~\cite{ChandraToueg96} (see also Section~\ref{sec:example}).
Abstract properties, like the communication predicates in the Heard-Of model~\cite{DBLP:journals/dc/Charron-BostS09}, can also be used to specify minimum requirements on system behaviours at a global level of abstraction in order to be able to guarantee correctness properties.
%In a sense, our use of failure patterns follows this tradition of identifying levels of abstraction that best fit an application or verification scenario in order to focus on other details, such as the session type machinery presented in this paper.

In previous work, we used the above-mentioned failure detector abstractions in the context of (untyped) process calculi~\cite{DBLP:conf/asian/NestmannF03} to verify properties of several algorithms for Distributed Consensus~\cite{DBLP:conf/concur/NestmannFM03,KuhnrichNestmann09}.
Key for the respective proofs was the intricate reconstruction of global state information from process calculus terms, as we later on formalized in~\cite{wagnerNestmann14}.
We conjecture that MPST could provide proof support in this context, for example, for the methods that apply to these global states.
The work by Francalanza and Hennessy~\cite{DBLP:conf/esop/FrancalanzaH07} also uses a process calculus for the analysis of DA, but employs bisimulation proof techniques.
In order to do so, however, the intended properties need to be formulated via some global wrapper code, which provides a level of indirection to proofs.
%Double-Blind:And like our own work~\cite{KuhnrichNestmann09}, where we adapted their approach, it suffers from the absence of clear global state information.
This approach suffers from the absence of clear global state information.
In contrast, MPST supply useful global (session type) information from scratch.

\subsection{Summary}

The present paper is an extended version of \cite{petersNestmannWagner22} that additionally contains the proofs of the presented results as well as some additional explanations (also see the technical report \cite{petersNestmannWagnerTR22}).
In Section~\ref{sec:faultolerance} we give an impression of the forms of fault-tolerant interactions that we consider.
Section~\ref{sec:syntax} introduces the syntax of our version of multiparty session types.
The semantics of the session calculus is given in Section~\ref{sec:failurePatterns}.
In Section~\ref{sec:typing} we provide the typing rules and show that the standard properties that are usually required for multiparty session types versions are valid in our case.
Section~\ref{sec:example} provides an example of using fault-tolerant multiparty session types by analysing an implementation of a well-known Consensus algorithm.

%%%%%%%%%%%%%%%%%%%%%%%%%%%%%%%%%%%%%%%%%%%%%%%%
%  Fault-Tolerance and Distributed Algorithms  %
%%%%%%%%%%%%%%%%%%%%%%%%%%%%%%%%%%%%%%%%%%%%%%%%

\section{Fault-Tolerance in Distributed Algorithms}
\label{sec:faultolerance}

We consider three sources of failure in an \unrel communication (Figure~\ref{fig:unrelComWeakRBran}(a)):
(1) the sender may crash before it releases the message,
(2) the receiver may crash before it can consume the message, or
(3) the communication medium may lose the message.
The design of a DA may allow it to handle some kinds of failures better than others.
Failures are unpredictable events that occur at runtime.
Since types consider only static and predictable information, we do not distinguish between different kinds of failure or model their source in types.
Instead we allow types, \ie the specifications of systems, to distinguish between potentially faulty and reliable interactions.

\begin{figure}[t]
	\centering
	\scalebox{0.9}{
	\begin{tikzpicture}[auto]
		% a
		\node at (-0.5, 1.5) {(a)};
		\node[state, minimum size=5mm] at (2.5, 1.5) (r11) {$ \Role_1 $};
		\node[state, minimum size=5mm] at (4, 1.5) (r12) {$ \Role_2 $};
		\node[state, minimum size=5mm] at (0, 0) (r111) {$ \Role_1 $};
		\node[state, minimum size=5mm] at (1.5, 0) (r112) {$ \Role_2 $};
		\node[state, minimum size=5mm] at (3, 0) (r121) {$ \Role_1 $};
		\node[state, minimum size=5mm] at (4.5, 0) (r122) {$ \Role_2 $};
		\node[state, minimum size=5mm] at (6, 0) (r131) {$ \Role_1 $};
		\node[state, minimum size=5mm] at (7.5, 0) (r132) {$ \Role_2 $};
		\path[->] (r11) edge node{$ \Label{\left<\Expr[v]\right>} $} (r12);
		\draw[-, thick, color=red] (-0.4, -0.4) -- (0.4, 0.4);
		\draw[-, thick, color=red] (-0.4, 0.4) -- (0.4, -0.4);
		\path[-] (r121) edge node{$ \Label{\left<\Expr[v]\right>} $} (4, 0);
		\draw[-, thick, color=red] (4.1, -0.4) -- (4.9, 0.4);
		\draw[-, thick, color=red] (4.1, 0.4) -- (4.9, -0.4);
		\path[-] (r131) edge node{$ \Label{\left<\Expr[v]\right>} $} (7, 0);
		\draw[-, thick, color=red] (6.45, 0.1) -- (6.85, 0.5);
		\draw[-, thick, color=red] (6.45, 0.5) -- (6.85, 0.1);
		\path[->, dotted, color = blue] (3.25, 1.2) edge node[above, near end]{(1)} (1, 0.4);
		\path[->, dotted, color = blue] (3.25, 1.2) edge node[left] {(2)} (3.25, 0.4);
		\path[->, dotted, color = blue] (3.25, 1.2) edge node[above, near end] {(3)} (5.5, 0.4);
		\draw[-, dashed, color=blue] (2.25, -0.5) -- (2.25, 0.5);
		\draw[-, dashed, color=blue] (5.25, -0.5) -- (5.25, 0.5);
		% b
		\node at (9, 1.5) {(b)};
		\node[state, minimum size=5mm] at (11, 1.5) (r21) {$ \Role_1 $};
		\node[state, minimum size=5mm] at (12.5, 1.5) (r22) {$ \Role_2 $};
		\node[state, minimum size=5mm] at (9.5, 0) (r211) {$ \Role_1 $};
		\node[state, minimum size=5mm] at (11, 0) (r212) {$ \Role_2 $};
		\node[state, minimum size=5mm] at (12.5, 0) (r221) {$ \Role_1 $};
		\node[state, minimum size=5mm] at (14, 0) (r222) {$ \Role_2 $};
		\path[->] (r21) edge node{$ \Label $} (r22);
		\draw[-, thick, color=red] (9.1, -0.4) -- (9.9, 0.4);
		\draw[-, thick, color=red] (9.1, 0.4) -- (9.9, -0.4);
		\path[-] (r221) edge node{$ \Label $} (13.5, 0);
		\draw[-, thick, color=red] (13.6, -0.4) -- (14.4, 0.4);
		\draw[-, thick, color=red] (13.6, 0.4) -- (14.4, -0.4);
		\path[->, dotted, color = blue] (11.75, 1.2) edge node[left]{(1)} (10.5, 0.4);
		\path[->, dotted, color = blue] (11.75, 1.2) edge node[right] {(2)} (12.75, 0.4);
		\draw[-, dashed, color=blue] (11.75, -0.5) -- (11.75, 0.5);
	\end{tikzpicture}}
	\caption{\Unrel Communication (a) and \WeakR Branching (b).}
	\label{fig:unrelComWeakRBran}
\end{figure}

A fault-tolerant algorithm has to solve its task despite such failures.
Remember that \MPST analyse the communication structure.
Accordingly, we need a mechanism to tolerate faults in the communication structure.
We want our type system to ensure that a faulty interaction neither blocks the overall protocol nor influences the communication structure of the system after this fault.
We consider an \unrel communication as fault-tolerant if a failure does not influence the guarantees for the overall communication structure except for this particular communication.
Moreover, if a potentially unreliable communication is executed successfully, then our type system ensures the same guarantees as for reliable communication such as \eg the absence of communication mismatches.

To ensure that a failure does not block the algorithm, both the receiver and the sender need to be allowed to proceed without their \unrel communication partner. Therefore, the receiver of an \unrel communication is required to specify a default value that, in the case of failure, is used instead of the value the process was supposed to receive.
The type system ensures the existence of such default values and checks their sort.

Moreover, we augment unreliable communication with labels that help us to avoid communication mismatches.
Assume for instance two subsequent \unrel communications in that values of different sorts, a natural number and a boolean, are transmitted.
If the first message with its natural number is lost but the second message containing a Boolean value is transmitted, the receiver could wrongly receive a Boolean value although it still waits for a natural number.
To avoid this mismatch, we add a label to \unrel communication, ensure (by the typing rules) that the same label is never associated with different types, and let the semantics inspect the label of a message before reception.
Note that this problem, \ie how to ensure the absence of communication mismatches in the case of \unrel communication, is one of the main challenges in structuring fault-tolerant communication.

Branching in the context of failures is more difficult, because a branch marks a decision point in a specification, \ie the participants of the session are supposed to behave differently \wrt this decision.
In an \unrel setting it is difficult to ensure that all participants are informed consistently about such a decision.% and adapt their behaviour accordingly.

Consider a reliable branching that is decided by a process $ \Role_1 $ and transmitted to $ \Role_2 $.
If we try to execute such a branching despite failures, we observe that there are again three ways in that this branching can go wrong (Figure~\ref{fig:unrelComWeakRBran}(b)):
(1) The sender may crash before it releases its decision.
This will block $ \Role_2 $, because it is missing the information about the branch it should move to.
(2) The receiver might crash.
(3) The message of $ \Role_1 $ is lost.
Then again $ \Role_2 $ is blocked.

Case~(2) can be dealt with similar to \unrel communication, \ie by marking the branching as potentially faulty and by ensuring that a crash of $ \Role_2 $ will not block another process.
To deal with Case~(1), we declare one of the offered branches as default. If $ \Role_1 $ has crashed, $ \Role_2 $ moves to the default branch.
Then $ \Role_2 $ will not necessarily move to the branch that $ \Role_1 $ had in mind before it crashed, but to a valid/specified branch and, since $ \Role_1 $ is crashed, no two processes move to different branches.
The main problem is in Case~(3).
Let $ \Role_1 $ move to a non-default branch and transmit its decision to $ \Role_2 $, this message gets lost, and $ \Role_2 $ moves to the default branch.
Now both processes did move to branches that are described by their types; but they are in different branches.
This case violates the specification in the type and we want to reject it.
More precisely, we consider three levels of failures in interactions:
\begin{description}
	\item[\StrongR ($ \iR $)] Neither the sender nor the receiver can crash as long as they are involved in this interaction. The message cannot be lost by the communication medium. This form corresponds to reliable communication as it was described in \cite{aguileraChenToueg97} in the context of distributed algorithms.
		This is the standard, failure-free case.
	\item[\WeakR ($ \iW $)] Both the sender and the receiver might crash at every possible point during this interaction. But the communication medium cannot lose the message.
	\item[\Unrel ($ \iU $)] Both the sender and the receiver might crash at every possible point during this interaction and the communication medium might lose the message. There are no guarantees that this interaction---or any part of it---takes place.
		Here, it is difficult to ensure interesting properties in branching.
\end{description}
We use the subscripts or superscripts $ \iR $, $ \iW $, or $ \iU $ to indicate actions of the respective kind.

%%%%%%%%%%%%%%%%%%%%%%%%%%%%%%%%%%%%%%%%
%  Fault-Tolerant Types and Processes  %
%%%%%%%%%%%%%%%%%%%%%%%%%%%%%%%%%%%%%%%%

\section{Fault-Tolerant Types and Processes}
\label{sec:syntax}

For clarity, we often distinguish names into \emph{values}, \ie the payload of messages, \emph{shared channels}, or \emph{session channels} according to their usage; there is, however, no need to formally distinguish between different kinds of names.

We assume that the sets $ \names $ of names $ \Chan[a], \Chan, \Args \ldots $; $ \roles $ of roles $ \Role[n], \Role, \ldots $; $ \labels $ of labels $ \Label, \LabelD, \ldots $; $ \typeVars $ of type variables $ \TypeV $; and $ \procVars $ of process variables $ \ProcV $ are pairwise distinct.
To simplify the reduction semantics of our session calculus, we use natural numbers as roles (compare to \cite{hondaYoshidaCarbone16}).
Sorts $ \Sort $ range over $ \mathbb{B}, \mathbb{N}, \ldots $.
The set $ \expressions $ of expressions $ \Expr, \Expr[v], \Expr[b], \ldots $ is constructed from the standard Boolean operations, natural numbers, names, and (in)equalities.

Global types specify the desired communication structure from a global point of view.
In local types this global view is projected to the specification of a single role/participant.
We use standard \MPST (\cite{hondaYoshidaCarbone08,hondaYoshidaCarbone16}) extended by \unrel communication and \weakR branching (highlighted in blue) in Figure~\ref{fig:syntax}.

\begin{figure}[t]
	\centering
	\renewcommand{\tabcolsep}{1pt}
	\renewcommand{\arraystretch}{1.1}
	\begin{tabular}{|llclr|llclr|llclr|}
		\hline
		\multicolumn{5}{|c|}{Global Types} & \multicolumn{5}{c|}{Local Types} & \multicolumn{5}{c|}{Processes}\\
		&&&&& &&&&& & $ P $ & $ \deffTerms $ & $ \PReq{\Chan[a]}{\Role[n]}{\Chan}{P} $ &\\
		&&&&& &&&&& & & $ \sepTerms $ & $ \PAcc{\Chan[a]}{\Role}{\Chan}{P} $ &\\
		& \multirow{2}{*}{$ \GT $} & \multirow{2}{*}{$ \deffTerms $} & \multirow{2}{*}{$ \GComR{\Role_1}{\Role_2}{\Sort}{\GT} $} & & & $ \LT $ & $ \deffTerms $ & $ \LSendR{\Role_2}{\Sort}{\LT} $ & & & & $ \sepTerms $ & $ \PSendR{\Chan}{\Role_1}{\Role_2}{\Expr}{P} $ &\\
		&&&&& & & $ \sepTerms $ & $ \LGetR{\Role_1}{\Sort}{\LT} $ & & & & $ \sepTerms $ & $ \PGetR{\Chan}{\Role_2}{\Role_1}{\Args}{\PT} $ &\\
		& & \multirow{2}{*}{$ \sepTerms $} & \multirow{2}{*}{$ \textcolor{blue}{\GComU{\Role_1}{\Role_2}{\Label}{\Sort}{\GT}} $} & & & & $ \sepTerms $ & $ \textcolor{blue}{\LSendU{\Role_2}{\Label}{\Sort}{\LT}} $ & & & & $ \sepTerms $ & $ \textcolor{blue}{\PSendU{\Chan}{\Role_1}{\Role_2}{\Label}{\Expr}{P}} $ &\\
		&&&&& & & $ \sepTerms $ & $ \textcolor{blue}{\LGetU{\Role_1}{\Label}{\Sort}{\LT}} $ & & & & $ \sepTerms $ & $ \textcolor{blue}{\PGetU{\Chan}{\Role_2}{\Role_1}{\Label}{\Expr[v]}{\Args}{P}} $ &\\
		& & \multirow{2}{*}{$ \sepTerms $} & \multirow{2}{*}{$ \GBranR{\Role_1}{\Role_2}{\Set{ \Label_i.\GT_i}_{i \in \indexSet}} $} & & & & $ \sepTerms $ & $ \LSelR{\Role_2}{\Set{ \Label_i.\LT_i }_{i \in \indexSet}} $ & & & & $ \sepTerms $ & $ \PSelR{\Chan}{\Role_1}{\Role_2}{\Label}{P} $ &\\
		&&&&& & & $ \sepTerms $ & $ \LBranR{\Role_1}{\Set{ \Label_i.\LT_i }_{i \in \indexSet}} $ & & & & $ \sepTerms $ & $ \PBranR{\Chan}{\Role_2}{\Role_1}{\Set{ \Label_i.P_i }_{i \in \indexSet}} $ &\\
		& & \multirow{2}{*}{$ \sepTerms $} & \multirow{2}{*}{$ \textcolor{blue}{\GBranW{\Role}{\Role[R]}{\Set{ \Label_i.\GT_i }_{i \in \indexSet, \LabelD}}} $} & & & & $ \sepTerms $ & $ \textcolor{blue}{\LSelW{\Role[R]}{\Set{ \Label_i.\LT_i }_{i \in \indexSet}}} $ & & & & $ \sepTerms $ & $ \textcolor{blue}{\PSelW{\Chan}{\Role}{\Role[R]}{\Label}{P}} $ &\\
		&&&&& & & $ \sepTerms $ & $ \textcolor{blue}{\LBranW{\Role}{\Set{ \Label_i.\LT_i }_{i \in \indexSet, \LabelD}}} $ & & & & $ \sepTerms $ & $ \textcolor{blue}{\PBranW{\Chan}{\Role_j}{\Role}{\Set{ \Label_i.P_i }_{i \in \indexSet, \LabelD}}} $ &\\
		& & $ \sepTerms $ & $ \GPar{\GT_1}{\GT_2} $ & & &&&&& & & $ \sepTerms $ & $ P_1 \mid P_2 $ &\\
		& & $ \sepTerms $ & $ \GRep{\TypeV}{\GT} \sepTerms \TypeV \sepTerms \GEnd $ & & & & $ \sepTerms $ & $ \LRep{\TypeV}{\LT} \sepTerms \TypeV \sepTerms \LEnd $ & & & & $ \sepTerms $ & $ \PRep{\ProcV}{P} \sepTerms \ProcV \sepTerms \PEnd $ &\\
		&&&&& &&&&& & & $ \sepTerms $ & $ \PITE{\Expr[b]}{P_1}{P_2} $ &\\
		&&&&& &&&&& & & $ \sepTerms $ & $ \PRes{\Args}{P} \sepTerms \PCrash $ &\\
		& & \multirow{2}{*}{$ \sepTerms $} & \multirow{2}{*}{$ \GDel{\Role_1}{\Role_2}{\Chan'}{\Role}{\LT}{\GT} $} & & & & $ \sepTerms $ & $ \LDelA{\Role_2}{\Chan'}{\Role}{\LT}{\LT'} $ & & & & $ \sepTerms $ & $ \PDelA{\Chan}{\Role_1}{\Role_2}{\AT{\Chan'}{\Role}}{\PT} $ &\\
		&&&&& & & $ \sepTerms $ & $ \LDelB{\Role_1}{\Chan'}{\Role}{\LT}{\LT'} $ && & & $ \sepTerms $ & $ \PDelB{\Chan}{\Role_2}{\Role_1}{\AT{\Chan'}{\Role}}{\PT} $ &\\
		&&&&& &&&&& & & $ \sepTerms $ & $ \MQ{\Chan}{\Role_1}{\Role_2}{\Queue} $ &\\
		\hline
		\multicolumn{10}{|c|}{Message Types} & \multicolumn{5}{c|}{Messages}\\
		\multicolumn{10}{|c|}{\multirow{2}{*}{$ \mathsf{mt} \deffTerms \MessR{\Sort} \sepTerms \textcolor{blue}{\MessU{\Label}{\Sort}} \sepTerms \MessBR{\Label} \sepTerms \textcolor{blue}{\MessBW{\Label}} \sepTerms \AT{\Chan}{\Role} $}} & \multicolumn{5}{c|}{$ \; \mathsf{m} \deffTerms \MessR{\Expr[v]} \sepTerms \textcolor{blue}{\MessU{\Label}{\Expr[v]}} \sepTerms \MessBR{\Label} $}\\
		\multicolumn{10}{|c|}{} &&& $ \sepTerms $ & $ \textcolor{blue}{\MessBW{\Label}} \sepTerms \AT{\Chan}{\Role} $ &\\
		\hline
	\end{tabular}
	\caption{Syntax of Fault-Tolerant \MPST{}}
	\label{fig:syntax}
\end{figure}

A new session $ \Chan $ with $ \Role[n] $ roles is initialised with $ \PReq{\Chan[a]}{\Role[n]}{\Chan}{P} $ and $ \PAcc{\Chan[a]}{\Role}{\Chan}{P} $ via the shared channel $ \Chan[a] $. We identify sessions with their unique session channel.

The type $ \GComR{\Role_1}{\Role_2}{\Sort}{\GT} $ specifies a \strongR communication from role $ \Role_1 $ to role $ \Role_2 $ to transmit a value of the sort $ \Sort $ and then continues with $ \GT $.
A system with this type will be guaranteed to perform a corresponding action.
In a session $ \Chan $ this communication is implemented by the sender $ \PSendR{\Chan}{\Role_1}{\Role_2}{\Expr}{\PT_1} $ (specified as $ \LSendR{\Role_2}{\Sort}{\LT_1} $) and the receiver $ \PGetR{\Chan}{\Role_2}{\Role_1}{\Args}{\PT_2} $ (specified as $ \LGetR{\Role_1}{\Sort}{\LT_2} $).
As result, the receiver instantiates $ \Args $ in its continuation $ \PT_2 $ with the received value.

The type $ \GComU{\Role_1}{\Role_2}{\Label}{\Sort}{\GT} $ specifies an \unrel communication from $ \Role_1 $ to $ \Role_2 $ transmitting (if successful) a label $ \Label $ and a value of type $ \Sort $ and then continues (regardless of the success of this communication) with $ \GT $.
The \unrel counterparts of senders and receivers are $ \PSendU{\Chan}{\Role_1}{\Role_2}{\Label}{\Expr}{\PT_1} $ (specified as $ \LSendU{\Role_2}{\Label}{\Sort}{\LT_1} $) and $ \PGetU{\Chan}{\Role_2}{\Role_1}{\Label}{\Args[v]}{\Args}{\PT_2} $ (specified as $ \LGetU{\Role_1}{\Label}{\Sort}{\LT_2} $).
The receiver $ \PGetU{\Chan}{\Role_2}{\Role_1}{\Label}{\Args[v]}{\Args}{\PT_2} $ declares a default value $ \Args[v] $ that is used instead of a received value to instantiate $ \Args $ after a failure.
Moreover, a label is communicated that helps us to ensure that a faulty \unrel communication has no influence on later actions.

The \strongR branching $ \GBranR{\Role_1}{\Role_2}{\Set{ \Label_i.\GT_i}_{i \in \indexSet}} $ allows $ \Role_1 $ to pick one of the branches offered by $ \Role_2 $.
We identify the branches with their respective label.
Selection of a branch is by $ \PSelR{\Chan}{\Role_1}{\Role_2}{\Label}{P} $ (specified as $ \LSelR{\Role_2}{\Set{ \Label_i.\LT_i }_{i \in \indexSet}} $).
Upon receiving $ \Label_j $, $ \PBranR{\Chan}{\Role_2}{\Role_1}{\Set{ \Label_i.P_i }_{i \in \indexSet}} $ (specified as $ \LBranR{\Role_1}{\Set{ \Label_i.\LT_i }_{i \in \indexSet}} $) continues with $ \PT_j $.

As discussed in the end of Section~\ref{sec:introduction}, the counterpart of branching is \weakR and not \unrel.
It is implemented by $ \GBranW{\Role}{\Role[R]}{\Set{\Label_i.\GT_i}_{i \in \indexSet, \LabelD}} $, where $ \Role[R] \subseteq \roles $ and $ \LabelD $ with $ \default \in \indexSet $ is the default branch.
We use a broadcast from $ \Role $ to all roles in $ \Role[R] $ to ensure that the sender can influence several participants consistently.
Splitting this action to inform the roles in $ \Role[R] $ separately does not work, because we cannot ensure consistency if the sender crashes while performing these subsequent actions.
The type system will ensure that no message is lost.
Because of that, all processes that are not crashed will move to the same branch.
We often abbreviate branching \wrt to a small set of branches by omitting the set brackets and instead separating the branches by $ \oplus $, where the last branch is always the default branch.
In contrast to the \strongR cases, $ \PSelW{\Chan}{\Role}{\Role[R]}{\Label}{\PT} $ (specified as $ \LSelW{\Role[R]}{\Set{ \Label_i.\LT_i }_{i \in \indexSet}} $) allows to broadcast its decision to $ \Role[R] $ and $ \PBranW{\Chan}{\Role_j}{\Role}{\Set{ \Label_i.\PT_i }_{i \in \indexSet, \LabelD}} $ (specified as $ \LBranW{\Role}{\Set{ \Label_i.\LT_i }_{i \in \indexSet, \LabelD}} $) defines a default label $ \LabelD $.

The $ \PCrash $ denotes a process that crashed.
Similar to \cite{hondaYoshidaCarbone16}, we use message queues to implement asynchrony in sessions.
Therefore, session initialisation introduces a directed and initially empty message queue $ \MQ{\Chan[s]}{\Role_1}{\Role_2}{\emptyList} $ for each pair of roles $ \Role_1 \neq \Role_2 $ of the session $ \Chan[s] $.
The separate message queues ensure that messages with different sources or destinations are not ordered, but each message queue is FIFO.
Since the different forms of interaction might be implemented differently (\eg by TCP or UDP), it make sense to further split the message queues into three message queues for each pair $ \Role_1 \neq \Role_2 $ such that different kinds of messages do not need to be ordered.
To simplify the presentation of examples in this paper and not to blow up the number of message queues, we stick to a single message queue for each pair $ \Role_1 \neq \Role_2 $, but the correctness of our type system does not depend on this decision.
We have five kinds of messages $ \mathsf{m} $ and corresponding message types $ \mathsf{mt} $ in Figure~\ref{fig:syntax}---one for each kind of interaction.
In \strongR communication a value $ v $ (of sort $ \Sort $) is transmitted in a message $ \MessR{v} $ of type $ \MessR{\Sort} $.
In \unrel communication the message $ \MessU{\Label}{v} $ (of type $ \MessU{\Label}{\Sort} $) additionally carries a label $ \Label $.
For branching only the picked label $ \Label $ is transmitted and we add the kind of branching as superscript, \ie message/type $ \MessBR{\Label} $ is for \strongR branching and message/type $ \MessBW{\Label} $ for \weakR branching.
Finally, message/type $ \AT{\Chan}{\Role} $ is for session delegation.
A message queue $ \Queue $ is a list of messages $ \mathsf{m} $ and $ \MT $ is a list of message types $ \mathsf{mt} $.

The remaining operators for independence $ \GPar{\GT}{\GT'} $; parallel composition $ \PPar{\PT}{\PT'} $; recursion $ \GRep{\TypeV}{\GT} $, $ \PRep{\ProcV}{\PT} $; inaction $ \GEnd $, $ \PEnd $; conditionals $ \PITE{\Expr[b]}{\PT_1}{\PT_2} $; session delegation $ \GDel{\Role_1}{\Role_2}{\Chan'}{\Role}{\LT}{\GT} $, $ \PDelA{\Chan}{\Role_1}{\Role_2}{\AT{\Chan'}{\Role}}{\PT} $, $ \PDelB{\Chan}{\Role_2}{\Role_1}{\AT{\Chan'}{\Role}}{\PT} $; and restriction $ \PRes{\Args}{\PT} $ are all standard.
As usual, we assume that recursion variables are guarded and do not occur free in types or processes.

In types $ \GRep{\TypeV}{\GT} $ and $ \LRep{\TypeV}{\LT} $ the type variable $ \TypeV $ is \emph{bound}.
In processes $ \PRep{\ProcV}{\PT} $ the process variable $ \ProcV $ is bound.
Similarly, all names in round brackets are bound in the remainder of the respective process, \eg $ \Chan $ is bound in $ \PT $ by $ \PReq{\Chan[a]}{\Role[n]}{\Chan}{\PT} $ and $ \Args $ is bound in $ \PT $ by $ \PGetR{\Chan}{\Role_1}{\Role_2}{\Args}{\PT} $.
A variable or name is \emph{free} if it is not bound. Let $ \FreeNames{\PT} $ return the free names of $ \PT $.

Let \emph{subterm} denote a (type or process) expression that syntactically occurs within another (type or process) term.
We use '$ . $' (as \eg in $ \PReq{\Chan[a]}{\Role}{\Chan}{\PT} $) to denote sequential composition. In all operators the \emph{prefix} before '$ . $' guards the \emph{continuation} after the '$ . $'.
Let $ \prod_{1 \leq i \leq n} \PT_i $ abbreviate $ \PPar{\PT_1}{\PPar{\ldots}{\PT_n}} $.

Let $ \Roles{\GT} $ return all roles that occur in $ \GT $.
We write  $ \Unreliable{\GT} $, $ \Unreliable{\LT} $, and $ \Unreliable{\PT} $, if none of the prefixes in $ \GT $, $ \LT $, and $ \PT $ is \strongR or for delegation and if $ \PT $ does not contain message queues.

\begin{defi}[Well-Formedness, Global Type]
	A global type is \emph{well-formed} if
	\begin{compactenum}[(1)]
		\item it neither contains free nor unguarded type variables,
		\item $ \Roles{G} = \Set{ 1, \ldots, \Length{\Roles{G}} } $,
		\item for all its subterms of the form $ \GComR{\Role_1}{\Role_2}{\Sort}{\GT} $ or $ \GComU{\Role_1}{\Role_2}{\Label}{\Sort}{\GT} $, we have $ \Role_1 \neq \Role_2 $,
		\item for all its subterms of the form $ \GBranR{\Role_1}{\Role_2}{\Set{ \Label_i.\GT_i }_{i \in \indexSet}} $ or $ \GBranW{\Role}{\Role[R]}{\Set{ \Label_i.\GT_i }_{i \in \indexSet, \LabelD}} $, we have $ \Role_1 \neq \Role_2 $, $ \Role \notin \Role[R] $, $ \default \in \indexSet $, and the labels $ \Label_i $ are pairwise distinct, and
		\item for all its subterms of the form $ \GPar{\GT_1}{\GT_2} $, we have $ \Roles{\GT_1} \cap \Roles{\GT_2} = \emptyset $.
	\end{compactenum}
\end{defi}
We restrict our attention to well-formed global types.

\begin{defi}[Well-Formedness, Global Type]
	A local type is \emph{well-formed} if
	\begin{compactenum}[(1)]
		\item it neither contains free nor unguarded type variables and
		\item for all its subterms of the form $ \LSelR{\Role}{\Set{ \Label_i.\LT_i }_{i \in \indexSet}} $, $ \LBranR{\Role}{\Set{ \Label_i.\LT_i }_{i \in \indexSet}} $, $ \LSelW{\Role[R]}{\Set{ \Label_i.\LT_i }_{i \in \indexSet, \LabelD}} $, or of the form $ \LBranW{\Role[R]}{\Set{ \Label_i.\LT_i }_{i \in \indexSet, \LabelD}} $, we have $ \default \in \indexSet $ and the labels $ \Label_i $ are pairwise distinct.
	\end{compactenum}
\end{defi}
We restrict our attention to well-formed local types.

A session channel and a role together uniquely identify a participant of a session, called an \emph{actor}. A process has an actor $ \AT{\Chan[s]}{\Role} $ if it has an action prefix on $ \Chan $ that mentions $ \Role $ as its first role.
Let $ \Actors{\PT} $ be the set of actors of $ \PT $.

\subsection{Examples}

Consider the specification $ \GDice $ of a simple dice game in a bar
\begin{align*}
	\GRep{\TypeV}{\GComR{\Role[3]}{\Role[1]}{\nat}{\GComR{\Role[3]}{\Role[2]}{\nat}{\GBranR{\Role[3]}{\Role[1]}{{\Set{\Label[roll].\GBranR{\Role[3]}{\Role[2]}{\Label[roll].\TypeV}, \Label[exit].\GBranR{\Role[3]}{\Role[2]}{\Label[exit].\GEnd}}}}}}} & \tag{1}\label{eq:diceG}
\end{align*}
where the dealer Role~$ \Role[3] $ continues to $ \Label[roll] $ a dice and tell its value to player $ \Role[1] $ and then to $ \Label[roll] $ another time for player $ \Role[2] $ until the dealer decides to $ \Label[exit] $ the game.

We can combine \strongR communication/branching and \unrel communication, \eg by ordering a drink before each round in $ \GDice $.
\begin{align*}
	& \GRep{\TypeV}{\GComU{\Role[3]}{\Role[4]}{\Label[drink]}{\nat}{\GComR{\Role[3]}{\Role[1]}{\nat}{\GComR{\Role[3]}{\Role[2]}{\nat}{}}}}\\
	& \GBranR{\Role[3]}{\Role[1]}{{\Set{\Label[roll].\GBranR{\Role[3]}{\Role[2]}{\Label[roll].\TypeV}, \quad \Label[exit].\GBranR{\Role[3]}{\Role[2]}{\Label[exit].\GEnd}}}}
\end{align*}
where role~$ \Role[4] $ represents the bar tender and the noise of the bar may swallow these orders.
Moreover, we can remove the branching and specify a variant of the dice game in that $ \Role[3] $ keeps on rolling the dice forever, but, \eg due to a bar fight, one of our three players might get knocked out at some point or the noise of this fight might swallow the announcements of role~$ \Role[3] $:
\begin{align*}
	\GDiceUC &= \GRep{\TypeV}{\GComU{\Role[3]}{\Role[1]}{\Label[roll]}{\nat}{\GComU{\Role[3]}{\Role[2]}{\Label[roll]}{\nat}{\TypeV}}} & \tag{2}\label{eq:diceGU}
\end{align*}

To restore the branching despite the bar fight that causes failures, we need the \weakR branching mechanism.
\begin{align*}
	\GDiceW = \GRep{\TypeV}{}\GBranW{\Role[3]}{\Set{\Role[1], \Role[2]}}{}
		\begin{array}[t]{l}
			\Label[play].\GComU{\Role[3]}{\Role[1]}{\Label[roll]}{\nat}{\GComU{\Role[3]}{\Role[2]}{\Label[roll]}{\nat}{\TypeV}},\\
			\oplus \; \Label[end].\GComU{\Role[3]}{\Role[1]}{\Label[win]}{\bool}{\GComU{\Role[3]}{\Role[2]}{\Label[win]}{\bool}{\GEnd}}
		\end{array} & \tag{3}\label{eq:diceGW}
\end{align*}
If $ \Role[3] $ is knocked out by the fight, \ie crashes, the game cannot continue.
Then $ \Role[1] $ and $ \Role[2] $ move to the default branch $ \Label[end] $, have to skip the respective \unrel communications, and terminate.
But the game can continue as long as $ \Role[3] $ and at least one of the players $ \Role[1], \Role[2] $ participate.

An implementation of $ \GDiceW $ is $ \PDice = \PPar{\PDiceD}{\PPar{\PDiceR}{\PDiceQ}} $, where for $ \Role[i] \in \Set{ \Role[1], \Role[2] } $:
\begin{align*}
	\PDiceD ={}& \PReq{\Chan[a]}{\Role[3]}{\Chan}{\PRep{\ProcV}{\myif \; \Args_{\Role[1]} \leq 21 \wedge \Args_{\Role[2]} \leq 21}}\\
		& \begin{array}{l}
			\mythen \; \PSelW{\Chan}{\Role[3]}{\Set{\Role[1], \Role[2]}}{\Label[play]}{\PSendU{\Chan}{\Role[3]}{\Role[1]}{\Label[roll]}{\textsf{roll}(\Args_{\Role[1]})}{\PSendU{\Chan}{\Role[3]}{\Role[2]}{\Label[roll]}{\textsf{roll}(\Args_{\Role[2]})}{\ProcV}}}\\
			\myelse \; \PSelW{\Chan}{\Role[3]}{\Set{\Role[1], \Role[2]}}{\Label[end]}{\PSendU{\Chan}{\Role[3]}{\Role[1]}{\Label[win]}{\Args_{\Role[1]} \leq 21}{\PSendU{\Chan}{\Role[3]}{\Role[2]}{\Label[win]}{\Args_{\Role[2]} \leq 21}{\PEnd}}}
		\end{array}\\
	\PDiceP ={}& \PAcc{\Chan[a]}{\Role[i]}{\Chan}{\PRep{\ProcV}{\PBranW{\Chan}{\Role[i]}{\Role[3]}{\Label[play].\PGetU{\Chan}{\Role[i]}{\Role[3]}{\Label[roll]}{\Args}{\Args}{\ProcV} \oplus \Label[end].\PGetU{\Chan}{\Role[i]}{\Role[3]}{\Label[win]}{\false}{\Args[w]}{\PEnd}}}}
\end{align*}
Role~$ \Role[3] $ stores the sums of former dice rolls for the two players in its local variables $ \Args_{\Role[1]} $ and $ \Args_{\Role[2]} $, and $ \textsf{roll}(\Args_{\Role[i]}) $ rolls a dice and adds its value to the respective $ \Args_{\Role[i]} $.
Role~$ \Role[3] $ keeps rolling dice until the sum $ \Args_{\Role[i]} $ for one of the players exceeds $ 21 $.
If both sums $ \Args_{\Role[1]} $ and $ \Args_{\Role[2]} $ exceed $ 21 $ in the same round, then $ \Role[3] $ wins, \ie both players receive $ \false $; else, the player that stayed below $ 21 $ wins and receives $ \true $.
The players $ \Role[1] $ and $ \Role[2] $ use their respective last known sum that is stored in $ \Args $ as default value for the \unrel communication in the branch $ \Label[play] $ and $ \false $ as default value in the branch $ \Label[end] $.
The last branch, \ie $ \Label[end] $, is the default branch.

\subsection{Projection}

Our type system verifies processes, \ie implementations, against a specification that is a global type.
Since processes implement local views, local types are used as a mediator between the global specification and the respective local end points.
To ensure that the local types correspond to the global type, they are derived by \emph{projection}.
Instead of the projection function described in \cite{hondaYoshidaCarbone16} we use a more relaxed variant of projection as introduced in \cite{yoshidaDanielouBejleriHu10,CastellaniAtAll20,RobAtAll21}.

Projection maps global types onto the respective local type for a given role $ \Role[p] $.
The projections of the new global types are obtained straightforwardly from the projection of their respective \strongR counterparts:
\begin{align*}
	\Proj{\left( \Role_1 \to_{\textcolor{blue}{\diamond}} \Role_2{:}{\textcolor{blue}{\mathfrak{S}}}.\GT \right)}{\Role[p]} \deff
	\begin{cases}
		{\left[ \Role_2 \right]}\mathsf{!}_{\textcolor{blue}{\diamond}}{\textcolor{blue}{\mathfrak{S}}}.{\Proj{\GT}{\Role[p]}} & \text{if } \Role[p] = \Role_1\\
		{\left[ \Role_1 \right]}\mathsf{?}_{\textcolor{blue}{\diamond}}{\textcolor{blue}{\mathfrak{S}}}.{\Proj{\GT}{\Role[p]}} & \text{if } \Role[p] = \Role_2\\
		\Proj{\GT}{\Role[p]} & \text{otherwise}
	\end{cases}
\end{align*}
where either $ \diamond = \iR $, $ \mathfrak{S} = {\left< \Sort \right>} $ or $ \textcolor{blue}{\diamond = \iU} $, $ \textcolor{blue}{\mathfrak{S} = \Label{\left< \Sort \right>}} $ and
\begin{align*}
	\Proj{\left( \Role_1 \to_{\textcolor{blue}{\diamond}} \textcolor{blue}{\mathfrak{R}}{:}{\Set{ \Label_i.\GT_i }_{i \in \indexSet \textcolor{blue}{\mathfrak{D}}}} \right)}{\Role[p]} \deff
	\begin{cases}
		{\left[ \textcolor{blue}{\mathfrak{R}} \right]}\mathsf{!}_{\textcolor{blue}{\diamond}}{\Set{ \Label_i.\Proj{\GT_i}{\Role[p]} }_{i \in \indexSet}} & \text{if } \Role[p] = \Role_1\\
		{\left[ \Role_1 \right]}\mathsf{?}_{\textcolor{blue}{\diamond}}{\Set{ \Label_i.\Proj{\GT_i}{\Role[p]} }_{i \in \indexSet \textcolor{blue}{\mathfrak{D}}}} & \text{if } \textcolor{blue}{\mathfrak{B}}\\
		\bigsqcup_{i \in \indexSet} \left( \Proj{\GT_i}{\Role[p]} \right) & \text{otherwise}
	\end{cases}
\end{align*}
where either $ \diamond = \iR $, $ \mathfrak{R} = \Role_2 $, $ \mathfrak{B} $ is $ \Role[p] = \Role_2 $, $ \mathfrak{D} $ is empty or $ \textcolor{blue}{\diamond = \iW} $, $ \textcolor{blue}{\mathfrak{R} = \Role[R]} $, $ \textcolor{blue}{\mathfrak{B}} $ is $ \textcolor{blue}{\Role[p] \in \Role[R]} $, $ \textcolor{blue}{\mathfrak{D}} $ is $ \textcolor{blue}{, \LabelD} $.
In the last case of \strongR or \weakR branching---when projecting onto a role that does not participate in this branching---we map to $ \bigsqcup_{i \in \Set{1, \ldots, n}} \left( \Proj{\GT_i}{\Role[p]} \right) = \left( \Proj{\GT_1}{\Role[p]} \right) \sqcup \ldots \sqcup \left( \Proj{\GT_n}{\Role[p]} \right) $.
The $ \sqcup $ allows to unify the projections $ \Proj{\GT_i}{\Role[p]} $ if all of them return the same kind of branching input $ {\left[ \Role[p] \right]}\mathsf{?}_{\diamond}\ldots $ were the respective sets of branches my differ as long as the same label is always followed by the same local type.
The operation $ \sqcup $ is (similar to \cite{yoshidaDanielouBejleriHu10}) inductively defined as:
\begin{align*}
	\LT \sqcup \LT &= \LT\\
	\left( \LBranR{\Role}{\indexSet_1} \right) \sqcup \left( \LBranR{\Role}{\indexSet_2} \right) &= \LBranR{\Role}{\left( \indexSet_1 \sqcup \indexSet_2 \right)}\\
	\left( \LBranW{\Role}{\indexSet_1} \right) \sqcup \left( \LBranW{\Role}{\indexSet_2} \right) &= \LBranW{\Role}{\left( \indexSet_1 \sqcup \indexSet_2 \right)} \quad \text{if } \indexSet_1 \text{ and } \indexSet_2 \text{ have the same default branch}\\
	\indexSet \sqcup \emptyset &= \indexSet\\
	\indexSet \sqcup \left( \Set{ \Label.\LT } \cup \indexSet[J] \right) &=
		\begin{cases}
			\Set{ \Label.\left( \LT' \sqcup \LT \right) } \cup \left( \left( \indexSet \setminus \Set{ \Label.\LT' } \right) \sqcup \indexSet[J] \right) & \text{if } \Label.\LT' \in \indexSet\\
			\Set{ \Label.\LT } \cup \left( \indexSet \sqcup \indexSet[J] \right) & \text{if } \Label \notin \indexSet
		\end{cases}
\end{align*}
where $ \LT, \LT' \in \localTypes $ are local types, $ \indexSet, \indexSet_1, \indexSet_2, \indexSet[J] $ are sets of branches of local types of the form $ \Label.T $, $ \Label \notin \indexSet $ is short hand for $ \nexists \LT'\logdot \Label.\LT' \in \indexSet $, and is undefined in all other cases.
By the first line, identical types can be merged.
By the second and third line, local types for the reception of a branching request can be merged if they have the same prefix and the respective sets of branches can be merged.
The third line, for the \weakR case, additionally requires that the two sets of branches have the same default branch.
The sets of branches, that need to be merged according to the second and third line, contain elements of the form $ \Label.T $, where $ \Label $ is a label and $ T $ a local type.
The last two lines above inductively define how to merge such sets, \ie here we overload the operator $ \sqcup $ on local types to an operator on sets of branches of local types.
The case distinction in the last line ensures that elements $ \Label.T $ with a label that occurs in only one of the two sets can be kept, but if both sets contain an element with the same label then the respective local types have to be merged for the resulting set.

The mergeability relation $ \sqcup $ states that two types are identical up to their branching types, where only branches with distinct labels are allowed to be different.
This ensures that if the sender~$ \Role_1 $ in $ \GBranR{\Role_1}{\Role_2}{\Set{ \Label_i.\GT_i }_{i \in \indexSet}} $ decides to branch then only processes that are informed about this decision can adapt their behaviour accordingly; else projection is \textbf{not} defined.

The remaining global types are projected as follows:
\[\begin{array}{c}
	\Proj{\left( \GDel{\Role_1}{\Role_2}{\Chan}{\Role}{\LT}{\GT} \right)}{\Role[p]} \deff
		\begin{cases}
			\LDelA{\Role_2}{\Chan}{\Role}{\LT}{\Proj{\GT}{\Role[p]}} & \text{if } \Role[p] = \Role_1\\
			\LDelB{\Role_1}{\Chan}{\Role}{\LT}{\Proj{\GT}{\Role[p]}} & \text{if } \Role[p] = \Role_2\\
			\Proj{\GT}{\Role[p]} & \text{otherwise}
		\end{cases} \vspace{0.5em}\\
	\Proj{\left( \GPar{\GT_1}{\GT_2} \right)}{\Role[p]} \deff
		\begin{cases}
			\Proj{\GT_1}{\Role[p]} & \text{if } \Role[p] \notin \Roles{\GT_2}\\
			\Proj{\GT_2}{\Role[p]} & \text{if } \Role[p] \notin \Roles{\GT_1}
		\end{cases} \vspace{0.5em}\\
	\Proj{\left( \GRep{\TypeV}{\GT} \right)}{\Role[p]} \deff
		\begin{cases}
			\Proj{\GT}{\Role[p]} & \text{if } t \text{ does not occur in } \GT\\
			\LRep{\TypeV}{\Proj{\GT}{\Role[p]}} & \text{else if } \Role[p] \in \Roles{\GT}\\
			\LEnd & \text{otherwise}
		\end{cases}
	\hspace{1em}
	\Proj{\TypeV}{\Role[p]} \deff \TypeV
	\hspace{1em}
	\Proj{\GEnd}{\Role[p]} \deff \LEnd
\end{array}\]

The projection of delegation is similar to communication.
The projection of $ \GPar{\GT_1}{\GT_2} $ on $ \Role[p] $ is \textbf{not} defined if $ \Role[p] $ occurs on both sides of this parallel composition; it is $ \Proj{\GT_i}{\Role[p]} $ if $ \Role[p] $ occurs in exactly one side $ i \in \Set{ 1, 2 } $; or it is $ \Proj{\left( \GPar{\GT_1}{\GT_2} \right)}{\Role[p]} = \Proj{\GT_1}{\Role[p]} = \Proj{\GT_2}{\Role[p]} = \LEnd $ if $ \Role[p] $ does not occur at all.
Recursive types without their recursion variable are mapped to the projection of their recursion body (similar to \cite{CastellaniAtAll20}), else if $ \Role[p] $ occurs in the recursion body we map to a recursive local type, or else to successful termination.
Type variables and successful termination are mapped onto themselves.
We denote a global type $ \GT $ as \emph{projectable} if for all $ \Role \in \Roles{\GT} $ the projection $ \Proj{\GT}{\Role} $ is defined.
We restrict our attention to projectable global types.
Projection maps well-formed global types onto the respective local type for a given role $ \Role[p] $, where the results of projection---if defined---are again well-formed.

Projecting the global type $ \GDice $ in (\ref{eq:diceG}) results in the local types
\begin{align*}
	\LDice{\Role[3]} &= \LRep{\TypeV}{\LSendR{\Role[1]}{\nat}{\LSendR{\Role[2]}{\nat}{\LSelR{\Role[1]}{{\Set{\Label[roll].\LSelR{\Role[2]}{\Label[roll].\TypeV}, \quad \Label[exit].\LSelR{\Role[2]}{\Label[exit].\LEnd}}}}}}}\\
	\LDice{\Role[i]} &= \LRep{\TypeV}{\LGetR{\Role[3]}{\nat}{\LBranR{\Role[3]}{\Set{\Label[roll].\TypeV, \quad \Label[exit].\LEnd}}}}
\end{align*}
where the types of the two players $ \LDice{\Role[1]} = \LDice{\Role[2]} = \LDice{\Role[i]} $ are identical.
The projection of the outer branching in $ \GDice $ on $ \Role[2] $ results in $ \LBranR{\Role[3]}{\Label[roll].\TypeV} $ for the first branch and $ \LBranR{\Role[3]}{\Label[exit].\LEnd} $ for the second branch.
These two $ \LBranR{\Role[3]}{} $ types are unified by $ \sqcup $ into a single $ \LBranR{\Role[3]}{} $ type with two branches.

Projection maps $ \GDiceW $ in (\ref{eq:diceGW}) to:
\begin{align*}
	\LDiceW{\Role[3]} &= \LRep{\TypeV}{\LSelW{\Set{ \Role[1], \Role[2] }}{}
		\begin{array}[t]{l}
			\Label[play].\LSendU{\Role[1]}{\Label[roll]}{\nat}{\LSendU{\Role[2]}{\Label[roll]}{\nat}{\TypeV}}\\
			\oplus \; \Label[end].\LSendU{\Role[1]}{\Label[win]}{\bool}{\LSendU{\Role[2]}{\Label[win]}{\bool}{\LEnd}}
		\end{array}}\\
	\LDiceW{\Role[i]} &= \LRep{\TypeV}{\LBranW{\Role[3]}{\left( \Label[play].\LGetU{\Role[3]}{\Label[roll]}{\nat}{\TypeV} \oplus \Label[end].\LGetU{\Role[3]}{\Label[win]}{\bool}{\LEnd} \right)}}
\end{align*}
where $ \Role[i] \in \Set{ \Role[1], \Role[2] } $ and both $ \LDiceW{\Role[i]} $ are obtained by the second case of projection.
The type system will ensure that either $ \Role[3] $ transmits the request to branch to both players $ \Role[1], \Role[2] $ simultaneously and, since these messages cannot be lost, all players that are not crashed move to same branch or $ \Role[3] $ crashes and all remaining players move to the default branch.

Assume instead that $ \Role[3] $ can only inform one of the players $ \Role[1], \Role[2] $ at once.
\begin{align*}
	\GRep{\TypeV}{\GBranW{\Role[3]}{\Set{ \Role[1] }}{}
		\begin{array}[t]{l}
			\Label[play].\GComU{\Role[3]}{\Role[1]}{\Label[roll]}{\nat}{\GComU{\Role[3]}{\Role[2]}{\Label[roll]}{\nat}{\TypeV}}\\
			\oplus \; \Label[end].\GComU{\Role[3]}{\Role[1]}{\Label[win]}{\bool}{\GComU{\Role[3]}{\Role[2]}{\Label[win]}{\bool}{\GEnd}}
		\end{array}}
\end{align*}
is not projectable, because $ \sqcup $ does not allow to unify the projections $ \LGetU{\Role[3]}{\Label[roll]}{\nat}{\TypeV} $ and $ \LGetU{\Role[3]}{\Label[win]}{\bool}{\LEnd} $ of the two branches of $ \Role[2] $.
Replacing them by \strongR communications implies that neither $ \Role[3] $ nor $ \Role[2] $ fail.
The type
\begin{align*}
	& \GRep{\TypeV}{\GBranW{\Role[3]}{\Set{ \Role[1] }}{}
	 \begin{array}[t]{l}
			\Label[play].\GComU{\Role[3]}{\Role[1]}{\Label[roll]}{\nat}{\GBranW{\Role[3]}{\Set{ \Role[2] }}{\left( \Label[play].\GComU{\Role[3]}{\Role[2]}{\Label[roll]}{\nat}{\TypeV} \oplus \Label[end].\GEnd \right)}}\\
			\oplus \; \Label[end].\GComU{\Role[3]}{\Role[1]}{\Label[win]}{\bool}{\GBranW{\Role[3]}{\Set{ \Role[2] }}{\left( \Label[play].\GEnd \oplus \Label[end].\GComU{\Role[3]}{\Role[2]}{\Label[win]}{\bool}{\GEnd} \right)}}
		\end{array}}
\end{align*}
	where $ \Role[3] $ informs its two players subsequently about the chosen branch is projectable.
	But it introduces the two additional branches $ \Label[end].\GEnd $ and $ \Label[play].\GEnd $, \ie $ \Role[3] $ is allowed to choose the branches for the players $ \Role[1], \Role[2] $ separately and differently, whereas in (\ref{eq:diceG}) as well as in (\ref{eq:diceGW}) the players $ \Role[1], \Role[2] $ are always in the same branch.
Because of that, we allow for broadcast in \weakR branching such that $ \Role[3] $ can inform both players consistently without introducing additional and not-intended branches.

\subsection{Labels}
\label{sec:labels}

We use labels for two purposes:
they allow us to distinguish between different branches, as usual in \MPST-frameworks, and we assume that they may carry additional runtime information such as time\-stamps.
We do that because we think of labels not only as identifiers for branching, but also as some kind of meta data of messages as they can be often found in communication media or as they are assumed by many distributed algorithms.
A prominent example is the use of timestamps in message headers, that allow a receiver to identify outdated messages and to discard them.
Thereby, these additional runtime information placed in the label by the sender help the receiver to implement one of the already mentioned failure patterns; namely the one that allows a receiver to skip a message and continue with a default value instead.
We will introduce failure patterns with the semantics in the next section.
Although it is beyond the scope of this paper to discuss the implementation of failure patterns, we have to provide the technical means to do so.

Allowing for runtime information in labels requires a subtle difference in the way labels are used.
A timestamp may be added by the sender to capture the transmission time, but for the receiver it is hard to have this information already present in its label before or during reception.
Similarly, types in our static type system should not depend on any runtime information.
Hence, in contrast to standard \MPST, we do not expect the labels of senders and receivers as well as the labels of processes and types to match exactly.
Instead we assume a predicate $ \compL[] $ that compares two labels and is satisfied if the parts of the labels that do not refer to runtime information correspond.
If labels do not contain runtime information, $ \compL[] $ can be instantiated with equality.
We require that $ \compL[] $ is unambiguous on labels used in types, \ie given two labels of processes $ \Label_{\PT}, \Label_{\PT}' $ and two labels of types $ \Label_{\LT}, \Label_{\LT}' $ then $ \Label_{\PT} \compL \Label_{\PT}' \wedge \Label_{\PT} \compL \Label_{\LT} \Rightarrow \Label_{\PT}' \compL \Label_{\LT} $ and $ \Label_{\PT} \compL \Label_{\LT} \wedge \Label_{\LT} \nCompL \Label_{\LT}' \Rightarrow \Label_{\PT} \nCompL \Label_{\LT}' $.

Of course, the presented type system remains valid if we use labels without additional runtime information.
Indeed all presented examples carry in their labels statically available information only.
Interestingly, also the static information in labels, that have to coincide for senders and receivers and their types, can be exploited to guide communication.
In contrast to standard \MPST and to support \unrel communication, our \MPST variant will ensure that all occurrences of the same label are associated with the same sort.
This helps us in the case of failures to ensure the absence of communication mismatches, \ie the type of a transmitted value has to be the type that the receiver expects.
The global type $ G_{\nat\bool} = \GComU{\Role[1]}{\Role[2]}{\Label_1}{\nat}{\GComU{\Role[1]}{\Role[2]}{\Label_2}{\bool}{\GEnd}} $ specifies two subsequent \unrel communications in that values of different sorts are transmitted as discussed in Section~\ref{sec:faultolerance}.
If the first message with its natural number is lost but the second message containing a Boolean value is transmitted, the receiver $ \Role[2] $ should not wrongly receive a Boolean value although it still waits for a natural number.
To avoid this mismatch, we add a label to \unrel communication and ensure (by the typing rules) that the same label is never associated with different types.
In the case of $ G_{\nat\bool} $, the type system associates $ \Label_1 $ with sort $ \nat $ and $ \Label_2 $ with sort $ \bool $ and ensures that $ \Label_1 \nCompL \Label_2 $.
Sine $ \Label_1 \nCompL \Label_2 $, the reduction rules do not allow the receiver to use the value received in a message with label $ \Label_2 $ for its first communication action, \ie forces the receiver to first skip its first communication and use a default value before its is allowed to receive the message with label $ \Label_2 $.
Here we interpret labels again as some kind of meta data of messages that allow a receiver to use static information in labels to guide reception.
In particular, our \unrel communication mechanism exploits such meta data to guarantee strong properties about the communication structure including the described absence of communication mismatches.
Since the labels of the sender and the receiver are associated with a unique sort, the type system can then ensure that received values have the expected sort.
Similarly, labels are used in \cite{CairesVieira2010} to avoid communication errors.

%%%%%%%%%%%%%%%%%%%%%%
%  Failure Patterns  %
%%%%%%%%%%%%%%%%%%%%%%

\section{A Semantics with Failure Patterns}
\label{sec:failurePatterns}

Before we describe the semantics, we introduce substitution and structural congruence as auxiliary concepts.
The application of a substitution $ \Subst{\Args[y]}{\Args} $ on a term $ A $, denoted as $ A\Subst{\Args[y]}{\Args} $, is defined as the result of replacing all free occurrences of $ \Args $ in $ A $ by $ \Args[y] $, possibly applying alpha-conversion to avoid capture or name clashes. For all names $ n \in \names \setminus \Set{ \Args } $ the substitution behaves as the identity mapping. We use substitution on types as well as processes and naturally extend substitution to the substitution of variables by terms (to unfold recursions) and names by expressions (to instantiate a bound name with a received value).
We assume an evaluation function $ \Eval{\cdot} $ that evaluates expressions to values.

We use structural congruence to abstract from syntactically different processes with the same meaning, where $ \equiv $ is the least congruence that satisfies alpha conversion and the rules:
\[ \begin{array}{c}
	\PT \mid \PEnd \equiv \PT
	\hspace{2em}
	\PT_1 \mid \PT_2 \equiv \PT_2 \mid \PT_1
	\hspace{2em}
	\PT_1 \mid \left( \PT_2 \mid \PT_3 \right) \equiv \left( \PT_1 \mid \PT_2 \right) \mid \PT_3\
	\hspace{2em}
	\PRep{\ProcV}{\PEnd} \equiv \PEnd\\
	\PRes{\Args}{\PEnd} \equiv \PEnd
	\hspace{2em}
	\PRes{\Args}{\PRes{\Args[y]}{\PT}} \equiv \PRes{\Args[y]}{\PRes{\Args}{\PT}}
	\hspace{2em}
	\PRes{\Args}{\left( \PT_1 \mid \PT_2 \right)} \equiv \PT_1 \mid \PRes{\Args}{\PT_2} \quad \text{if } \Args \notin \FreeNames{\PT_1}
\end{array} \]

\begin{figure}[tp]
	\centering
	\renewcommand{\tabcolsep}{2pt}
	\renewcommand{\arraystretch}{1.2}
	\begin{tabular}{ll}
		(\textsf{Init}) & $ \PReq{\Chan[a]}{\Role[n]}{\Chan}{\PT_{\Role[n]}} \mid \prod_{1 \leq \Role[i] \leq \Role[n] - 1} \PAcc{\Chan[a]}{\Role[i]}{\Chan}{\PT_{\Role[i]}} \step \PRes{\Chan}{\left( \prod_{1 \leq \Role[i] \leq \Role[n]} \PT_{\Role[i]} \mid \prod_{1 \leq \Role[i], \Role[j] \leq \Role[n], \Role[i] \neq \Role[j]} \MQ{\Chan}{\Role[i]}{\Role[j]}{\emptyList} \right)} $\\
		& \hfill if $ \Chan[a] \neq \Chan $\\
		(\textsf{RSend}) & $ \PSendR{\Chan}{\Role_1}{\Role_2}{\Expr[y]}{\PT} \mid \MQ{\Chan}{\Role_1}{\Role_2}{\Queue} \step \PT \mid \MQ{\Chan}{\Role_1}{\Role_2}{\Queue\#\MessR{\Expr[v]}} $ \hfill if $ \Eval{\Expr[y]} = \Expr[v] $\\
		(\textsf{RGet}) & $ \PGetR{\Chan}{\Role_1}{\Role_2}{\Args}{\PT} \mid \MQ{\Chan}{\Role_2}{\Role_1}{\MessR{\Expr[v]}\#\Queue} \step \PT\Subst{\Expr[v]}{\Args} \mid \MQ{\Chan}{\Role_2}{\Role_1}{\Queue} $\\
		\textcolor{blue}{(\textsf{USend})} & \textcolor{blue}{$ \PSendU{\Chan}{\Role_1}{\Role_2}{\Label}{\Expr[y]}{\PT} \mid \MQ{\Chan}{\Role_1}{\Role_2}{\Queue} \step \PT \mid \MQ{\Chan}{\Role_1}{\Role_2}{\Queue\#\MessU{\Label}{\Expr[v]}} $ \hfill if $ \Eval{\Expr[y]} = \Expr[v] $}\\
		\textcolor{blue}{(\textsf{UGet})} & \textcolor{blue}{$ \PGetU{\Chan}{\Role_1}{\Role_2}{\Label}{\Expr[dv]}{\Args}{\PT} \mid \MQ{\Chan}{\Role_2}{\Role_1}{\MessU{\Label'}{\Expr[v]}\#\Queue} \step \PT\Subst{\Expr[v]}{\Args} \mid \MQ{\Chan}{\Role_2}{\Role_1}{\Queue} $}\\
		& \hfill \textcolor{blue}{if $ \Label \compL \Label' $, $ \fpUGet(\Chan, \Role_1, \Role_2, \Label', \ldots) $}\\
		\textcolor{blue}{(\textsf{USkip})} & \textcolor{blue}{$ \PGetU{\Chan}{\Role_1}{\Role_2}{\Label}{\Expr[dv]}{\Args}{\PT} \step \PT\Subst{\Expr[dv]}{\Args} $ \hfill if $ \fpUSkip(\Chan, \Role_1, \Role_2, \Label, \ldots) $}\\
		\textcolor{blue}{(\textsf{ML})} & \textcolor{blue}{$ \MQ{\Chan}{\Role_1}{\Role_2}{\MessU{\Label}{\Expr[v]}\#\Queue} \step \MQ{\Chan}{\Role_1}{\Role_2}{\Queue} $ \hfill if $ \fpML(\Chan, \Role_1, \Role_2, \Label, \ldots) $}\\
		(\textsf{RSel}) & $ \PSelR{\Chan}{\Role_1}{\Role_2}{\Label}{\PT} \mid \MQ{\Chan}{\Role_1}{\Role_2}{\Queue} \step \PT \mid \MQ{\Chan}{\Role_1}{\Role_2}{\Queue\#\MessBR{\Label}} $\\
		(\textsf{RBran}) & $ \PBranR{\Chan}{\Role_1}{\Role_2}{\Set{ \Label_i.\PT_i }_{i \in \indexSet}} \mid \MQ{\Chan}{\Role_2}{\Role_1}{\MessBR{\Label}\#\Queue} \step \PT_j \mid \MQ{\Chan}{\Role_2}{\Role_1}{\Queue} $ \hfill if $ \Label \compL \Label_j $, $ j \in \indexSet $\\
		\textcolor{blue}{(\textsf{WSel})} & \textcolor{blue}{$ \PSelW{\Chan}{\Role}{\Role[R]}{\Label}{\PT} \mid \prod_{\Role_i \in \Role[R]} \MQ{\Chan}{\Role}{\Role_i}{\Queue_i} \step \PT \mid \prod_{\Role_i \in \Role[R]} \MQ{\Chan}{\Role}{\Role_i}{\Queue_i\#\MessBW{\Label}} $}\\
		\textcolor{blue}{(\textsf{WBran})} & \textcolor{blue}{$ \PBranW{\Chan}{\Role_1}{\Role_2}{\Set{ \Label_i.\PT_i }_{i \in \indexSet, \LabelD}} \mid \MQ{\Chan}{\Role_2}{\Role_1}{\MessBW{\Label}\#\Queue} \step \PT_j \mid \MQ{\Chan}{\Role_2}{\Role_1}{\Queue} $ \hfill if $ \Label \compL \Label_j $, $ j \in \indexSet $}\\
		\textcolor{blue}{(\textsf{WSkip})} & \textcolor{blue}{$ \PBranW{\Chan}{\Role_1}{\Role_2}{\Set{ \Label_i.\PT_i }_{i \in \indexSet, \LabelD}} \step \PTD $ \hfill if $ \fpWSkip(\Chan, \Role_1, \Role_2, \ldots) $}\\
		\textcolor{blue}{(\textsf{Crash})} & \textcolor{blue}{$ \PT \step \PCrash $ \hfill if $ \fpCrash(\PT, \ldots) $}
	\end{tabular}
	\caption{Reduction Rules ($ \step $) of Fault-Tolerant Processes (Part I).}
	\label{fig:semanticsPartA}
\end{figure}

The reduction semantics of the session calculus is defined in the Figures~\ref{fig:semanticsPartA} and \ref{fig:semanticsPartB}, where we follow~\cite{hondaYoshidaCarbone16}: session initialisation is synchronous and communication within a session is asynchronous using message queues.
The rules are standard except for the five failure pattern and two rules for system failures: (\textsf{Crash}) for \emph{crash failures} and (\textsf{ML}) for \emph{message loss}.
\emph{Failure patterns} are predicates that we deliberately choose not to define here (see below).
They allow us to provide information about the underlying communication medium and the reliability of processes.

Rule~(\textsf{Init}) initialises a session with $ \Role[n] $ roles.
Session initialisation introduces a fresh session channel and unguards the participants of the session.
Finally, the message queues of this session are initialised with the empty list under the restriction of the session channel.

Rule~(\textsf{RSend}) implements an asynchronous \strongR message transmission.
As a result the value $ \Eval{\Expr[y]} $ is wrapped in a message and added to the end of the corresponding message queue and the continuation of the sender is unguarded.
Rule~(\textsf{USend}) is the counterpart of (\textsf{RSend}) for \unrel senders.
(\textsf{RGet}) consumes a message that is marked as \strongR with the index $ \operatorname{r} $ from the head of the respective message queue and replaces in the unguarded continuation of the receiver the bound variable $ \Args $ by the received value $ \Args[y] $.

There are two rules for the reception of a message in an \unrel communication that are guided by failure patterns.
Rule~(\textsf{UGet}) is similar to Rule~(\textsf{RGet}), but specifies a failure pattern $ \fpUGet $ to decide whether this step is allowed.
This failure pattern could, \eg, be used to reject messages that are too old.
Moreover, $ \Label \compL \Label' $ is required to enforce that the static information in the transmitted label matches the expectation specified in the label of the receiver.
As explained in Section~\ref{sec:labels}, this allows to avoid communication mismatches.
The Rule~(\textsf{USkip}) allows to skip the reception of a message in an \unrel communication using a failure pattern $ \fpUSkip $ and instead substitutes the bound variable $ \Args $ in the continuation with the default value $ \Args[dv] $.
The failure pattern $ \fpUSkip $ tells us whether a reception can be skipped (\eg via failure detector).

Rule~(\textsf{RSel}) puts the label $ \Label $ selected by $ \Role_1 $ at the end of the message queue towards $ \Role_2 $.
Its \weakR counterpart (\textsf{WSel}) is similar, but puts the label at the end of all relevant message queues.
With (\textsf{RBran}) a label is consumed from the top of a message queue and the receiver moves to the indicated branch.
There are again two \weakR counterparts of (\textsf{RBran}).
Rule~(\textsf{WBran}) is similar to (\textsf{RBran}), whereas (\textsf{WSkip}) allows $ \Role_1 $ to skip the message and to move to its default branch if the failure pattern $ \fpWSkip $ holds.
The requirement $ \Label \compL \Label_j $ in \textsf{RBran} and \textsf{WBran} ensures as usual that indeed the branch specified by the message at the queue is picked by the receiver.
Note that this branch has to be identified by the statically available information in the respective labels.

The Rules~(\textsf{Crash}) for \emph{crash failures} and (\textsf{ML}) for \emph{message loss}, describe failures of a system.
With Rule~(\textsf{Crash}) $ \PT $ can crash if $ \fpCrash $, where $ \fpCrash $ can \eg model immortal processes or global bounds on the number of crashes.
(\textsf{ML}) allows to drop an \unrel message if the failure pattern $ \fpML $ is valid.
$ \fpML $ allows, \eg, to implement safe channels that never lose messages or a global bound on the number of lost messages.

\begin{figure}[t]
	\centering
	\renewcommand{\tabcolsep}{2pt}
	\renewcommand{\arraystretch}{1.2}
	\begin{tabular}{ll}
		(\textsf{If-T}) & $ \PITE{\Expr}{\PT}{\PT'} \step \PT $ \hfill if $ \Expr $ is true\\
		(\textsf{If-F}) & $ \PITE{\Expr}{\PT}{\PT'} \step \PT' $ \hfill if $ \Expr $ is false\\
		(\textsf{Deleg}) & $ \PDelA{\Chan}{\Role_1}{\Role_2}{\AT{\Chan'}{\Role}}{\PT} \mid \MQ{\Chan}{\Role_1}{\Role_2}{\Queue} \step \PT \mid \MQ{\Chan}{\Role_1}{\Role_2}{\Queue\#\AT{\Chan'}{\Role}} $\\
		(\textsf{SRecv}) & $ \PDelB{\Chan}{\Role_1}{\Role_2}{\AT{\Chan'}{\Role}}{\PT} \mid \MQ{\Chan}{\Role_2}{\Role_1}{\AT{\Chan''}{\Role'}\#\Queue} \step \PT\Subst{\Chan''}{\Chan'}\Subst{\Role'}{\Role} \mid \MQ{\Chan}{\Role_1}{\Role_2}{\Queue} $\\
		(\textsf{Par}) & $ \PT_1 \mid \PT_2 \step \PT_1' \mid \PT_2 $ \hfill if $ \PT_1 \step \PT_1' $\\
		(\textsf{Res}) & $ \PRes{\Args}{\PT} \step \PRes{\Args}{\PT'} $ \hfill if $ \PT \step \PT' $\\
		(\textsf{Rec}) & $ \PRep{\ProcV}{\PT} \step \PT\Subst{\PRep{\ProcV}{\PT}}{\ProcV} $\\
		(\textsf{Struc}) & $ \PT_1 \step \PT_1' $ \hspace{15em} if $ \PT_1 \equiv \PT_2 $, $ \PT_2 \step \PT_2' $, $ \PT_2' \equiv \PT_1' $
	\end{tabular}
	\caption{Reduction Rules ($ \step $) of Fault-Tolerant Processes (Part II).}
	\label{fig:semanticsPartB}
\end{figure}

Figure~\ref{fig:semanticsPartB} provides the remaining reduction rules for conditionals, delegation, parallel composition, restriction, recursion, and structural congruence.
They are standard.

Consider the implementation of $ \GDiceUC $ in (\ref{eq:diceGU}), \ie an infinite variant of the dice game, where the players $ \Role[1] $ and $ \Role[2] $ use their respective last known sum $ \Args_{\Role[i]} $ of former dice rolls as default value:
\allowdisplaybreaks
\begin{align*}
	\PDiceUC ={}& \PPar{\PDiceDUC}{\PPar{\PDicePUC}{\PDiceQUC}}\\
	\PDiceDUC ={}& \PReq{\Chan[a]}{\Role[3]}{\Chan}{\PRep{\ProcV}{\PSendU{\Chan}{\Role[3]}{\Role[1]}{\Label[roll]}{\textsf{roll}(\Args_{\Role[1]})}{\PSendU{\Chan}{\Role[3]}{\Role[2]}{\Label[roll]}{\textsf{roll}(\Args_{\Role[2]})}{\ProcV}}}}\\
	\PDicePUC ={}& \PAcc{\Chan[a]}{\Role[i]}{\Chan}{\PRep{\ProcV}{\PGetU{\Chan}{\Role[i]}{\Role[3]}{\Label[roll]}{\Args_{\Role[i]}}{\Args_{\Role[i]}}{\ProcV}}}
%		\PDiceQUC ={}& \PInp{\Chan}{2}{\Chan[s]}{\PRec{\ProcV}{\Chan[s]}{\Role[2]}{\left( \PGetU{\Chan[s]}{\Role[2]}{\Role[3]}{\Label[roll]}{\Args_{\Role[2]}}{\Args_{\Role[2]}}{\PVar{\Chan[s]}{\Role[2]}} \right)}}
\end{align*}
An \unrel communication in a global type specifies a communication that, due to system failures, may or may not happen.
Moreover, regardless of the successful completion of this \unrel communication, the future behaviour of a well-typed system will follow its specification in the global type.
Since the players $ \Role[1] $ and $ \Role[2] $ repeat the same kind of \unrel action, they may lose track of the current round.
If they successfully receive a new sum of dice rolls from $ \Role[3] $ they cannot be sure on how often $ \Role[3] $ actually did roll the dice.
Because of lost messages, they may have missed some former announcements of $ \Role[3] $ and, because of their ability to skip the reception of messages, they may have proceeded to the next round before $ \Role[3] $ rolled a dice.
Because the information about the current round is irrelevant for the communication structure in this case, there is no need to enforce round information.

We deliberately do not specify failure pattern, although we usually assume that the failure patterns $ \fpUGet $, $ \fpUSkip $, and $ \fpWSkip $ use only local information, whereas $ \fpML $ and $ \fpCrash $ may use global information of the system in the current run.
We provide these predicates to allow for the implementation of system requirements or abstractions like failure detectors that are typical for distributed algorithms.
Directly including them in the semantics has the advantage that all traces satisfy the corresponding requirements, \ie all traces are valid \wrt the assumed system requirements.
An example for the instantiation of these patterns is given implicitly via the Conditions~\ref{cond:all}.\ref{cond:crash}--\ref{cond:all}.\ref{cond:fpWskip} in Section~\ref{sec:typing} and explicitly in Section~\ref{sec:example}.
If we instantiate the patterns $ \fpUGet $ with true and the patterns $ \fpUSkip $, $ \fpWSkip $, $ \fpCrash $, $ \fpML $ with false, then we obtain a system without failures.
In contrast, the instantiation of all five patterns with true results in a system where failures can happen completely non-deterministically at any time.

Note that we keep the failure patterns abstract and do not model how to check them in producing runs.
Indeed system requirements such as bounds on the number of processes that can crash usually cannot be checked, but result from observations, \ie system designers ensure that a violation of this bound is very unlikely and algorithm designers are willing to ignore these unlikely events.
In particular, $ \fpML $ and $ \fpCrash $ are thus often implemented as oracles for verification, whereas \eg $ \fpUSkip $ and $ \fpWSkip $ are often implemented by system specific time-outs.
Note that we are talking about implementing these failure patterns and not formalising them.
Failure patterns are abstractions of real world system requirements or software.
We implement them by conditions providing the necessary guarantees that we need in general (\ie for subject reduction and progress) or for the verification of concrete algorithms.
In practice, we expect that the systems on which the verified algorithms are running satisfy the respective conditions.
Accordingly, the session channels, roles, labels, and processes mentioned in Figure~\ref{fig:semanticsPartA} are not parameters of the failure patterns, but just a vehicle to more formally specify the conditions on failure patterns in Section~\ref{sec:typing}.
An implementation may or may not use these information to implement these patterns but may also use other information such as runtime information about time or the number of processes, as indicated by the \ldots in failure patterns in Figure~\ref{fig:semanticsPartA} such as $ \fpCrash(\PT, \ldots) $.

Similarly, \strongR and \weakR interactions in potentially faulty systems are abstractions.
They are usually implemented by handshakes and redundancy; replicated servers against crash failures and retransmission of late messages against message loss.
Algorithm designers have to be aware of the additional costs of these interactions.

%%%%%%%%%%%%%%%%%%%%%%%%%%%%%%%%%%%%%
%  Typing Fault-Tolerant Processes  %
%%%%%%%%%%%%%%%%%%%%%%%%%%%%%%%%%%%%%

\section{Typing Fault-Tolerant Processes}
\label{sec:typing}

The type of processes is checked using typing rules that define the derivation of type judgments.
Within type judgements, the type information are stored in type environments.

\begin{defi}[Type Environments]
	\label{def:typeEnvironments}
	The \emph{global} and \emph{session environments} are given by
	\begin{align*}
		\Gamma & \deffTerms
			\emptyset
			\sepTerms \Gamma \compS \Typed{\Args}{\Sort}
			\sepTerms \Gamma \compS \Typed{\Chan[a]}{\GT}
			\sepTerms \textcolor{blue}{\Gamma \compS \Typed{\Label}{\Sort}}
			\sepTerms \Gamma \compS \Typed{\ProcV}{\TypeV}\\
		\Delta & \deffTerms
			\emptyset
			\sepTerms \Delta \compS \Typed{\AT{\Chan}{\Role}}{\LT}
			\sepTerms \Delta \compS \MQ{\Chan}{\Role_1}{\Role_2}{\MT^*}
	\end{align*}
\end{defi}

Assignments $ \Typed{\Args}{\Sort} $ of values to sorts are used to check whether transmitted values are well-sorted, \ie sender and receiver expect the same sort.
Assignments $ \Typed{\Chan[a]}{\GT} $ capture the global type of a session for session initialisation via the shared channel $ \Chan[a] $.
Assignments $ \Typed{\Label}{\Sort} $ link labels to sorts.
Assignments $ \Typed{\ProcV}{\TypeV} $ of process variables to type variables are used to check the type of recursive processes.

Assignments $ \Typed{\AT{\Chan[s]}{\Role}}{\LT} $ of actors to local types are used to compare the behaviour of a process that implements this actor with its local specification $ \LT $.
Assignments $ \MQ{\Chan[s]}{\Role_1}{\Role_2}{\MT^*} $ allow to check the current content of a message queue $ \MQS{\Chan[s]}{\Role_1}{\Role_2} $ against a list of message types $ \MT $.

We write $ \Args \sharp \Gamma $ and $ \Args \sharp \Delta $ if the name $ \Args $ does not occur in $ \Gamma $ and $ \Delta $, respectively.
We use $ \compS $ to add an assignment provided that the new assignment is not in conflict with the type environment.
More precisely, $ \Gamma \compS \Typed{\Args}{\Sort} $ implies $ \Args \sharp \Gamma $, $ \Gamma \compS \Typed{\Label}{\Sort} $ implies $ \Label \sharp \Gamma $, and $ \Gamma \compS \Typed{\ProcV}{\Typed{\AT{\Chan}{\Role}}{\TypeV}} $ implies $ \ProcV, \TypeV \sharp \Gamma $.
Moreover, $ \Delta \compS \Typed{\AT{\Chan[s]}{\Role}}{\LT} $ implies $ \left( \nexists \LT' \logdot \Typed{\AT{\Chan[s]}{\Role}}{\LT'} \in \Delta \right) $ and $ \Delta \compS \MQ{\Chan[s]}{\Role_1}{\Role_2}{\mathcal{M}} $ implies $ \left( \nexists \mathcal{M}' \logdot \MQ{\Chan[s]}{\Role_1}{\Role_2}{\mathcal{M}'} \in \Delta \right) $.
We naturally extend this operator towards sets, \ie $ \Gamma \compS \Gamma' $ implies $ \left( \forall A \in \Gamma' \logdot \Gamma \compS A \right) $ and $ \Delta \compS \Delta' $ implies $ \left( \forall A \in \Delta' \logdot \Delta \compS A \right) $.
The conditions described for the operator $ \compS $ for global and session environments are referred to as \emph{linearity}.
Accordingly, we denote type environments that satisfy these properties as \emph{linear} and restrict in the following our attention to linear environments.
We abstract in session environments from assignments towards terminated local types, \ie $ \Delta \compS \Typed{\AT{\Chan}{\Role}}{\LEnd} = \Delta $.

\begin{figure}[tp]
	\centering
	\[ \begin{array}{c}
		\left( \textsf{Req} \right) \dfrac{\Typed{\Chan[a]}{\GT} \in \Gamma \quad \Length{\Roles{\GT}} = \Role[n] \quad \Gamma \vdash \PT \triangleright \Delta \compS \Typed{\AT{\Chan}{\Role[n]}}{\Proj{\GT}{\Role[n]}}}{\Gamma \vdash \PReq{\Chan[a]}{\Role[n]}{\Chan}{\PT} \triangleright \Delta}
		\vspace{0.8em}\\
		\left( \textsf{Acc} \right) \dfrac{\Typed{\Chan[a]}{\GT} \in \Gamma \quad 0 < \Role < \Length{\Roles{\GT}} \quad \Gamma \vdash \PT \triangleright \Delta \compS \Typed{\AT{\Chan}{\Role}}{\Proj{\GT}{\Role}}}{\Gamma \vdash \PAcc{\Chan[a]}{\Role}{\Chan}{\PT} \triangleright \Delta}
		\vspace{0.8em}\\
		\left( \textsf{RSend} \right) \dfrac{\Gamma \Vdash \Typed{\Expr[y]}{\Sort} \quad \Gamma \vdash \PT \triangleright \Delta \compS \Typed{\AT{\Chan}{\Role_1}}{\LT}}{\Gamma \vdash \PSendR{\Chan}{\Role_1}{\Role_2}{\Expr[y]}{\PT} \triangleright \Delta \compS \Typed{\AT{\Chan}{\Role_1}}{\LSendR{\Role_2}{\Sort}{\LT}}}
		\vspace{0.8em}\\
		\left( \textsf{RGet} \right) \dfrac{\Args \sharp \left( \Gamma, \Delta, \Chan \right) \quad \Gamma \compS \Typed{\Args}{\Sort} \vdash \PT \triangleright \Delta \compS \Typed{\AT{\Chan[s]}{\Role_1}}{\LT}}{\Gamma \vdash \PGetR{\Chan}{\Role_1}{\Role_2}{\Args}{\PT} \triangleright \Delta \compS \Typed{\AT{\Chan}{\Role_1}}{\LGetR{\Role_2}{\Sort}{\LT}}}
		\vspace{0.8em}\\
		\textcolor{blue}{\left( \textsf{USend} \right) \dfrac{\Gamma \Vdash \Typed{\Expr[y]}{\Sort} \quad \Label \compL \Label' \quad \Typed{\Label'}{\Sort} \in \Gamma \quad \Gamma \vdash \PT \triangleright \Delta \compS \Typed{\AT{\Chan}{\Role_1}}{\LT}}{\Gamma \vdash \PSendU{\Chan}{\Role_1}{\Role_2}{\Label}{\Expr[y]}{\PT} \triangleright \Delta \compS \Typed{\AT{\Chan}{\Role_1}}{\LSendU{\Role_2}{\Label'}{\Sort}{\LT}}}}
		\vspace{0.8em}\\
		\textcolor{blue}{\left( \textsf{UGet} \right) \dfrac{\Args \sharp \left( \Gamma, \Delta, \Chan \right) \quad \Gamma \Vdash \Typed{\Expr[v]}{\Sort} \quad \Label \compL \Label' \quad \Typed{\Label'}{\Sort} \in \Gamma \quad \Gamma \compS \Typed{\Args}{\Sort} \vdash \PT \triangleright \Delta \compS \Typed{\AT{\Chan[s]}{\Role_1}}{\LT}}{\Gamma \vdash \PGetU{\Chan}{\Role_1}{\Role_2}{\Label}{\Expr[v]}{\Args}{\PT} \triangleright \Delta \compS \Typed{\AT{\Chan}{\Role_1}}{\LGetU{\Role_2}{\Label'}{\Sort}{\LT}}}} \vspace{0.8em}\\
		\left( \textsf{RSel} \right) \dfrac{j \in \indexSet \quad \Label \compL \Label_j \quad \Gamma \vdash \PT \triangleright \Delta \compS \Typed{\AT{\Chan}{\Role_1}}{\LT_j}}{\Gamma \vdash \PSelR{\Chan}{\Role_1}{\Role_2}{\Label}{\PT} \triangleright \Delta \compS \Typed{\AT{\Chan}{\Role_1}}{\LSelR{\Role_2}{\Set{ \Label_i.\LT_i }_{i \in \indexSet}}}}
		\vspace{0.8em}\\
		\left( \textsf{RBran} \right) \dfrac{\forall j \in \indexSet_2\logdot \exists i \in \indexSet_1\logdot \Label_i \compL \Label_j' \wedge \Gamma \vdash \PT_i \triangleright \Delta \compS \Typed{\AT{\Chan}{\Role_1}}{\LT_j}}{\Gamma \vdash \PBranR{\Chan}{\Role_1}{\Role_2}{\Set{ \Label_i.\PT_i }_{i \in \indexSet_1}} \triangleright \Delta \compS \Typed{\AT{\Chan}{\Role_1}}{\LBranR{\Role_2}{\Set{ \Label_i'.\LT_i }_{i \in \indexSet_2}}}}
		\vspace{0.8em}\\
		\textcolor{blue}{\left( \textsf{WSel} \right) \dfrac{j \in \indexSet \quad \Label \compL \Label_j \quad \Gamma \vdash \PT \triangleright \Delta \compS \Typed{\AT{\Chan}{\Role}}{\LT_j}}{\Gamma \vdash \PSelW{\Chan}{\Role}{\Role[R]}{\Label}{\PT} \triangleright \Delta \compS \Typed{\AT{\Chan}{\Role}}{\LSelW{\Role[R]}{\Set{ \Label_i.\LT_i }_{i \in \indexSet}}}}}
		\vspace{0.8em}\\
		\textcolor{blue}{\left( \textsf{WBran} \right) \dfrac{\LabelD \compL \LabelD' \quad \forall j \in \indexSet_2\logdot \exists i \in \indexSet_1\logdot \Label_i \compL \Label_j' \wedge \Gamma \vdash \PT_i \triangleright \Delta \compS \Typed{\AT{\Chan}{\Role_1}}{\LT_j}}{\Gamma \vdash \PBranW{\Chan}{\Role_1}{\Role_2}{\Set{ \Label_i.\PT_i }_{i \in \indexSet_1, \LabelD}} \triangleright \Delta \compS \Typed{\AT{\Chan}{\Role_1}}{\LBranW{\Role_2}{\Set{ \Label_i'.\LT_i }_{i \in \indexSet_2, \LabelD'}}}}}
		\vspace{0.8em}\\
		\left( \textsf{Deleg} \right) \dfrac{\Gamma \vdash \PT \triangleright \Delta \compS \Typed{\AT{\Chan}{\Role_1}}{\LT}}{\Gamma \vdash \PDelA{\Chan}{\Role_1}{\Role_2}{\AT{\Chan'}{\Role}}{\PT} \triangleright \Delta \compS \Typed{\AT{\Chan}{\Role_1}}{\LDelA{\Role_2}{\Chan'}{\Role}{\LT'}{\LT}} \compS \Typed{\AT{\Chan'}{\Role}}{\LT'}}
		\vspace{0.8em}\\
		\left( \textsf{SRecv} \right) \dfrac{\Gamma \vdash \PT \triangleright \Delta \compS \Typed{\AT{\Chan}{\Role_1}}{\LT} \compS \Typed{\AT{\Chan'}{\Role}}{\LT'}}{\Gamma \vdash \PDelB{\Chan}{\Role_1}{\Role_2}{\AT{\Chan'}{\Role}}{\PT} \triangleright \Delta \compS \Typed{\AT{\Chan}{\Role_1}}{\LDelB{\Role_2}{\Chan'}{\Role}{\LT'}{\LT}}}
		\vspace{0.8em}\\
		\left( \textsf{End} \right) \dfrac{}{\Gamma \vdash \PEnd \triangleright \emptyset}
		\hspace{2em}
		\textcolor{blue}{\left( \textsf{Crash} \right) \dfrac{\Unreliable{\Delta}}{\Gamma \vdash \PCrash \triangleright \Delta}}
		\hspace{2em}
		\left( \textsf{If} \right) \dfrac{\Gamma \Vdash \Typed{\Expr}{\bool} \quad \Gamma \vdash \PT \triangleright \Delta \quad \Gamma \vdash \PT' \triangleright \Delta}{\Gamma \vdash \PITE{\Expr}{\PT}{\PT'} \triangleright \Delta}
		\vspace{0.8em}\\
		\left( \textsf{Par} \right) \dfrac{\Gamma \vdash \PT \triangleright \Delta \quad \Gamma \vdash \PT' \triangleright \Delta'}{\Gamma \vdash \PT \mid \PT' \triangleright \Delta \compS \Delta'}
		\hspace{2em}
		\left( \textsf{Res1} \right) \dfrac{\Args \sharp \left( \Gamma, \Delta \right) \quad \Gamma \compS \Typed{\Args}{\Sort} \vdash \PT \triangleright \Delta}{\Gamma \vdash \PRes{\Args}{\PT} \triangleright \Delta}
		\vspace{0.8em}\\
		\left( \textsf{Rec} \right) \dfrac{\Gamma \compS \Typed{\ProcV}{\TypeV} \vdash \PT \triangleright \Delta \compS \Typed{\AT{\Chan[s]}{\Role}}{\LT}}{\Gamma \vdash \PRep{\ProcV}{\PT} \triangleright \Delta \compS \Typed{\AT{\Chan[s]}{\Role}}{\LRep{\TypeV}{\LT}}}
		\hspace{2em}
		\left( \textsf{Var} \right) \dfrac{}{\Gamma \compS \Typed{\ProcV}{\TypeV} \vdash \ProcV \triangleright \Typed{\AT{\Chan}{\Role}}{\TypeV}}
	\end{array} \]
	\caption{Typing Rules for Fault-Tolerant Systems.}
	\label{fig:typingRules}
\end{figure}

A \emph{type judgement} is of the form $ \Gamma \vdash \PT \triangleright \Delta $, where $ \Gamma $ is a global environment, $ \PT \in \processes $ is a process, and $ \Delta $ is a session environment.
We use \emph{typing rules} to derive type judgements, where we assume that all mentioned global types are well-formed and projectable, all local types are well-formed, and all environments are linear.
A process $ \PT $ is \emph{well-typed} \wrt $ \Gamma $ and $ \Delta $ if $ \Gamma \vdash \PT \triangleright \Delta $ can be derived from the rules in the Figures~\ref{fig:typingRules} and \ref{fig:runtimeTypingRules}.
We write $ \Unreliable{\Delta} $ if none of the prefixes in $ \LT $ is \strongR or for delegation for all local types $ \LT $ in $ \Delta $ and if $ \Delta $ does not contain message queues.
With $ \Gamma \Vdash \Typed{\Expr[y]}{\Sort} $ we check that $ \Expr[y] $ is an expression of the sort $ \Sort $ if all names $ \Args $ in $ \Expr[y] $ are replaced by arbitrary values of sort $ \Sort_{\Args} $ for $ \Typed{\Args}{\Sort_{\Args}} \in \Gamma $.

Let us consider the interaction cases in Figure~\ref{fig:typingRules}. We observe that all new cases are quite similar to their \strongR counterparts.

Rule~(\textsf{RSend}) checks \strongR senders, \ie requires a matching \strongR sending in the local type of the actor and compares the actor with this type.
With $ \Gamma \Vdash \Typed{\Expr[y]}{\Sort} $ we check that $ \Expr[y] $ is an expression of the sort $ \Sort $ if all names $ \Args $ in $ \Expr[y] $ are replaced by arbitrary values of sort $ \Sort_{\Args} $ for $ \Typed{\Args}{\Sort_{\Args}} \in \Gamma $.
Then the continuation of the process is checked against the continuation of the type.
The \unrel case is very similar, but additionally checks that the label is assigned to the sort of the expression in $ \Gamma $.
Rule~(\textsf{RGet}) type \strongR receivers, where again the prefix is checked against a corresponding type prefix and the assumption $ \Typed{\Args}{\Sort} $ is added for the continuation.
Again the \unrel case is very similar, but apart from the label also checks the sort of the default value.

Rule~(\textsf{RSel}) checks the \strongR selection prefix, that the selected label matches one of the specified labels, and that the process continuation is well-typed \wrt the type continuation following the selected label.
The only difference in the \weakR case is the set of roles for the receivers.
For \strongR branching in (\textsf{RBran}) we check the prefix and that for each branch in the type there is a matching branch in the process that is well-typed \wrt the respective branch in the type.
For the \weakR case we have to additionally check that the default labels of the process and the type coincide.

Rule~(\textsf{Crash}) for crashed processes checks that $ \Unreliable{\Delta} $, \ie that for every type $ G $ or $ T $ in $ \Delta $ the predicate $ \Unreliable{G} $ or $ \Unreliable{T} $ holds.

Figure~\ref{fig:runtimeTypingRules} presents the runtime typing rules, \ie the typing rules for processes that may result from steps of a system that implements a global type.
Since it covers only operators that are not part of initial systems, a type checking tool might ignore them.
We need these rules however for the proofs of progress and subject reduction.
Under the assumption that initial systems cannot contain crashed processes, Rule~(\textsf{Crash}) may be moved to the set of runtime typing rules.

\begin{figure}[tp]
	\[ \begin{array}{c}
		\left( \textsf{Res2} \right) \dfrac{\begin{array}{c} \Set{ \Typed{\AT{\Chan}{\Role}}{\Proj{\GT}{\Role}} \mid \Role \in \Roles{\GT} } \compS \Set{ \MQ{\Chan}{\Role}{\Role'}{\emptyList} \mid \Role, \Role' \in \Roles{\GT'} \wedge \Role \neq \Role' } \stackrel{\Chan}{\Mapsto} \Delta'\\ \Chan \sharp \left(\Gamma, \Delta \right) \quad \Typed{\Chan[a]}{G} \in \Gamma \quad \Gamma \vdash \PT \triangleright \Delta \compS \Delta' \end{array}}{\Gamma \vdash \PRes{\Chan[s]}{\PT} \triangleright \Delta}
		\vspace{0.8em}\\
		\left( \textsf{MQComR} \right) \dfrac{\Gamma \Vdash \Typed{\Expr[v]}{\Sort} \quad \Gamma \vdash \MQ{\Chan}{\Role_1}{\Role_2}{\Queue} \triangleright \MQ{\Chan}{\Role_1}{\Role_2}{\MT}}{\Gamma \vdash \MQ{\Chan}{\Role_1}{\Role_2}{\MessR{\Expr[v]}\#\Queue} \triangleright \MQ{\Chan}{\Role_1}{\Role_2}{\MessR{\Sort}\#\MT}}
		\vspace{0.8em}\\
		\textcolor{blue}{\left( \textsf{MQComU} \right) \dfrac{\Gamma \Vdash \Typed{\Expr[v]}{\Sort} \quad \Label \compL \Label' \quad \Typed{\Label'}{\Sort} \in \Gamma \quad \Gamma \vdash \MQ{\Chan}{\Role_1}{\Role_2}{\Queue} \triangleright \MQ{\Chan}{\Role_1}{\Role_2}{\MT}}{\Gamma \vdash \MQ{\Chan}{\Role_1}{\Role_2}{\MessU{\Label}{\Expr[v]}\#\Queue} \triangleright \MQ{\Chan}{\Role_1}{\Role_2}{\MessU{\Label'}{\Sort}\#\MT}}}
		\vspace{0.8em}\\
		\left( \textsf{MQBranR} \right) \dfrac{\Label \compL \Label' \quad \Gamma \vdash \MQ{\Chan}{\Role_1}{\Role_2}{\Queue} \triangleright \MQ{\Chan}{\Role_1}{\Role_2}{\MT}}{\Gamma \vdash \MQ{\Chan}{\Role_1}{\Role_2}{\MessBR{\Label}\#\Queue} \triangleright \MQ{\Chan}{\Role_1}{\Role_2}{\MessBR{\Label'}\#\MT}}
		\vspace{0.8em}\\
		\textcolor{blue}{\left( \textsf{MQBranW} \right) \dfrac{\Label \compL \Label' \quad \Gamma \vdash \MQ{\Chan}{\Role_1}{\Role_2}{\Queue} \triangleright \MQ{\Chan}{\Role_1}{\Role_2}{\MT}}{\Gamma \vdash \MQ{\Chan}{\Role_1}{\Role_2}{\MessBW{\Label}\#\Queue} \triangleright \MQ{\Chan}{\Role_1}{\Role_2}{\MessBW{\Label'}\#\MT}}}
		\vspace{0.8em}\\
		\left( \textsf{MQDeleg} \right) \dfrac{\Gamma \vdash \MQ{\Chan}{\Role_1}{\Role_2}{\Queue} \triangleright \MQ{\Chan}{\Role_1}{\Role_2}{\MT}}{\Gamma \vdash \MQ{\Chan}{\Role_1}{\Role_2}{\AT{\Chan'}{\Role}\#\Queue} \triangleright \MQ{\Chan}{\Role_1}{\Role_2}{\AT{\Chan'}{\Role}\#\MT}}
		\hspace{2em}
		\left( \textsf{MQNil} \right) \dfrac{}{\Gamma \vdash \MQ{\Chan}{\Role_1}{\Role_2}{\emptyList} \triangleright \MQ{\Chan}{\Role_1}{\Role_2}{\emptyList}}
	\end{array} \]
	\caption{Runtime Typing Rules for Fault-Tolerant Systems.}
	\label{fig:runtimeTypingRules}
\end{figure}

Rule (\textsf{Res2}) types sessions that are already initialised and that may have performed already some of the steps described by their global type.
The relation $ \stackrel{\Chan}{\mapsto} $ is given in Figure~\ref{fig:reductionSessionEnv} and describes how a session environment evolves alongside reductions of the system, \ie it emulates the reduction steps of processes.
As an example consider the rule
$ \Delta \compS \Typed{\AT{\Chan}{\Role_1}}{\LSendR{\Role_2}{\Sort}{\LT}} \compS \MQ{\Chan}{\Role_1}{\Role_2}{\MT} \stackrel{\Chan}{\mapsto} \Delta \compS \Typed{\AT{\Chan}{\Role_1}}{\LT} \compS \MQ{\Chan}{\Role_1}{\Role_2}{\MT\#\MessR{\Sort}} $
that emulates (\textsf{RSend}).
Let $ \stackrel{\Chan}{\Mapsto} $ denote the reflexive and transitive closure of $ \stackrel{\Chan}{\mapsto} $.

\begin{figure}[tp]
	\[ \begin{array}{c}
%		\left( \textsf{Init} \right) \dfrac{s \sharp \Delta}{\Delta \stackrel{\Chan}{\mapsto} \Delta \compS \Set{ \Typed{\AT{\Chan}{\Role}}{\Proj{\GT}{\Role}} \mid \Role \in \Roles{\GT} } \compS \Set{ \MQ{\Chan}{\Role}{\Role'}{\emptyList} }}
%		\vspace{0.8em}\\
		\left( \textsf{RSend} \right) \dfrac{}{\Delta \compS \Typed{\AT{\Chan}{\Role_1}}{\LSendR{\Role_2}{\Sort}{\LT}} \compS \MQ{\Chan}{\Role_1}{\Role_2}{\MT} \stackrel{\Chan}{\mapsto} \Delta \compS \Typed{\AT{\Chan}{\Role_1}}{\LT} \compS \MQ{\Chan}{\Role_1}{\Role_2}{\MT\#\MessR{\Sort}}}
		\vspace{0.8em}\\
		\left( \textsf{RGet} \right) \dfrac{}{\Delta \compS \Typed{\AT{\Chan}{\Role_1}}{\LGetR{\Role_2}{\Sort}{\LT}} \compS \MQ{\Chan}{\Role_2}{\Role_1}{\MessR{\Sort}\#\MT} \stackrel{\Chan}{\mapsto} \Delta \compS \Typed{\AT{\Chan}{\Role_1}}{\LT} \compS \MQ{\Chan}{\Role_2}{\Role_1}{\MT}}
		\vspace{0.8em}\\
		\textcolor{blue}{\left( \textsf{USend} \right) \dfrac{}{\Delta \compS \Typed{\AT{\Chan}{\Role_1}}{\LSendU{\Role_2}{\Label}{\Sort}{\LT}} \compS \MQ{\Chan}{\Role_1}{\Role_2}{\MT} \stackrel{\Chan}{\mapsto} \Delta \compS \Typed{\AT{\Chan}{\Role_1}}{\LT} \compS \MQ{\Chan}{\Role_1}{\Role_2}{\MT\#\MessU{\Label}{\Sort}}}}
		\vspace{0.8em}\\
		\textcolor{blue}{\left( \textsf{UGet} \right) \dfrac{}{\Delta \compS \Typed{\AT{\Chan}{\Role_1}}{\LGetU{\Role_2}{\Label}{\Sort}{\LT}} \compS \MQ{\Chan}{\Role_2}{\Role_1}{\MessU{\Label}{\Sort}\#\MT} \stackrel{\Chan}{\mapsto} \Delta \compS \Typed{\AT{\Chan}{\Role_1}}{\LT} \compS \MQ{\Chan}{\Role_2}{\Role_1}{\MT}}}
		\vspace{0.8em}\\
		\textcolor{blue}{\left( \textsf{USkip} \right) \dfrac{}{\Delta \compS \Typed{\AT{\Chan}{\Role_1}}{\LGetU{\Role_2}{\Label}{\Sort}{\LT}} \stackrel{\Chan}{\mapsto} \Delta \compS \Typed{\AT{\Chan}{\Role_1}}{\LT}}}
		\vspace{0.8em}\\
		\textcolor{blue}{\left( \textsf{ML} \right) \dfrac{}{\Delta \compS \MQ{\Chan}{\Role_1}{\Role_2}{\MessU{\Label}{\Sort}\#\MT} \stackrel{\Chan}{\mapsto} \Delta \compS \MQ{\Chan}{\Role_1}{\Role_2}{\MT}}}
		\vspace{0.8em}\\
		\left( \textsf{RSel} \right) \dfrac{j \in \indexSet}{\Delta \compS \Typed{\AT{\Chan}{\Role_1}}{\LSelR{\Role_2}{\Set{ \Label_i.\LT_i }_{i \in \indexSet}}} \compS \MQ{\Chan}{\Role_1}{\Role_2}{\MT} \stackrel{\Chan}{\mapsto} \Delta \compS \Typed{\AT{\Chan}{\Role_1}}{\LT_j} \compS \MQ{\Chan}{\Role_1}{\Role_2}{\MT\#\MessBR{\Label_j}}}
		\vspace{0.8em}\\
		\left( \textsf{RBran} \right) \dfrac{j \in \indexSet}{\Delta \compS \Typed{\AT{\Chan}{\Role_1}}{\LBranR{\Role_2}{\Set{ \Label_i.\LT_i }_{i \in \indexSet}}} \compS \MQ{\Chan}{\Role_2}{\Role_1}{\MessBR{\Label_j}\#\MT} \stackrel{\Chan}{\mapsto} \Delta \compS \Typed{\AT{\Chan}{\Role_1}}{\LT_j} \compS \MQ{\Chan}{\Role_2}{\Role_1}{\MT}}
		\vspace{0.8em}\\
		\textcolor{blue}{\left( \textsf{WSel} \right) \dfrac{j \in \indexSet \quad \Role[R] = \Set{ \Role_1, \ldots, \Role_n}}{\begin{array}{c} \Delta \compS \Typed{\AT{\Chan}{\Role}}{\LSelW{\Role[R]}{\Set{ \Label_i.\LT_i }_{i \in \indexSet, \LabelD}}} \compS \MQ{\Chan}{\Role}{\Role_1}{\MT_1} \compS \ldots \compS \MQ{\Chan}{\Role}{\Role_n}{\MT_n} \stackrel{\Chan}{\mapsto}{}\\ \Delta \compS \Typed{\AT{\Chan}{\Role}}{\LT_j} \compS \MQ{\Chan}{\Role}{\Role_1}{\MT_1\#\MessBW{\Label_j}} \compS \ldots \compS \MQ{\Chan}{\Role}{\Role_n}{\MT_n\#\MessBW{\Label_j}} \end{array}}}
		\vspace{0.8em}\\
		\textcolor{blue}{\left( \textsf{WBran} \right) \dfrac{j \in \indexSet}{\Delta \compS \Typed{\AT{\Chan}{\Role_1}}{\LBranW{\Role_2}{\Set{ \Label_i.\LT_i }_{i \in \indexSet, \LabelD}}} \compS \MQ{\Chan}{\Role_2}{\Role_1}{\MessBW{\Label_j}\#\MT} \stackrel{\Chan}{\mapsto} \Delta \compS \Typed{\AT{\Chan}{\Role_1}}{\LT_j} \compS \MQ{\Chan}{\Role_2}{\Role_1}{\MT}}}
		\vspace{0.8em}\\
		\textcolor{blue}{\left( \textsf{WSkip} \right) \dfrac{}{\Delta \compS \Typed{\AT{\Chan}{\Role_1}}{\LBranW{\Role_2}{\Set{ \Label_i.\LT_i }_{i \in \indexSet, \LabelD}}} \stackrel{\Chan}{\mapsto} \Delta \compS \Typed{\AT{\Chan}{\Role_1}}{\LTD}}}
		\vspace{0.8em}\\
		\left( \textsf{Rec} \right) \dfrac{}{\Delta \compS \Typed{\AT{\Chan}{\Role}}{\LRep{\TypeV}{\LT}} \stackrel{\Chan}{\mapsto} \Delta \compS \Typed{\AT{\Chan}{\Role}}{\TypeV\Subst{\LRep{\TypeV}{\LT}}{\TypeV}}}
		\vspace{0.8em}\\
		\left( \textsf{Deleg} \right) \dfrac{}{\Delta \compS \Typed{\AT{\Chan}{\Role_1}}{\LDelA{\Role_2}{\Chan'}{\Role}{\LT'}{\LT}} \compS \Typed{\AT{\Chan'}{\Role}}{\LT'} \compS \MQ{\Chan}{\Role_1}{\Role_2}{\MT} \stackrel{\Chan}{\mapsto} \Delta \compS \Typed{\AT{\Chan}{\Role_1}}{\LT} \compS \MQ{\Chan}{\Role_1}{\Role_2}{\MT\#\AT{\Chan'}{\Role}}}
		\vspace{0.8em}\\
		\left( \textsf{SRecv} \right) \dfrac{}{\Delta \compS \Typed{\AT{\Chan}{\Role_1}}{\LDelB{\Role_2}{\Chan'}{\Role}{\LT'}{\LT}} \compS \MQ{\Chan}{\Role_2}{\Role_1}{\AT{\Chan'}{\Role}\#\MT} \stackrel{\Chan}{\mapsto} \Delta \compS \Typed{\AT{\Chan}{\Role_1}}{\LT} \compS \Typed{\AT{\Chan'}{\Role}}{\LT'} \compS \MQ{\Chan}{\Role_2}{\Role_1}{\MT}}
	\end{array} \]
	\caption{Reduction Rules for Session Environments.}
	\label{fig:reductionSessionEnv}
\end{figure}

We have to prove that our extended type system satisfies the standard properties of \MPST, \ie subject reduction and progress.
Because of the failure pattern in the reduction semantics in Figure~\ref{fig:semanticsPartA}, subject reduction and progress do not hold in general.
Instead we have to fix conditions on failure patterns that ensure these properties.
Subject reduction needs one condition on crashed processes and progress requires that no part of the system is blocked.
In fact, different instantiations of these failure patterns may allow for progress. We leave it for future work to determine what kind of conditions on failure patterns or requirements on their interactions are necessary. Here, we consider only one such set.

\begin{condition}[Failure Pattern]
	\label{cond:all}
	\hfill
	\begin{compactenum}
		\item If $ \fpCrash(\PT, \ldots) $, then $ \Unreliable{\PT} $. \label{cond:crash}
		\item The failure pattern $ \fpUGet(\Chan, \Role_1, \Role_2, \Label, \ldots) $ is always valid. \label{cond:ugetValid}
		\item The pattern $ \fpML(\Chan, \Role_1, \Role_2, \Label, \ldots) $ is valid iff $ \fpUSkip(\Chan, \Role_2, \Role_1, \Label, \ldots) $ is valid. \label{cond:fpMLifffpUSkip}
		\item If $ \fpCrash(\PT, \ldots) $ and $ \AT{\Chan}{\Role} \in \Actors{\PT} $ is an actor then eventually $ \fpUSkip(\Chan, \Role_2, \Role, \Label, \ldots) $ and $ \fpWSkip(\Chan, \Role_2, \Role, \Label, \ldots) $ for all $ \Role_2, \Label $. \label{cond:fpCrashImpliesSkip}
		\item If $ \fpCrash(\PT, \ldots) $ and $ \AT{\Chan}{\Role} \in \Actors{\PT} $ then eventually $ \fpML(\Chan, \Role_1, \Role, \Label, \ldots) $ for all $ \Role_1, \Label $. \label{cond:fpCrashImpliesML}
		\item If $ \fpWSkip(\Chan, \Role_1, \Role_2, \ldots) $ then $ \AT{\Chan}{\Role_2} $ is crashed, \ie the system does no longer contain an actor $ \AT{\Chan}{\Role_2} $ and the message queue $ \MQS{\Chan}{\Role_2}{\Role_1} $ is empty. \label{cond:fpWskip}
	\end{compactenum}
\end{condition}

The crash of a process should not block \strongR actions, \ie only processes with $ \Unreliable{\PT} $ can crash (Condition~\ref{cond:all}.\ref{cond:crash}).
Condition~\ref{cond:all}.\ref{cond:ugetValid} requires that no process can refuse to consume a message on its queue to prevent deadlocks that may arise from refusing a message that is never dropped.
Condition~\ref{cond:all}.\ref{cond:fpMLifffpUSkip} requires that if a message can be dropped from a message queue then the corresponding receiver has to be able to skip this message and vice versa.
Similarly, processes that wait for messages from a crashed process have to be able to skip (Condition~\ref{cond:all}.\ref{cond:fpCrashImpliesSkip}) and all messages of a queue towards a crashed receiver can be dropped (Condition~\ref{cond:all}.\ref{cond:fpCrashImpliesML}).
Finally, \weakR branching requests should not be lost.
To ensure that the receiver of such a branching request can proceed if the sender is crashed but is not allowed to skip the reception of the branching request before the sender crashed, we require that $ \fpWSkip(\Chan, \Role_1, \Role_2, \ldots) $ is false as long as $ \AT{\Chan}{\Role_2} $ is alive or messages on the respective queue are still in transit (Condition~\ref{cond:all}.\ref{cond:fpWskip}).

The combination of the 6 conditions in Conditions~\ref{cond:all} might appear quite restrictive as \eg the combination of the Condition~\ref{cond:all}.\ref{cond:fpCrashImpliesSkip} and \ref{cond:all}.\ref{cond:fpWskip} ensures the correct behaviour of \weakR branching such that branching messages can be skipped if and only if the respective sender has crashed.
An implementation of such a \weakR interaction in an asynchronous system that is subject to message losses and process crashes, might require something like a perfect failure detector or actually solving consensus\footnote{Note that we consider in Section~\ref{sec:example} a consensus algorithm. So, if the Condition~\ref{cond:all} requires a solution of consensus, an example on top of that solving consensus would be pointless.}.
It is important to remember that these conditions are minimal assumptions on the system requirements and that system requirements are abstractions.
Parts of them may be realised by actual software-code (which then allows to check them), whereas other parts of the system requirements may not be realised at all but rather observed (which then does not allow to verify them).
Weakly reliable branching is a good example of this case.
The easiest way to obtain a \weakR interaction, is by using a handshake communication and time-outs.
If the sender time-outs while waiting for an acknowledgement, it resends the message.
If the sender does not hear from its receiver for a long enough period of time, it assumes that the receiver has crashed and proceeds.
With carefully chosen time-frames for the time-outs, this approach is a compromise between correctness and efficiency.
In a theoretical sense, it is clearly not correct.
There is no time-frame such that the sender can be really sure that the receiver has crashed.
From a practical point of view, this is not so problematic, since in many systems failures are very unlikely.
If failures that are so severe that they are not captured by the time-outs are extremely unlikely, then it is often much more efficient to just accept that the algorithm is not correct in these cases.
Trying to obtain an algorithm that is always correct might be impossible or at least usually results into very inefficient algorithms.
Moreover, verifying this requires to also verify the underlying communication infrastructure and the way in that failures may occur, which is impossible or at least impracticable.
Because of that, it is an established method to verify the correctness of algorithms \wrt given system requirements (\eg in \cite{ChandraToueg96,lamport01,Tanenbaum17}), even if these system requirements are not verified and often do not hold in all (but only nearly all) cases.

Let us have a closer look at the typing rules in the Figures~\ref{fig:typingRules} and \ref{fig:runtimeTypingRules}.
We observe that all typing rules are clearly distinguished by the outermost operator of the process in the conclusion except that there are two typing rules for restriction.
With that, given a type judgement $ \Gamma \vdash \PT \triangleright \Delta $, we can use the structure of $ \PT $---with a case split for restriction---to reason about the structure of the proof tree that was necessary to obtain $ \Gamma \vdash \PT \triangleright \Delta $ and from that derive conditions about the nature of the involved type environments.
If $ \PT $ is \eg a parallel composition $ \PT_1 \mid \PT_2 $ then, since there is only one rule to type parallel compositions (the Rule~(\textsf{Par})), $ \Gamma \vdash \PPar{\PT_1}{\PT_2} \triangleright \Delta $ implies that there are $ \Delta_1, \Delta_2 $ such that $ \Delta = \Delta_1 \compS \Delta_2 $, $ \Gamma \vdash \PT_1 \triangleright \Delta_1 $, and $ \Gamma \vdash \PT_2 \triangleright \Delta_2 $.
In the following, we write 'by Rule~(\textsf{Par})' as short hand for 'by the clear distinction of the typing rules by the process in the conclusion and Rule~(\textsf{Par}) in particular' and similar for the other rules.

In the following we prove some properties of our \MPST variant.
We start with an auxiliary result, proving that structural congruence preserves the validity of type judgements.
The proof is by induction on $ \PT \equiv \PT' $.
In each case we can use the information about the structure of the process that is provided by the considered rule of structural congruence to conclude on the last few typing rules that had to be applied to derive the type judgement in the assumption.
From these partial proof trees we obtain enough information to construct the proof tree for the conclusion.

\begin{lem}[Subject Congruence]
	\label{lem:structuralCongruencePreservesJudgement}
	If $ \Gamma \vdash \PT \triangleright \Delta $ and $ \PT \equiv \PT' $ then $ \Gamma \vdash \PT' \triangleright \Delta $.
\end{lem}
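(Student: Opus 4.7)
The plan is to proceed by induction on the derivation of $\PT \equiv \PT'$. Since $\equiv$ is the least congruence containing alpha-conversion and the listed axioms, the cases are: reflexivity, symmetry, transitivity, congruence closure (one per operator), alpha-conversion, and each structural axiom. Reflexivity is immediate; transitivity chains two uses of the induction hypothesis; for symmetry it will suffice to prove each axiom in both directions once and for all. Congruence closure under any operator is handled by applying the induction hypothesis to the sub-equivalence and then re-applying the unique typing rule for that operator, relying on the invertibility observation made just before the lemma. Alpha-conversion is harmless because type environments only track free names.

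For each structural axiom the approach is to invert the typing derivation, exploiting that the outermost process operator uniquely determines the last typing rule applied (with only Rules (\textsf{Res1}) and (\textsf{Res2}) sharing the restriction form, which triggers a case split on whether the bound name is a session channel). For $\PT \mid \PEnd \equiv \PT$, Rule (\textsf{Par}) splits $\Delta = \Delta_1 \compS \Delta_2$ while Rule (\textsf{End}) forces $\Delta_2 = \emptyset$; conversely, $\Gamma \vdash \PEnd \triangleright \emptyset$ combined with the assumption via (\textsf{Par}) gives the parallel form. Commutativity and associativity of $\mid$ then follow directly from the corresponding algebraic properties of $\compS$ on linear session environments. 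The collapsing axioms $\PRes{\Args}{\PEnd} \equiv \PEnd$ and $\PRep{\ProcV}{\PEnd} \equiv \PEnd$ are analogous: inversion forces the inner session environment to be trivial (using the convention $\Delta \compS \Typed{\AT{\Chan}{\Role}}{\LEnd} = \Delta$), and the reverse is routine.

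The real work lies in the two restriction-swap axioms. For $\PRes{\Args}{\PRes{\Args[y]}{\PT}} \equiv \PRes{\Args[y]}{\PRes{\Args}{\PT}}$, two successive inversions via (\textsf{Res1})/(\textsf{Res2}) yield an inner judgement whose environment has been extended independently by the two names; linearity together with $\Args \neq \Args[y]$ (up to alpha-conversion) makes the two extensions commute. The hard part will be scope extrusion $\PRes{\Args}{\left( \PT_1 \mid \PT_2 \right)} \equiv \PT_1 \mid \PRes{\Args}{\PT_2}$ under the side condition $\Args \notin \FreeNames{\PT_1}$. Going forward, after inverting (\textsf{Res1})/(\textsf{Res2}) and then (\textsf{Par}), the task is to transfer the entire responsibility for $\Args$ onto $\PT_2$; this needs an auxiliary \emph{strengthening} lemma stating that an assumption about a name not free in the typed process can be dropped from $\Gamma$ (and correspondingly from $\Delta$ when the name is a session channel). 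The backward direction uses the matching \emph{weakening} step.

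The subtlest sub-case will be the one in which $\Args$ is a session channel, so that (\textsf{Res2}) must be used: the session environment $\Delta'$ generated by the $\stackrel{\Chan}{\Mapsto}$-premise of (\textsf{Res2}) has to be distributed between $\PT_1$ and $\PT_2$ in a way compatible with the (\textsf{Par}) split. Here I expect to use that, by $\Args \notin \FreeNames{\PT_1}$, any actor $\AT{\Args}{\Role}$ and any message queue $\MQS{\Args}{\Role_1}{\Role_2}$ mentioned in $\Delta'$ can only arise from $\PT_2$, so the $\Delta'$-part falls entirely on the $\PT_2$ side and (\textsf{Res2}) can be reapplied to the right-hand side with the same $\GT$ and the same $\stackrel{\Chan}{\Mapsto}$-witness.
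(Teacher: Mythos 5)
Your proposal follows essentially the same route as the paper's proof: induction on the derivation of $\PT \equiv \PT'$, inverting the typing derivation via the observation that the outermost process operator determines the last rule applied (with the (\textsf{Res1})/(\textsf{Res2}) case split), and discharging each axiom by reassembling the proof tree using the algebraic properties of $\compS$. You are in fact somewhat more explicit than the paper on the scope-extrusion case, where the paper simply asserts that the extensions $\Gamma'$ and $\Delta'$ fall entirely on the $\PT_2$ side because $\Args \notin \FreeNames{\PT_1}$ rather than isolating a strengthening/weakening lemma, but this is a refinement of the same argument, not a different one.
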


\begin{proof}
	The proof is by induction on $ \PT \equiv \PT' $.
	\begin{description}
		\item[Case $ \PPar{\PT}{\PEnd} \equiv \PT $] Assume $ \Gamma \vdash \PPar{\PT}{\PEnd} \triangleright \Delta $. By the Rule~(\textsf{Par}), then there are $ \Delta_{\PT}, \Delta_{\PEnd} $ such that $ \Delta = \Delta_{\PT} \compS \Delta_{\PEnd} $, $ \Gamma \vdash \PT \triangleright \Delta_{\PT} $, and $ \Delta \vdash \PEnd \triangleright \Delta_{\PEnd} $.
			Moreover, by Rule~(\textsf{End}), $ \Gamma \vdash \PEnd \triangleright \Delta_{\PEnd} $ implies that $ \Delta_{\PEnd} = \emptyset $ and, thus, $ \Delta = \Delta_{\PT} $.
			Then also $ \Gamma \vdash \PT \triangleright \Delta $.\\
			For the opposite direction assume $ \Gamma \vdash \PT \triangleright \Delta $. By Rule~(\textsf{End}), $ \Gamma \vdash \PEnd \triangleright \emptyset $. With Rule~(\textsf{Par}) and because $ \Delta = \Delta \compS \emptyset $, then $ \Gamma \vdash \PPar{\PT}{\PEnd} \triangleright \Delta $.
		\item[Case $ \PPar{\PT_1}{\PT_2} \equiv \PPar{\PT_2}{\PT_1} $] Assume $ \Gamma \vdash \PPar{\PT_1}{\PT_2} \triangleright \Delta $.
			By Rule~(\textsf{Par}), then there are $ \Delta_{\PT_1}, \Delta_{\PT_2} $ such that $ \Delta = \Delta_{\PT_1} \compS \Delta_{\PT_2} $ and $ \Gamma \vdash \PT_i \triangleright \Delta_{\PT_i} $.
			By Rule~(\textsf{Par}) and since $ \Delta = \Delta_{\PT_1} \compS \Delta_{\PT_2} $ implies $ \Delta = \Delta_{\PT_2} \compS \Delta_{\PT_1} $, then $ \Gamma \vdash \PPar{\PT_2}{\PT_1} \triangleright \Delta $.\\
			The opposite direction is similar.
		\item[Case $ \PPar{\PT_1}{\left( \PPar{\PT_2}{\PT_3} \right)} \equiv \PPar{\left( \PPar{\PT_1}{\PT_2} \right)}{\PT_3} $] Assume $ \Gamma \vdash \PPar{\PT_1}{\left( \PPar{\PT_2}{\PT_3} \right)} \triangleright \Delta $.\\
			By Rule~(\textsf{Par}), then there are $ \Delta_{\PT_1}, \Delta_{\PT_2}, \Delta_{\PT_3} $ such that $ \Delta = \Delta_{\PT_1} \compS \left( \Delta_{\PT_2} \compS \Delta_{\PT_3} \right) $ and $ \Gamma \vdash \PT_i \triangleright \Delta_{\PT_i} $. By Rule~(\textsf{Par}) and because $ \Delta = \left( \Delta_{\PT_1} \compS \Delta_{\PT_2} \right) \compS \Delta_{\PT_3} $, then $ \Gamma \vdash \PPar{\left( \PPar{\PT_1}{\PT_2} \right)}{\PT_3} \triangleright \Delta $.\\
			The opposite direction is similar.
		\item[Case $ \PRep{\ProcV}{\PEnd} \equiv \PEnd $] Assume $ \Gamma \vdash \PRep{\ProcV}{\PEnd} \triangleright \Delta $.
			By Rule~(\textsf{Rec}), then $ \Delta = \Delta' \compS \Typed{\AT{\Chan[s]}{\Role}}{\LRep{\TypeV}{\LT}} $, $ \Gamma \compS \Typed{\ProcV}{\TypeV} \vdash \PEnd \triangleright \Delta' \compS \Typed{\AT{\Chan}{\Role}}{\LT} $, and, by Rule~(\textsf{End}), then $ \Delta' = \emptyset $ and $ \LT = \LEnd $.
			Since $ \LRep{\TypeV}{\LEnd} = \LEnd $ and $ \Delta \compS \Typed{\AT{\Chan[s]}{\Role}}{\LEnd} = \Delta $, then $ \Delta = \emptyset $.
			By Rule~(\textsf{End}), then $ \Gamma \vdash \PEnd \triangleright \Delta $.\\
			For the opposite direction assume $ \Gamma \vdash \PEnd \triangleright \Delta $.
			By Rule~(\textsf{End}), then $ \Delta = \emptyset $.
			By Rule~(\textsf{End}), then also $ \Gamma \compS \Typed{\ProcV}{\TypeV} \vdash \PEnd \triangleright \Delta $.
			By Rule~(\textsf{Rec}) and since $ \LRep{\TypeV}{\LEnd} = \LEnd $ and $ \Delta \compS \Typed{\AT{\Chan[s]}{\Role}}{\LEnd} = \Delta $, then $ \Gamma \vdash \PRep{\ProcV}{\PEnd} \triangleright \Delta $.
		\item[Case $ \PRes{\Args}{\PEnd} \equiv \PEnd $] Assume $ \Gamma \vdash \PRes{\Args}{\PEnd} \triangleright \Delta $.
			By one of the Rules~(\textsf{Res1}) or (\textsf{Res2}), then there are $ \Gamma', \Delta' $ such that $ \Gamma \compS \Gamma' \vdash \PEnd \triangleright \Delta $ or $ \Gamma \compS \Gamma' \vdash \PEnd \triangleright \Delta \compS \Delta' $.
			In both cases we can conclude with Rule~(\textsf{End}) that the session environment is empty, \ie $ \Delta = \emptyset $ and $ \Delta \compS \Delta' = \emptyset $.
			By Rule~(\textsf{End}), then $ \Gamma \vdash \PEnd \triangleright \Delta $.\\
			For the opposite direction assume $ \Gamma \vdash \PEnd \triangleright \Delta $.
			By Rule~(\textsf{End}), then $ \Delta = \emptyset $.
			By Rule~(\textsf{End}), then $ \Gamma \compS \Typed{\Args}{\Sort} \vdash \PEnd \triangleright \Delta $ for some sort $ \Sort $---regardless of whether $ x $ is a value or a session channel.
			By Rule~(\textsf{Res1}), then $ \Gamma \vdash \PRes{\Args}{\PEnd} \triangleright \Delta $.
		\item[Case $ \PRes{\Args}{\PRes{\Args[y]}{\PT}} \equiv \PRes{\Args[y]}{\PRes{\Args}{\PT}} $] Assume $ \Gamma \vdash \PRes{\Args}{\PRes{\Args[y]}{\PT}} \triangleright \Delta $.
			By al\-pha-con\-ver\-sion and the Rules~(\textsf{Res1}) and (\textsf{Res2}), then there is $ \Gamma' $ and some (possibly empty) $ \Delta' $ such that---for all combinations of the Rules~(\textsf{Res1}) and (\textsf{Res2}) for the restrictions of $ \Args $ and $ \Args[y] $---we have $ \Gamma \compS \Gamma' \vdash \PT \triangleright \Delta \compS \Delta' $.
			By the commutativity and associativity of $ \compS $ and two corresponding applications of the Rules~(\textsf{Res1}) and (\textsf{Res2}), then also $ \Gamma \vdash \PRes{\Args[y]}{\PRes{\Args}{\PT}} \triangleright \Delta $.\\
			The opposite direction is similar.
		\item[Case $ \PRes{\Args}{\left( \PPar{\PT_1}{\PT_2} \right)} \equiv \PPar{\PT_1}{\PRes{\Args}{\PT_2}} $ if $ \Args \notin \FreeNames{\PT_1} $] Assume $ \Gamma \vdash \PRes{\Args}{\left( \PPar{\PT_1}{\PT_2} \right)} \triangleright \Delta $.
			By one of the Rules~(\textsf{Res1}) or (\textsf{Res2}), then $ \Args \sharp \left( \Gamma, \Delta \right) $ and there are $ \Gamma', \Delta' $ such that $ \Gamma \compS \Gamma' \vdash \PPar{\PT_1}{\PT_2} \triangleright \Delta \compS \Delta' $, where $ \Gamma' $ assigns to $ \Args $ either a sort or a global type and $ \Delta' $ is either empty or contains only actors and message queues.
			Since $ \Args \notin \FreeNames{\PT_1} $ and by Rule~(\textsf{Par}), then there are $ \Delta_{\PT_1}, \Delta_{\PT_2} $ such that $ \Delta = \Delta_{\PT_1} \compS \Delta_{\PT_2} $, $ \Gamma \vdash \PT_1 \triangleright \Delta_{\PT_1} $, and $ \Gamma \compS \Gamma' \vdash \PT_2 \triangleright \Delta_{\PT_2} \compS \Delta' $.
			With one of the Rules~(\textsf{Res1}) or (\textsf{Res2}), then $ \Gamma \vdash \PRes{\Args}{\PT_2} \triangleright \Delta_{\PT_2} $.
			With Rule~(\textsf{Par}), then $ \Gamma \vdash \PPar{\PT_1}{\PRes{\Args}{\PT_2}} \triangleright \Delta $.\\
			The opposite direction is similar.
			\qedhere
	\end{description}
\end{proof}

Moreover, types are preserved modulo substitution of names by values of the same sort.
The proof is by induction on the typing rules.

\begin{lem}[Substitution]
	\label{lem:substitution}
	If $ \Gamma \compS \Typed{\Args[c]}{\Sort_c} \vdash \PT[Q] \triangleright \Delta $ and $ \Gamma \Vdash \Typed{\Expr[d]}{\Sort_c} $, then $ \Gamma \vdash \PT[Q]\Subst{\Expr[d]}{\Args[c]} \triangleright \Delta $.
\end{lem}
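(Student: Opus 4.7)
The plan is to prove the Substitution Lemma by induction on the derivation of the typing judgement $\Gamma \compS \Typed{\Args[c]}{\Sort_c} \vdash \PT[Q] \triangleright \Delta$, carrying out a case analysis on the last typing rule applied. Since (by the remark preceding Lemma~\ref{lem:structuralCongruencePreservesJudgement}) every typing rule is uniquely determined by the outermost operator of the process in the conclusion (modulo the two rules for restriction), the inductive hypothesis can be applied to each immediate subderivation after commuting the substitution $\Subst{\Expr[d]}{\Args[c]}$ past the outermost operator. In all cases I would silently apply alpha-conversion on bound names to ensure that the bound name differs from $\Args[c]$ and is fresh for $\FreeNames{\Expr[d]}$, which is exactly the standard convention built into the definition of substitution.

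The base cases (\textsf{End}), (\textsf{Var}), (\textsf{MQNil}) and (\textsf{Crash}) are immediate: $\PEnd$, $\ProcV$, empty message queues and $\PCrash$ contain no free occurrence of $\Args[c]$, so $\PT[Q]\Subst{\Expr[d]}{\Args[c]} = \PT[Q]$ and the same derivation applies (noting that $\Gamma \compS \Typed{\Args[c]}{\Sort_c}$ and $\Gamma$ agree on all assignments other than $\Args[c]$, which is never consulted). The structural cases (\textsf{Par}), (\textsf{If}), (\textsf{Res1}), (\textsf{Res2}), (\textsf{Rec}), (\textsf{Req}), (\textsf{Acc}), (\textsf{Deleg}), (\textsf{SRecv}) together with the message queue rules for delegation and branching reduce directly to the inductive hypothesis applied to the premises; one simply needs to push $\Subst{\Expr[d]}{\Args[c]}$ into each subterm, using freshness of bound names to ensure $\Args[c] \sharp \Gamma$ is preserved in the extended environment of the premise.

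The interesting cases are the communication rules (\textsf{RSend}), (\textsf{USend}), (\textsf{RGet}), (\textsf{UGet}), (\textsf{RSel}), (\textsf{WSel}), (\textsf{RBran}), (\textsf{WBran}) and the matching message-queue rules (\textsf{MQComR}), (\textsf{MQComU}), (\textsf{MQBranR}), (\textsf{MQBranW}): here the process carries an expression $\Expr[y]$ whose sort must be preserved under the substitution. The key auxiliary fact is that if $\Gamma \compS \Typed{\Args[c]}{\Sort_c} \Vdash \Typed{\Expr[y]}{\Sort}$ and $\Gamma \Vdash \Typed{\Expr[d]}{\Sort_c}$, then $\Gamma \Vdash \Typed{\Expr[y]\Subst{\Expr[d]}{\Args[c]}}{\Sort}$. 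This is immediate from the definition of $\Vdash$, which requires $\Expr[y]$ to have sort $\Sort$ whenever every free name of $\Expr[y]$ is replaced by an arbitrary value of its declared sort; the expression $\Expr[d]$ is precisely such a value for $\Args[c]$. The same auxiliary fact is used for the Boolean guard in (\textsf{If}) and for the default value in (\textsf{UGet}).

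The main obstacle I anticipate is the bookkeeping for binders, in particular Rules~(\textsf{RGet}), (\textsf{UGet}), (\textsf{Res1}), (\textsf{Res2}), (\textsf{Req}), (\textsf{Acc}) and (\textsf{Rec}), which extend $\Gamma$ in the premise. Here one must ensure that the bound variable can be chosen fresh for $\Args[c], \Gamma, \Delta$ \emph{and} for $\FreeNames{\Expr[d]}$, so that commuting $\Subst{\Expr[d]}{\Args[c]}$ past the binder is sound and the extended environment of the premise still has the form $\Gamma' \compS \Typed{\Args[c]}{\Sort_c}$ with $\Args[c] \sharp \Gamma'$; this is a routine alpha-renaming argument but easy to get wrong. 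Aside from this bookkeeping, every case is a mechanical reconstruction of the proof tree with the substituted process in the conclusion.
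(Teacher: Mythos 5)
Your proposal is correct and follows essentially the same route as the paper's proof: induction on the typing derivation with a case split on the last rule applied, the auxiliary observation that $\Gamma \compS \Typed{\Args[c]}{\Sort_c} \Vdash \Typed{\Expr[y]}{\Sort}$ and $\Gamma \Vdash \Typed{\Expr[d]}{\Sort_c}$ yield $\Gamma \Vdash \Typed{\Expr[y]\Subst{\Expr[d]}{\Args[c]}}{\Sort}$ in the communication and queue cases, and alpha-conversion to keep bound names apart from $\Args[c]$ in the binder cases. Nothing further is needed.
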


\begin{proof}
	The proof is by induction on the derivation of $ \Gamma \compS \Typed{\Args[c]}{\Sort_c} \vdash \PT[Q] \triangleright \Delta $.
	\begin{description}
		\item[(\textsf{Req})] Then, $ \PT[Q] = \PReq{\Chan[a]}{\Role[n]}{\Chan}{\PT} $, $ \Typed{\Chan[a]}{\GT} \in \Gamma $, $ \Length{\Roles{\GT}} = \Role[n] $, and $ \Gamma \compS \Typed{\Args[c]}{\Sort_c} \vdash \PT \triangleright \Delta \compS \Typed{\AT{\Chan}{\Role[n]}}{\Proj{\GT}{\Role[n]}} $.
			Without loss of generality, assume $ \Chan \neq \Args[c] $. Because of linearity, $ \Typed{\Chan[a]}{\GT} \in \Gamma $ implies $ \Chan[a] \neq \Args[c] $.
			By the induction hypothesis, $ \Gamma \compS \Typed{\Args[c]}{\Sort_c} \vdash \PT \triangleright \Delta \compS \Typed{\AT{\Chan}{\Role[n]}}{\Proj{\GT}{\Role[n]}} $ and $ \Gamma \Vdash \Typed{\Expr[d]}{\Sort_c} $ imply $ \Gamma \vdash \PT\Subst{\Expr[d]}{\Args[c]} \triangleright \Delta \compS \Typed{\AT{\Chan}{\Role[n]}}{\Proj{\GT}{\Role[n]}} $.
			By (\textsf{RReq}), then $ \Gamma \vdash \PT[Q]\Subst{\Expr[d]}{\Args[c]} \triangleright \Delta $.
		\item[(\textsf{Acc})] Then, $ \PT[Q] = \PAcc{\Chan[a]}{\Role}{\Chan}{\PT} $, $ \Typed{\Chan[a]}{\GT} \in \Gamma $, $ 0 < \Role < \Length{\Roles{\GT}} $, and $ \Gamma \compS \Typed{\Args[c]}{\Sort_c} \vdash \PT \triangleright \Delta \compS \Typed{\AT{\Chan}{\Role}}{\Proj{\GT}{\Role}} $.
			Without loss of generality, assume $ \Chan \neq \Args[c] $.
			Because of linearity, $ \Typed{\Chan[a]}{\GT} \in \Gamma $ implies $ \Chan[a] \neq \Args[c] $.
			By the induction hypothesis, $ \Gamma \vdash \PT\Subst{\Expr[d]}{\Args[c]} \triangleright \Delta \compS \Typed{\AT{\Chan}{\Role}}{\Proj{\GT}{\Role}} $.
			By (\textsf{Acc}), then $ \Gamma \vdash \PT[Q]\Subst{\Expr[d]}{\Args[c]} \triangleright \Delta $.
		\item[(\textsf{RSend})] Then, $ \PT[Q] = \PSendR{\Chan}{\Role_1}{\Role_2}{\Expr[y]}{\PT} $, $ \Delta = \Delta' \compS \Typed{\AT{\Chan}{\Role_1}}{\LSendR{\Role_2}{\Sort}{\LT}} $, $ \Gamma \compS \Typed{\Args[c]}{\Sort_c} \Vdash \Typed{\Args[y]}{\Sort} $, and $ \Gamma \compS \Typed{\Args[c]}{\Sort_c} \vdash \PT \triangleright \Delta' \compS \Typed{\AT{\Chan}{\Role_1}}{\LT} $.
			Because $ \Gamma \compS \Typed{\Args[c]}{\Sort_c} \Vdash \Typed{\Args[y]}{\Sort} $, $ \Chan \neq \Args[c] $.
			With $ \Gamma \Vdash \Typed{\Expr[d]}{\Sort_c} $, then $ \Gamma \Vdash \Typed{\Expr[y]\Subst{\Expr[d]}{\Args[c]}}{\Sort} $.
			By the induction hypothesis, $ \Gamma \vdash \PT\Subst{\Expr[d]}{\Args[c]} \triangleright \Delta' \compS \Typed{\AT{\Chan}{\Role_1}}{\LT} $.
			By (\textsf{RSend}), then $ \Gamma \vdash \PT[Q]\Subst{\Expr[d]}{\Args[c]} \triangleright \Delta $.
		\item[(\textsf{RGet})] Then, $ \PT[Q] = \PGetR{\Chan}{\Role_1}{\Role_2}{\Args}{\PT} $, $ \Delta = \Delta' \compS \Typed{\AT{\Chan}{\Role_1}}{\LGetR{\Role_2}{\Sort}{\LT}} $, $ \Args \sharp \left( \Gamma, \Args[c], \Delta', \Chan \right) $, and $ \Gamma \compS \Typed{\Args[c]}{\Sort_c} \compS \Typed{\Args}{\Sort} \vdash \PT_i \triangleright \Delta' \compS \Typed{\AT{\Chan}{\Role_1}}{\LT} $.
			Because $ \Gamma \compS \Typed{\Args[c]}{\Sort_c} \compS \Typed{\Args}{\Sort} \vdash \PT_i \triangleright \Delta' \compS \Typed{\AT{\Chan}{\Role_1}}{\LT} $, $ \Chan \neq \Args[c] $.
			By the induction hypothesis, $ \Gamma \compS \Typed{\Args}{\Sort} \vdash \PT\Subst{\Expr[d]}{\Args[c]} \triangleright \Delta' \compS \Typed{\AT{\Chan}{\Role_1}}{\LT} $.
			By (\textsf{RGet}), then $ \Gamma \vdash \PT[Q]\Subst{\Expr[d]}{\Args[c]} \triangleright \Delta $.
		\textcolor{blue}{\item[(\textsf{USend})] Then, $ \PT[Q] = \PSendU{\Chan}{\Role_1}{\Role_2}{\Label}{\Expr[y]}{\PT} $, $ \Delta = \Delta' \compS \Typed{\AT{\Chan}{\Role_1}}{\LSendU{\Role_2}{\Label'}{\Sort}{\LT}} $, $ \Label \compL \Label' $, $ \Gamma \compS \Typed{\Args[c]}{\Sort_c} \Vdash \Typed{\Expr[y]}{\Sort} $, and $ \Gamma \compS \Typed{\Args[c]}{\Sort_c} \vdash \PT \triangleright \Delta' \compS \Typed{\AT{\Chan}{\Role_1}}{\LT} $.
			Because of $ \Gamma \compS \Typed{\Args[c]}{\Sort_c} \Vdash \Typed{\Expr[y]}{\Sort} $, $ \Chan \neq \Args[c] $.
			With $ \Gamma \Vdash \Typed{\Expr[d]}{\Sort_c} $, then $ \Gamma \Vdash \Typed{\Expr[y]\Subst{\Expr[d]}{\Args[c]}}{\Sort} $.
			By the induction hypothesis, $ \Gamma \vdash \PT\Subst{\Expr[d]}{\Args[c]} \triangleright \Delta' \compS \Typed{\AT{\Chan}{\Role_1}}{\LT} $.
			By (\textsf{USend}), then $ \Gamma \vdash \PT[Q]\Subst{\Expr[d]}{\Args[c]} \triangleright \Delta $.}
		\textcolor{blue}{\item[(\textsf{UGet})] Then, $ \PT[Q] = \PGetU{\Chan}{\Role_1}{\Role_2}{\Label}{\Expr[v]}{\Args}{\PT} $, $ \Delta = \Delta' \compS \Typed{\AT{\Chan}{\Role_1}}{\LGetU{\Role_2}{\Label'}{\Sort}{\LT}} $, $ \Label \compL \Label' $, $ \Gamma \compS \Typed{\Args[c]}{\Sort_c} \Vdash \Typed{\Expr[v]}{\Sort} $, $ \Args \sharp \left( \Gamma, \Args[c], \Delta', \Chan \right) $, and $ \Gamma \compS \Typed{\Args[c]}{\Sort_c} \compS \Typed{\Args}{\Sort} \vdash \PT \triangleright \Delta' \compS \Typed{\AT{\Chan}{\Role_1}}{\LT} $.
			Because of $ \Gamma \compS \Typed{\Args[c]}{\Sort_c} \Vdash \Typed{\Expr[v]}{\Sort} $, $ \Chan[s] \neq \Args[c] $.
			With $ \Gamma \Vdash \Typed{\Expr[d]}{\Sort_c} \in \Gamma $, then $ \Gamma \Vdash \Typed{\Expr[v]\Subst{\Expr[d]}{\Args[c]}}{\Sort} $.
			By the induction hypothesis, $ \Gamma \vdash \PT\Subst{\Expr[d]}{\Args[c]} \triangleright \Delta' \compS \Typed{\AT{\Chan}{\Role_1}}{\LT} $.
			By (\textsf{UGet}), then $ \Gamma \vdash \PT[Q]\Subst{\Expr[d]}{\Args[c]} \triangleright \Delta $.}
		\item[(\textsf{RSel})] Then, $ \PT[Q] = \PSelR{\Chan}{\Role_1}{\Role_2}{\Label}{\PT} $, $ \Delta = \Delta' \compS \Typed{\AT{\Chan}{\Role_1}}{\LSelR{\Role_2}{\Set{ \Label_i. \LT_i }_{i \in \indexSet}}} $, $ j \in \indexSet $, $ \Label \compL \Label_j $, and $ \Gamma \compS \Typed{\Args[c]}{\Sort_c} \vdash \PT \triangleright \Delta' \compS \Typed{\AT{\Chan}{\Role_1}}{\LT_j} $.
			Because $ \Gamma \compS \Typed{\Args[c]}{\Sort_c} \vdash \PT \triangleright \Delta' \compS \Typed{\AT{\Chan}{\Role_1}}{\LT_j} $, $ \Chan \neq \Args[c] $.
			By the induction hypothesis, $ \Gamma \vdash \PT\Subst{\Expr[d]}{\Args[c]} \triangleright \Delta' \compS \Typed{\AT{\Chan}{\Role_1}}{\LT_j} $.
			By (\textsf{RSel}), then $ \Gamma \vdash \PT[Q]\Subst{\Expr[d]}{\Args[c]} \triangleright \Delta $.
		\item[(\textsf{RBran})] Then, $ \PT[Q] = \PBranR{\Chan}{\Role_1}{\Role_2}{\Set{ \Label_i.\PT_i }_{i \in \indexSet_1}} $, $ \Delta = \Delta' \compS \Typed{\AT{\Chan}{\Role_1}}{\LBranR{\Role_2}{\Set{ \Label_i. \LT_i }_{i \in \indexSet_2}}} $, and, for all $ j \in \indexSet_2 $ exists some $ i \in \indexSet_1 $ such that $ \Label_i \compL \Label_j $ and $ \Gamma \compS \Typed{\Args[c]}{\Sort_c} \vdash \PT_i \triangleright \Delta' \compS \Typed{\AT{\Chan}{\Role_1}}{\LT_j} $.
			Fix $ j $ and $ i $.
			Because $ \Gamma \compS \Typed{\Args[c]}{\Sort_c} \vdash \PT_i \triangleright \Delta' \compS \Typed{\AT{\Chan}{\Role_1}}{\LT_j} $, $ \Chan \neq \Args[c] $.
			By the induction hypothesis, $ \Gamma \vdash \PT_i\Subst{\Expr[d]}{\Args[c]} \triangleright \Delta' \compS \Typed{\AT{\Chan}{\Role_1}}{\LT_j} $.
			By (\textsf{RBran}), then $ \Gamma \vdash \PT[Q]\Subst{\Expr[d]}{\Args[c]} \triangleright \Delta $.
		\textcolor{blue}{\item[(\textsf{WSel})] Then, $ \PT[Q] = \PSelW{\Chan}{\Role}{\Role[R]}{\Label}{\PT} $, $ \Delta = \Delta' \compS \Typed{\AT{\Chan}{\Role}}{\LSelW{\Role[R]}{\Set{ \Label_i. \LT_i }_{i \in \indexSet, \LabelD}}} $, $ j \in \indexSet $, $ \Label \compL \Label_j $, and $ \Gamma \compS \Typed{\Args[c]}{\Sort_c} \vdash \PT \triangleright \Delta' \compS \Typed{\AT{\Chan}{\Role}}{\LT_j} $.
			Because $ \Gamma \compS \Typed{\Args[c]}{\Sort_c} \vdash \PT \triangleright \Delta' \compS \Typed{\AT{\Chan}{\Role}}{\LT_j} $, $ \Chan \neq \Args[c] $.
			By the induction hypothesis, $ \Gamma \vdash \PT\Subst{\Expr[d]}{\Args[c]} \triangleright \Delta' \compS \Typed{\AT{\Chan}{\Role}}{\LT_j} $.
			By (\textsf{WSel}), then $ \Gamma \vdash \PT[Q]\Subst{\Expr[d]}{\Args[c]} \triangleright \Delta $.}
		\textcolor{blue}{\item[(\textsf{WBran})] Then, $ \PT[Q] = \PBranW{\Chan}{\Role_1}{\Role_2}{\Set{ \Label_i.\PT_i }_{i \in \indexSet_1, \LabelD}} $, $ \Delta = \Delta' \compS \Typed{\AT{\Chan}{\Role_1}}{\LBranW{\Role_2}{\Set{ \Label_i. \LT_i }_{i \in \indexSet_2, \LabelD'}}} $, $ \LabelD \compL \LabelD' $, and, for all $ j \in \indexSet_2 $ exists some $ i \in \indexSet_1 $ such that $ \Label_i \compL \Label_j $ and $ \Gamma \compS \Typed{\Args[c]}{\Sort_c} \vdash \PT_i \triangleright \Delta' \compS \Typed{\AT{\Chan}{\Role_1}}{\LT_j} $.
			Fix $ j $ and $ i $.
			Because $ \Gamma \compS \Typed{\Args[c]}{\Sort_c} \vdash \PT_i \triangleright \Delta' \compS \Typed{\AT{\Chan}{\Role_1}}{\LT_j} $, $ \Chan \neq \Args[c] $.
			By the induction hypothesis, $ \Gamma \vdash \PT_i\Subst{\Expr[d]}{\Args[c]} \triangleright \Delta' \compS \Typed{\AT{\Chan}{\Role_1}}{\LT_j} $.
			By (\textsf{WBran}), then $ \Gamma \vdash \PT[Q]\Subst{\Expr[d]}{\Args[c]} \triangleright \Delta $.}
		\item[(\textsf{If})] Then, $ \PT[Q] = \PITE{\Expr}{\PT}{\PT'} $, $ \Gamma \compS \Typed{\Args[c]}{\Sort_c} \Vdash \Typed{\Expr}{\bool} $, $ \Gamma \compS \Typed{\Args[c]}{\Sort_c} \vdash \PT \triangleright \Delta $, and $ \Gamma \compS \Typed{\Args[c]}{\Sort_c} \vdash \PT' \triangleright \Delta $.
			With $ \Gamma \Vdash \Typed{\Expr[d]}{\Sort_c} \in \Gamma $, then $ \Gamma \Vdash \Typed{\Expr\Subst{\Expr[d]}{\Args[c]}}{\bool} $.
			By the induction hypothesis, $ \Gamma \vdash \PT\Subst{\Expr[d]}{\Args[c]} \triangleright \Delta $ and $ \Gamma \vdash \PT'\Subst{\Expr[d]}{\Args[c]} \triangleright \Delta $.
			By (\textsf{If}), then $ \Gamma \vdash \PT[Q]\Subst{\Expr[d]}{\Args[c]} \triangleright \Delta $.
		\item[(\textsf{Deleg})] Then $ \PT[Q] = \PDelA{\Chan}{\Role_1}{\Role_2}{\AT{\Chan'}{\Role}}{\PT} $, $ \Delta = \Delta' \compS \Typed{\AT{\Chan}{\Role_1}}{\LDelA{\Role_2}{\Chan'}{\Role}{\LT'}{\LT}} \compS \Typed{\AT{\Chan'}{\Role}}{\LT'} $, and $ \Gamma \compS \Typed{\Args[c]}{\Sort_c} \vdash \PT \triangleright \Delta' \compS \Typed{\AT{\Chan}{\Role_1}}{\LT} $.
			Because $ \Gamma \compS \Typed{\Args[c]}{\Sort_c} \vdash \PT[Q] \triangleright \Delta $, $ \Args[c] \neq \Chan $ and $ \Args[c] \neq \Chan' $.
			By the induction hypothesis, $ \Gamma \vdash \PT\Subst{\Expr[d]}{\Args[c]} \triangleright \Delta' \compS \Typed{\AT{\Chan}{\Role_1}}{\LT} $.
			By (\textsf{Deleg}), then $ \Gamma \vdash \PT[Q]\Subst{\Expr[d]}{\Args[c]} \triangleright \Delta $.
		\item[(\textsf{SRecv})] Then $ \PT[Q] = \PDelB{\Chan}{\Role_1}{\Role_2}{\AT{\Chan'}{\Role}}{\PT} $, $ \Delta = \Delta' \compS \Typed{\AT{\Chan}{\Role_1}}{\LDelB{\Role_2}{\Chan'}{\Role}{\LT'}{\LT}} $, and $ \Gamma \compS \Typed{\Args[c]}{\Sort_c} \vdash \PT \triangleright \Delta' \compS \Typed{\AT{\Chan}{\Role_1}}{\LT} \compS \Typed{\AT{\Chan'}{\Role}}{\LT'} $.
			Because $ \Gamma \compS \Typed{\Args[c]}{\Sort_c} \vdash \PT \triangleright \Delta' \compS \Typed{\AT{\Chan}{\Role_1}}{\LT} \compS \Typed{\AT{\Chan'}{\Role}}{\LT'} $, $ \Args[c] \neq \Chan $ and $ \Args[c] \neq \Chan' $.
			By the induction hypothesis, $ \Gamma \vdash \PT\Subst{\Expr[d]}{\Args[c]} \triangleright \Delta' \compS \Typed{\AT{\Chan}{\Role_1}}{\LT} \compS \Typed{\AT{\Chan'}{\Role}}{\LT'} $.
			By (\textsf{SRecv}), then $ \Gamma \vdash \PT[Q]\Subst{\Expr[d]}{\Args[c]} \triangleright \Delta $.
		\item[(\textsf{Par})] Then, $ \PT[Q] = \PPar{\PT}{\PT'} $, $ \Gamma \compS \Typed{\Args[c]}{\Sort_c} \vdash \PT \triangleright \Delta $, and $ \Gamma \compS \Typed{\Args[c]}{\Sort_c} \vdash \PT' \triangleright \Delta $.
			By the induction hypothesis, $ \Gamma \vdash \PT\Subst{\Expr[d]}{\Args[c]} \triangleright \Delta $ and $ \Gamma \vdash \PT'\Subst{\Expr[d]}{\Args[c]} \triangleright \Delta $.
			By (\textsf{Par}), then $ \Gamma \vdash \PT[Q]\Subst{\Expr[d]}{\Args[c]} \triangleright \Delta $.
		\item[(\textsf{Res1})] Then, $ \PT[Q] = \PRes{\Args}{\PT} $, $ \Args \sharp \left( \Gamma, \Args[c], \Delta \right) $, and $ \Gamma \compS \Typed{\Args[c]}{\Sort_c} \compS \Typed{\Args}{\Sort} \vdash \PT \triangleright \Delta $.
			By the induction hypothesis, $ \Gamma \compS \Typed{\Args}{\Sort} \vdash \PT\Subst{\Expr[d]}{\Args[c]} \triangleright \Delta $.
			By (\textsf{Res1}), then $ \Gamma \vdash \PT[Q]\Subst{\Expr[d]}{\Args[c]} \triangleright \Delta $.
		\item[(\textsf{Rec})] Then, $ \PT[Q] = \PRep{\ProcV}{\PT} $, $ \Delta = \Delta' \compS \Typed{\AT{\Chan}{\Role}}{\LT} $, and $ \Gamma \compS \Typed{\Args[c]}{\Sort_c} \compS \Typed{\ProcV}{\TypeV} \vdash \PT \triangleright \Delta' \compS \Typed{\AT{\Chan}{\Role}}{\LT} $.
			By the induction hypothesis, $ \Gamma \compS \Typed{\ProcV}{\TypeV} \vdash \PT\Subst{\Expr[d]}{\Args[c]} \triangleright \Delta' \compS \Typed{\AT{\Chan}{\Role}}{\LT} $.
			By (\textsf{Rec}), then $ \Gamma \vdash \PT[Q]\Subst{\Expr[d]}{\Args[c]} \triangleright \Delta $.
		\item[(\textsf{Var})] Then, $ \PT[Q] = \ProcV $, $ \Gamma = \Gamma' \compS \Typed{\ProcV}{\Typed{\AT{\Chan}{\Role}}{\TypeV}} $, and $ \Delta = \Typed{\AT{\Chan}{\Role}}{\TypeV} $.
			By (\textsf{Var}), then $ \Gamma \vdash \PT[Q]\Subst{\Expr[d]}{\Args[c]} \triangleright \Delta $.
		\item[(\textsf{End})] Then, $ \PT[Q] = \PEnd $ and $ \Delta = \emptyset $.
			By (\textsf{End}), then $ \Gamma \vdash \PT[Q]\Subst{\Args[d]}{\Args[c]} \triangleright \Delta $.
		\textcolor{blue}{\item[(\textsf{Crash})] Then, $ \PT[Q] = \PCrash $ and $ \Unreliable{\Delta} $.
			By (\textsf{UCrash}), then $ \Gamma \vdash \PT[Q]\Subst{\Expr[d]}{\Args[c]} \triangleright \Delta $.}
		\item[(\textsf{Res2})] Then, $ \Set{ \Typed{\AT{\Chan}{\Role}}{\Proj{\GT}{\Role}} \mid \Role \in \Roles{\GT} } \compS \Set{ \MQ{\Chan}{\Role}{\Role'}{\emptyList} \mid \Role, \Role' \in \Roles{\GT'} \wedge \Role \neq \Role' } \stackrel{\Chan}{\Mapsto} \Delta' $, $ \PT[Q] = \PRes{\Chan[s]}{\PT} $, $ \Chan[s] \sharp \left( \Gamma, \Args[c], \Delta \right) $, $ \Typed{\Chan[a]}{G} \in \Gamma $, and $ \Gamma \compS \Typed{\Args[c]}{\Sort_c} \vdash \PT \triangleright \Delta \compS \Delta' $.
			By the induction hypothesis, $ \Gamma \vdash \PT\Subst{\Expr[d]}{\Args[c]} \triangleright \Delta \compS \Delta' $.
			By (\textsf{ResS}), then $ \Gamma \vdash \PT[Q]\Subst{\Expr[d]}{\Args[c]} \triangleright \Delta $.
		\item[(\textsf{MQComR})] Then, $ \PT[Q] = \MQ{\Chan}{\Role_1}{\Role_2}{\MessR{\Expr[v]}\#\Queue} $, $ \Delta = \MQ{\Chan}{\Role_1}{\Role_2}{\MessR{\Sort}\#\MT} $, $ \Gamma \Vdash \Typed{\Expr[v]}{\Sort} $, and $ \Gamma \compS \Typed{\Args[c]}{\Sort_c} \vdash \MQ{\Chan}{\Role_1}{\Role_2}{\Queue} \triangleright \MQ{\Chan}{\Role_1}{\Role_2}{\MT} $.
			Because $ \Gamma \compS \Typed{\Args[c]}{\Sort_c} \Vdash \Typed{\Expr[v]}{\Sort} $, $ \Chan \neq \Args[c] $.
			With $ \Gamma \Vdash \Typed{\Expr[d]}{\Sort_c} $, then $ \Gamma \Vdash \Typed{\Expr[v]\Subst{\Expr[d]}{\Args[c]}}{\Sort} $.
			By the induction hypothesis, $ \Gamma \vdash \MQ{\Chan}{\Role_1}{\Role_2}{\Queue}\Subst{\Expr[d]}{\Args{c}} \triangleright \MQ{\Chan}{\Role_1}{\Role_2}{\MT} $.
			By (\textsf{MQComR}), then $ \Gamma \vdash \PT[Q]\Subst{\Expr[d]}{\Args[c]} \triangleright \Delta $.
		\textcolor{blue}{\item[(\textsf{MQComU})] Then, $ \PT[Q] = \MQ{\Chan}{\Role_1}{\Role_2}{\MessU{\Label}{\Expr[v]}\#\Queue} $, $ \Delta = \MQ{\Chan}{\Role_1}{\Role_2}{\MessU{\Label'}{\Sort}\#\MT} $, $ \Label \compL \Label' $, $ \Gamma \Vdash \Typed{\Expr[v]}{\Sort} $, and $ \Gamma \compS \Typed{\Args[c]}{\Sort_c} \vdash \MQ{\Chan}{\Role_1}{\Role_2}{\Queue} \triangleright \MQ{\Chan}{\Role_1}{\Role_2}{\MT} $.
			Because $ \Gamma \compS \Typed{\Args[c]}{\Sort_c} \Vdash \Typed{\Expr[v]}{\Sort} $, $ \Chan \neq \Args[c] $.
			With $ \Gamma \Vdash \Typed{\Expr[d]}{\Sort_c} $, then $ \Gamma \Vdash \Typed{\Expr[v]\Subst{\Expr[d]}{\Args[c]}}{\Sort} $.
			By the induction hypothesis, $ \Gamma \vdash \MQ{\Chan}{\Role_1}{\Role_2}{\Queue}\Subst{\Expr[d]}{\Args{c}} \triangleright \MQ{\Chan}{\Role_1}{\Role_2}{\MT} $.
			By (\textsf{MQComU}), then $ \Gamma \vdash \PT[Q]\Subst{\Expr[d]}{\Args[c]} \triangleright \Delta $.}
		\item[(\textsf{MQBranR})] Then, $ \PT[Q] = \MQ{\Chan}{\Role_1}{\Role_2}{\MessBR{\Label}\#\Queue} $, $ \Delta = \MQ{\Chan}{\Role_1}{\Role_2}{\MessBR{\Label'}\#\MT} $, $ \Label \compL \Label' $, and $ \Gamma \compS \Typed{\Args[c]}{\Sort_c} \vdash \MQ{\Chan}{\Role_1}{\Role_2}{\Queue} \triangleright \MQ{\Chan}{\Role_1}{\Role_2}{\MT} $.
			Because $ \Gamma \compS \Typed{\Args[c]}{\Sort_c} \vdash \MQ{\Chan}{\Role_1}{\Role_2}{\Queue} \triangleright \MQ{\Chan}{\Role_1}{\Role_2}{\MT} $, $ \Chan \neq \Args[c] $.
			By the induction hypothesis, $ \Gamma \vdash \MQ{\Chan}{\Role_1}{\Role_2}{\Queue}\Subst{\Expr[d]}{\Args{c}} \triangleright \MQ{\Chan}{\Role_1}{\Role_2}{\MT} $.
			By (\textsf{MQBranR}), then $ \Gamma \vdash \PT[Q]\Subst{\Expr[d]}{\Args[c]} \triangleright \Delta $.
		\textcolor{blue}{\item[(\textsf{MQBranW})] Then, $ \PT[Q] = \MQ{\Chan}{\Role_1}{\Role_2}{\MessBW{\Label}\#\Queue} $, $ \Delta = \MQ{\Chan}{\Role_1}{\Role_2}{\MessBW{\Label'}\#\MT} $, $ \Label \compL \Label' $, and $ \Gamma \compS \Typed{\Args[c]}{\Sort_c} \vdash \MQ{\Chan}{\Role_1}{\Role_2}{\Queue} \triangleright \MQ{\Chan}{\Role_1}{\Role_2}{\MT} $.
			Because $ \Gamma \compS \Typed{\Args[c]}{\Sort_c} \vdash \MQ{\Chan}{\Role_1}{\Role_2}{\Queue} \triangleright \MQ{\Chan}{\Role_1}{\Role_2}{\MT} $, $ \Chan \neq \Args[c] $.
			By the induction hypothesis, $ \Gamma \vdash \MQ{\Chan}{\Role_1}{\Role_2}{\Queue}\Subst{\Expr[d]}{\Args{c}} \triangleright \MQ{\Chan}{\Role_1}{\Role_2}{\MT} $.
			By (\textsf{MQBranW}), then $ \Gamma \vdash \PT[Q]\Subst{\Expr[d]}{\Args[c]} \triangleright \Delta $.}
		\item[(\textsf{MQDeleg})] Then $ \PT[Q] = \MQ{\Chan}{\Role_1}{\Role_2}{\AT{\Chan'}{\Role}\#\Queue} $, $ \Delta = \MQ{\Chan}{\Role_1}{\Role_2}{\AT{\Chan'}{\Role}\#\MT} $, and $ \Gamma \compS \Typed{\Args[c]}{\Sort_c} \vdash \MQ{\Chan}{\Role_1}{\Role_2}{\Queue} \triangleright \MQ{\Chan}{\Role_1}{\Role_2}{\MT} $.
			Because $ \Gamma \compS \Typed{\Args[c]}{\Sort_c} \vdash \MQ{\Chan}{\Role_1}{\Role_2}{\Queue} \triangleright \MQ{\Chan}{\Role_1}{\Role_2}{\MT} $, $ \Chan \neq \Args[c] $.
			By the induction hypothesis, $ \Gamma \vdash \MQ{\Chan}{\Role_1}{\Role_2}{\Queue}\Subst{\Expr[d]}{\Args{c}} \triangleright \MQ{\Chan}{\Role_1}{\Role_2}{\MT} $.
			By (\textsf{MQDeleg}), then $ \Gamma \vdash \PT[Q]\Subst{\Expr[d]}{\Args[c]} \triangleright \Delta $.
		\item[(\textsf{MQNil})] Then, $ \PT[Q] = \MQ{\Chan}{\Role_1}{\Role_2}{\emptyList} $ and $ \Delta = \MQ{\Chan}{\Role_1}{\Role_2}{\emptyList} $.
			By (\textsf{MQNil}), then we have $ \Gamma \vdash \PT[Q]\Subst{\Expr[d]}{\Args[c]} \triangleright \Delta $.
			\qedhere
	\end{description}
\end{proof}

\emph{Subject reduction} tells us that derivatives of well-typed systems are again well-typed.
This ensures that our formalism can be used to analyse processes by static type checking.
We extend subject reduction such that it provides some information on how the session environment evolves alongside reductions of the system using $ \stackrel{\Chan}{\mapsto} $.
In Figure~\ref{fig:reductionSessionEnv} we define the relation $ \stackrel{\Chan}{\mapsto} $ between session environments that emulates the reduction semantics.

\emph{Coherence} intuitively describes that a session environment captures all local endpoints of a collection of global types.
Since we capture all relevant global types in the global environment, we define
coherence on pairs of global and session environments.

\begin{defi}[Coherence]
	\label{def:coherence}
	The type environments $ \Gamma, \Delta $ are \emph{coherent} if, for all session channels $ \Chan $ in $ \Delta $, there exists a global type $ G $ in $ \Gamma $ such that the restriction of $ \Delta $ on assignments with $ \Chan $ is the set $ \Delta' $ such that:
	\begin{align*}
		\Set{ \Typed{\AT{\Chan}{\Role}}{\Proj{\GT}{\Role}} \mid \Role \in \Roles{\GT} } \compS \Set{ \MQ{\Chan}{\Role}{\Role'}{\emptyList} \mid \Role, \Role' \in \Roles{\GT} } \stackrel{\Chan}{\Mapsto} \Delta'
	\end{align*}
\end{defi}

We use $ \stackrel{\Chan}{\Mapsto} $ in the above definition to define coherence for systems that already performed some steps.
We can now prove subject reduction.

\begin{thm}[Subject Reduction]
	\label{thm:subjectReduction}
	If $ \Gamma \vdash \PT \triangleright \Delta $, $ \Gamma, \Delta $ are coherent, and $ \PT \step \PT' $, then there is some $ \Delta' $ such that $ \Gamma \vdash \PT' \triangleright \Delta' $.
\end{thm}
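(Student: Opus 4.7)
The plan is to proceed by induction on the derivation of the reduction $\PT \step \PT'$, and for each reduction rule to inspect the typing derivation of $\PT$ in order to construct both a witness $\Delta'$ and a typing derivation for $\PT'$. The key observation is that the relation $\stackrel{\Chan}{\mapsto}$ on session environments in Figure~\ref{fig:reductionSessionEnv} is designed to mirror the process reductions of Figure~\ref{fig:semanticsPartA} exactly, so in most cases one takes $\Delta'$ to be the environment obtained by a single $\stackrel{\Chan}{\mapsto}$-step (or several, where a session contains several actors) on the component of $\Delta$ describing the relevant session. Coherence is then preserved because Definition~\ref{def:coherence} already quantifies over the reflexive transitive closure $\stackrel{\Chan}{\Mapsto}$.

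For the base cases involving communication inside a session, the strategy is uniform: using the clear distinction of typing rules by outermost operator, I would invert the typing derivation of $\PT$ to expose the typing of the redex and its continuation, then re-apply the typing rules to the reduct. For sender rules (\textsf{RSend}), (\textsf{USend}), (\textsf{RSel}), (\textsf{WSel}) and for (\textsf{Deleg}), the session environment gains a message type at the end of the appropriate queue (or queues, in the \weakR case), and the actor's local type advances past its prefix. For receiver rules (\textsf{RGet}), (\textsf{UGet}), (\textsf{RBran}), (\textsf{WBran}) and (\textsf{SRecv}), the queue loses its head, and the continuation is typed using the Substitution Lemma~\ref{lem:substitution} to handle the replacement of the bound name by the received value (or delegated channel/role). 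The label-matching side conditions on the reductions line up with the $\compL$ constraints in the runtime typing rules (\textsf{MQComU}), (\textsf{MQBranR}), (\textsf{MQBranW}) and with the unambiguity assumption on $\compL$, which guarantees sort agreement.

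The genuinely failure-related cases, (\textsf{USkip}), (\textsf{ML}), (\textsf{WSkip}) and (\textsf{Crash}), are where Condition~\ref{cond:all} does the work. For (\textsf{USkip}) and (\textsf{WSkip}) the local type of the actor simply advances to the continuation (respectively the default continuation), matching the corresponding $\stackrel{\Chan}{\mapsto}$ rules; the default value/branch is typed by the premise of the original typing rule that already checked its sort/label. For (\textsf{ML}) the head of a message queue is dropped, mirrored at the type level by (\textsf{ML}) on session environments; here Condition~\ref{cond:all}.\ref{cond:fpMLifffpUSkip} is not needed for subject reduction itself but shows up indirectly by ensuring the environment stays consistent with a matching skip. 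For (\textsf{Crash}), Condition~\ref{cond:all}.\ref{cond:crash} gives $\Unreliable{\PT}$, which by inspection of the typing rules implies $\Unreliable{\Delta_{\PT}}$ for the restricted session environment of the crashing subprocess, so rule (\textsf{Crash}) in Figure~\ref{fig:typingRules} applies to type the result $\PCrash$ with the same environment.

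The remaining inductive cases are routine: for (\textsf{Par}), (\textsf{Res}) the induction hypothesis on the reducing subterm yields a new environment which recombines via (\textsf{Par}) and (\textsf{Res1})/(\textsf{Res2}); for (\textsf{If-T}), (\textsf{If-F}) and (\textsf{Rec}) the branches/unfolding are already typed by the premises, using Lemma~\ref{lem:substitution} extended to process variables for (\textsf{Rec}); and for (\textsf{Struc}) I would appeal to Subject Congruence (Lemma~\ref{lem:structuralCongruencePreservesJudgement}). The one delicate case is (\textsf{Init}), since it creates a fresh session channel, freshly projects the global type $\GT$ stored at $\Typed{\Chan[a]}{\GT} \in \Gamma$ onto each role, and installs empty message queues; here $\Delta'$ is exactly the environment built from $\Set{\Typed{\AT{\Chan}{\Role}}{\Proj{\GT}{\Role}}}_{\Role} \compS \Set{\MQ{\Chan}{\Role}{\Role'}{\emptyList}}_{\Role \neq \Role'}$, and rule (\textsf{Res2}) is tailored to type the result. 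I expect this (\textsf{Init}) case and the bookkeeping in (\textsf{WSel}), which simultaneously updates one actor together with all queues in $\Role[R]$, to be the main obstacles, together with verifying that the inverse-typing analyses of each case yield exactly the shape assumed by the corresponding $\stackrel{\Chan}{\mapsto}$ rule.
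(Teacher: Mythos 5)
Your overall strategy coincides with the paper's: induction on the derivation of $\PT \step \PT'$, inversion of the typing derivation in each case, mirroring each process reduction by the corresponding rule for $\stackrel{\Chan}{\mapsto}$ on session environments, Lemma~\ref{lem:substitution} for the receiver cases, Condition~\ref{cond:all}.\ref{cond:crash} for (\textsf{Crash}), Lemma~\ref{lem:structuralCongruencePreservesJudgement} for (\textsf{Struc}), and Rule~(\textsf{Res2}) for (\textsf{Init}). Your observation that coherence of $\Delta'$ follows because Definition~\ref{def:coherence} is closed under $\stackrel{\Chan}{\Mapsto}$ is also exactly the paper's argument.

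There is, however, one genuine gap in the case you dismiss as routine. In the case of Rule~(\textsf{Par}) you have $\PT = \PPar{\PT[Q]_1}{\PT[Q]_2}$ with $\Delta = \Delta_{\PT[Q]_1} \compS \Delta_{\PT[Q]_2}$ and $\PT[Q]_1 \step \PT[Q]_1'$, and you propose to apply the induction hypothesis to $\PT[Q]_1$ with $\Delta_{\PT[Q]_1}$. But the induction hypothesis, as you state the theorem, requires $\Gamma, \Delta_{\PT[Q]_1}$ to be \emph{coherent}, and in general it is not: $\Delta_{\PT[Q]_1}$ may contain only some of the actors and queues of a session (say, the sender's endpoint, while the matching receiver and the queue sit in $\Delta_{\PT[Q]_2}$), so the set of assignments for that session channel in $\Delta_{\PT[Q]_1}$ alone is not reachable from the full projection of any $\GT$ via $\stackrel{\Chan}{\Mapsto}$. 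Since coherence is used essentially in several base cases (\eg in (\textsf{RGet}) to conclude that the sort in the queue equals the sort expected by the receiver, and in (\textsf{RBran}) to find the matching branch), you cannot simply drop it from the hypothesis either. The paper resolves this by proving a strengthened statement in parallel: a second variant of the claim in which coherence is replaced by \emph{weak coherence} ($\Gamma, \Delta$ is weakly coherent if $\Gamma, \Delta \compS \Delta''$ is coherent for some $\Delta''$), which \emph{is} inherited by the components of a $\compS$-decomposition; the (\textsf{Par}) case then invokes the weakly coherent variant of the induction hypothesis. Your proof needs this (or an equivalent) strengthening to go through. A second, minor point: for (\textsf{Rec}) you appeal to Lemma~\ref{lem:substitution} ``extended to process variables,'' but that lemma is only proved for names of a given sort; the paper instead performs the unfolding by replacing the occurrences of Rule~(\textsf{Var}) in the proof tree directly, and you would have to either do the same or actually prove the extended substitution lemma.
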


The proof is by induction on the derivation of $ \PT \step \PT' $.
In every case, we use the information about the structure of the processes to generate partial proof trees for the respective typing judgement.
Additionally, we use Condition~\ref{cond:all}.\ref{cond:crash} to ensure that the type environment of a crashed process cannot contain the types of reliable communication prefixes.

For the proof of subject reduction we further strengthen its goal and show additionally that there is some $ \Chan $ such that $ \Gamma, \Delta' $ is coherent and $ \Delta \stackrel{\Chan}{\Mapsto} \Delta' $, \ie that the session environment evolves by mimicking the respective reduction step and that this emulation reduces the session environment modulo $ \stackrel{\Chan}{\Mapsto} $ \wrt a single session $ \Chan $.
Moreover we use an additional goal---with weak coherence instead of coherence---to obtain a stronger induction hypothesis for the case of Rule~(\textsf{Par}).

\begin{defi}[Weak Coherence]
	The type environments $ \Gamma, \Delta $ are \emph{weakly coherent} if there exists some $ \Delta' $ such that $ \Gamma, \Delta \compS \Delta' $ are coherent.
\end{defi}

Ultimately, we are however interested into coherence.
Note that obviously the coherent case implies the respective weakly coherent case.
Our strengthened goal for subject reduction thus becomes:

\[ \begin{array}{c}
	\Gamma \vdash \PT \triangleright \Delta \wedge \Gamma, \Delta \text{ are coherent} \wedge \PT \step \PT' \longrightarrow\\
	\exists \Delta'\logdot \Gamma \vdash \PT' \triangleright \Delta' \wedge \Gamma, \Delta' \text{ are coherent} \wedge \Delta \stackrel{\Chan}{\Mapsto} \Delta'\\
	\text{and}\\
	\Gamma \vdash \PT \triangleright \Delta \wedge \Gamma, \Delta \text{ are weakly coherent} \wedge \PT \step \PT' \longrightarrow\\
	\exists \Delta'\logdot \Gamma \vdash \PT' \triangleright \Delta' \wedge \Gamma, \Delta' \text{ are weakly coherent} \wedge \Delta \stackrel{\Chan}{\Mapsto} \Delta'
\end{array} \]

\begin{proof}[Proof of Theorem~\ref{thm:subjectReduction}]
	The proof is by induction on the reduction $ \PT \step \PT' $ that is derived from the rules of the Figures~\ref{fig:semanticsPartA} and \ref{fig:semanticsPartB}.
	\begin{description}
		%%%%%%%%%%
		%  Init  %
		%%%%%%%%%%
		\item[Case of Rule~(\textsf{Init})] In this case
			\begin{align*}
				\PT &= \PPar{\PReq{\Chan[a]}{\Role[n]}{\Chan}{\PT[Q]_{\Role[n]}}}{\prod_{1 \leq \Role[i] \leq \Role[n] - 1} \PAcc{\Chan[a]}{\Role[i]}{\Chan}{\PT[Q]_{\Role[i]}}}\\
				\PT' &= \PRes{\Chan[s]}{\PT''}\\
				\PT'' &= \PPar{\prod_{1 \leq \Role[i] \leq \Role[n]} \PT[Q]_{\Role[i]}}{\prod_{1 \leq \Role[i], \Role[j] \leq \Role[n], \Role[i] \neq \Role[j]} \MQ{\Chan[s]}{\Role[i]}{\Role[j]}{[]}}
			\end{align*}
			$ \Chan[a] \neq \Chan $, and we use alpha conversion to ensure that $ \Chan \sharp \left( \Gamma, \Delta \right) $.
			By the typing Rules~(\textsf{Par}), (\textsf{Req}), and (\textsf{Acc}), $ \Gamma \vdash \PT \triangleright \Delta $ implies that there are $ \GT $, $ \Delta_{\PT[Q]_1}, \ldots, \Delta_{\PT[Q]_{\Role[n]}} $ such that $ \Delta = \Delta_{\PT[Q]_1} \compS \ldots \compS \Delta_{\PT[Q]_{\Role[n]}} $, $ \Typed{\Chan[a]}{\GT} \in \Gamma $, $ \Gamma \vdash \PT[Q]_{\Role[i]} \triangleright \Delta_{\PT[Q]_{\Role[i]}} \compS \Typed{\AT{\Chan[s]}{\Role[i]}}{\Proj{\GT}{\Role[i]}} $ for all $ 1 \leq \Role[i] \leq \Role[n] $, and $ \Length{\Roles{\GT}} = \Role[n] $.
			By Rule~(\textsf{MQNil}), $ \Gamma \vdash \MQ{\Chan[s]}{\Role[i]}{\Role[j]}{\emptyList} \triangleright \MQ{\Chan[s]}{\Role[i]}{\Role[j]}{\emptyList} $ for all $ \Role[i], \Role[j] \in \Roles{\GT} $ with $ \Role[i] \neq \Role[j] $.
			Since $ \Chan \sharp \Delta $, the composition $ \Delta \compS \Delta_{\Chan} $ for $ \Delta_{\Chan} = \Set{ \Typed{\AT{\Chan[s]}{\Role[i]}}{\Proj{\GT}{\Role[i]}}, \MQ{\Chan[s]}{\Role[i]}{\Role[j]}{\emptyList} \mid \Role[i], \Role[j] \in \Roles{\GT} \wedge \Role[i] \neq \Role[j] } $ is defined.
			By Rule~(\textsf{Par}), then $ \Gamma \vdash \PT'' \triangleright \Delta \compS \Delta_{\Chan} $.
			By Rule~(\textsf{Res2}), where we use reflexivity to obtain $ \Delta_{\Chan} \stackrel{\Chan}{\Mapsto} \Delta_{\Chan} $, then $ \Gamma \vdash \PT' \triangleright \Delta $.
			\\
			Since $ \Delta' = \Delta $, $ \Gamma, \Delta' $ is coherent and, by reflexivity, $ \Delta \stackrel{\Chan}{\Mapsto} \Delta' $.
		%%%%%%%%%%%
		%  RSend  %
		%%%%%%%%%%%
		\item[Case of Rule~(\textsf{RSend})] In this case $ \PT = \PPar{\PSendR{\Chan}{\Role_1}{\Role_2}{\Expr[y]}{\PT[Q]}}{\MQ{\Chan}{\Role_1}{\Role_2}{\Queue}} $, $ \Eval{\Expr[y]} = \Expr[v] $, and $ \PT' = \PPar{\PT[Q]}{\MQ{\Chan}{\Role_1}{\Role_2}{\Queue\#\MessR{\Expr[v]}}} $.
			By the Rules~(\textsf{Par}), (\textsf{RSend}), and the typing rules for message queues, $ \Gamma \vdash \PT \triangleright \Delta $ implies that there are $ \Delta_{\PT[Q]}, \Sort, \LT, \MT $ such that $ \Delta = \Delta_{\PT[Q]} \compS \Typed{\AT{\Chan}{\Role_1}}{\LSendR{\Role_2}{\Sort}{\LT}} \compS \MQ{\Chan}{\Role_1}{\Role_2}{\MT} $, $ \Gamma \Vdash \Typed{\Expr[y]}{\Sort} $, $ \Gamma \vdash \PT[Q] \triangleright \Delta_{\PT[Q]} \compS \Typed{\AT{\Chan}{\Role_1}}{\LT} $, and $ \Gamma \vdash \MQ{\Chan}{\Role_1}{\Role_2}{\Queue} \triangleright \MQ{\Chan[s]}{\Role_1}{\Role_2}{\MT} $.
			By Rule~(\textsf{MQComR}), then $ \Gamma \vdash \MQ{\Chan}{\Role_1}{\Role_2}{\Queue\#\MessR{\Expr[y]}} \triangleright \MQ{\Chan}{\Role_1}{\Role_2}{\MT\#\MessR{\Sort}} $.
			Since $ \Delta_{\PT[Q]} \compS \Typed{\AT{\Chan}{\Role_1}}{\LSendR{\Role_2}{\Sort}{\LT}} \compS \MQ{\Chan}{\Role_1}{\Role_2}{\MT} $ is defined, so is $ \Delta' = \Delta_{\PT[Q]} \compS \Typed{\AT{\Chan}{\Role_1}}{\LT} \compS \MQ{\Chan}{\Role_1}{\Role_2}{\MT\#\MessR{\Sort}} $.
			By Rule~(\textsf{Par}), then $ \Gamma \vdash \PT' \triangleright \Delta' $.
			\\
			By Rule~(\textsf{RSend}) of Figure~\ref{fig:reductionSessionEnv}, then $ \Delta \stackrel{\Chan}{\Mapsto} \Delta' $.
			Since $ \Gamma, \Delta $ are coherent and by transitivity of $ \stackrel{\Chan}{\Mapsto} $, then $ \Gamma, \Delta' $ are coherent.
		%%%%%%%%%%
		%  RGet  %
		%%%%%%%%%%
		\item[Case of Rule~(\textsf{RGet})] In this case $ \PT = \PPar{\PGetR{\Chan}{\Role_1}{\Role_2}{\Args}{\PT}}{\MQ{\Chan}{\Role_2}{\Role_1}{\MessR{\Expr[v]}\#\Queue}} $, $ \PT' = \PPar{\PT[Q]\Subst{\Expr[v]}{\Args}}{\MQ{\Chan}{\Role_2}{\Role_1}{\Queue}} $, and we use alpha conversion to ensure that $ \Args \sharp \left( \Gamma, \Delta, \Chan \right) $.
			By (\textsf{Par}), (\textsf{RGet}), and the typing rules for message queues, $ \Gamma \vdash \PT \triangleright \Delta $ implies that there are $ \Delta_{\PT[Q]}, \Sort_1, \Sort_2, \LT, \MT $ such that $ \Delta = \Delta_{\PT[Q]} \compS \Typed{\AT{\Chan}{\Role_1}}{\LGetR{\Role_2}{\Sort_1}{\LT}} \compS \MQ{\Chan}{\Role_2}{\Role_1}{\MessR{\Sort_2}\#\MT} $, $ \Gamma \compS \Typed{\Args}{\Sort_1} \vdash \PT[Q] \triangleright \Delta_{\PT[Q]} \compS \Typed{\AT{\Chan}{\Role_1}}{\LT} $, $ \Gamma \Vdash \Typed{\Expr[v]}{\Sort_2} $, and $ \Gamma \vdash \MQ{\Chan}{\Role_2}{\Role_1}{\Queue} \triangleright \MQ{\Chan}{\Role_2}{\Role_1}{\MT} $.
			Since $ \Gamma, \Delta $ are coherent, $ \Sort_1 = \Sort_2 $.
			By Lemma \ref{lem:substitution}, then $ \Gamma \vdash \PT[Q]\Subst{\Expr[v]}{\Args} \triangleright \Delta_{\PT[Q]} \compS \Typed{\AT{\Chan}{\Role_1}}{\LT} $.
			Since $ \Delta_{\PT[Q]_j} \compS \Typed{\AT{\Chan}{\Role_1}}{\LGetR{\Role_2}{\Sort}{\LT}} \compS \MQ{\Chan}{\Role_2}{\Role_1}{\MessR{\Sort_2}\#\MT} $ is defined, so is $ \Delta' = \Delta_{\PT[Q]} \compS \Typed{\AT{\Chan}{\Role_1}}{\LT} \compS \MQ{\Chan}{\Role_2}{\Role_1}{\MT} $.
			By Rule~(\textsf{Par}), then $ \Gamma \vdash \PT' \triangleright \Delta' $.
			\\
			By Rule (\textsf{RGet}) of Figure~\ref{fig:reductionSessionEnv}, then $ \Delta \stackrel{\Chan}{\Mapsto} \Delta' $.
			Since $ \Gamma, \Delta $ are coherent and by transitivity of $ \stackrel{\Chan}{\Mapsto} $, then $ \Gamma, \Delta' $ are coherent.
		%%%%%%%%%%%
		%  USend  %
		%%%%%%%%%%%
		\textcolor{blue}{\item[Case of Rule~(\textsf{RSend})] In this case $ \PT = \PPar{\PSendU{\Chan}{\Role_1}{\Role_2}{\Label}{\Expr[y]}{\PT[Q]}}{\MQ{\Chan}{\Role_1}{\Role_2}{\Queue}} $, $ \Eval{\Expr[y]} = \Expr[v] $, and $ \PT' = \PPar{\PT[Q]}{\MQ{\Chan}{\Role_1}{\Role_2}{\Queue\#\MessU{\Label}{\Expr[v]}}} $.
			By the Rules~(\textsf{Par}), (\textsf{USend}), and the typing rules for message queues, $ \Gamma \vdash \PT \triangleright \Delta $ implies that there are $ \Delta_{\PT[Q]}, \Label', \Sort, \LT, \MT $ such that $ \Delta = \Delta_{\PT[Q]} \compS \Typed{\AT{\Chan}{\Role_1}}{\LSendU{\Role_2}{\Label'}{\Sort}{\LT}} \compS \MQ{\Chan}{\Role_1}{\Role_2}{\MT} $, $ \Label \compL \Label' $, $ \Typed{\Label'}{\Sort} \in \Gamma $, $ \Gamma \Vdash \Typed{\Expr[y]}{\Sort} $, $ \Gamma \vdash \PT[Q] \triangleright \Delta_{\PT[Q]} \compS \Typed{\AT{\Chan}{\Role_1}}{\LT} $, and $ \Gamma \vdash \MQ{\Chan}{\Role_1}{\Role_2}{\Queue} \triangleright \MQ{\Chan[s]}{\Role_1}{\Role_2}{\MT} $.
			By Rule~(\textsf{MQComU}), then $ \Gamma \vdash \MQ{\Chan}{\Role_1}{\Role_2}{\Queue\#\MessU{\Label}{\Expr[y]}} \triangleright \MQ{\Chan}{\Role_1}{\Role_2}{\MT\#\MessU{\Label'}{\Sort}} $.
			Since $ \Delta_{\PT[Q]} \compS \Typed{\AT{\Chan}{\Role_1}}{\LSendU{\Role_2}{\Label'}{\Sort}{\LT}} \compS \MQ{\Chan}{\Role_1}{\Role_2}{\MT} $ is defined, so is $ \Delta' = \Delta_{\PT[Q]} \compS \Typed{\AT{\Chan}{\Role_1}}{\LT} \compS \MQ{\Chan}{\Role_1}{\Role_2}{\MT\#\MessU{\Label'}{\Sort}} $.
			By Rule~(\textsf{Par}), then $ \Gamma \vdash \PT' \triangleright \Delta' $.
			\\
			By Rule~(\textsf{USend}) of Figure~\ref{fig:reductionSessionEnv}, then $ \Delta \stackrel{\Chan}{\Mapsto} \Delta' $.
			Since $ \Gamma, \Delta $ are coherent and by transitivity of $ \stackrel{\Chan}{\Mapsto} $, then $ \Gamma, \Delta' $ are coherent.}
		%%%%%%%%%%
		%  UGet  %
		%%%%%%%%%%
		\textcolor{blue}{\item[Case of Rule~(\textsf{UGet})] Here $ \PT = \PPar{\PGetU{\Chan}{\Role_1}{\Role_2}{\Label}{\Expr[dv]}{\Args}{\PT}}{\MQ{\Chan}{\Role_2}{\Role_1}{\MessU{\Label'}{\Expr[v]}\#\Queue}} $, $ \PT' = \PPar{\PT[Q]\Subst{\Expr[v]}{\Args}}{\MQ{\Chan}{\Role_2}{\Role_1}{\Queue}} $, $ \Label \compL \Label' $, and (using alpha conversion) $ \Args \sharp \left( \Gamma, \Delta, \Chan \right) $.
			By (\textsf{Par}), (\textsf{UGet}), and the typing rules for message queues, $ \Gamma \vdash \PT \triangleright \Delta $ implies that there are $ \Delta_{\PT[Q]}, \Sort_1, \Sort_2, \LT, \MT $ such that $ \Delta = \Delta_{\PT[Q]} \compS \Typed{\AT{\Chan}{\Role_1}}{\LGetU{\Role_2}{\Label''}{\Sort_1}{\LT}} \compS \MQ{\Chan}{\Role_2}{\Role_1}{\MessU{\Label'''}{\Sort_2}\#\MT} $, $ \Label \compL \Label'' $, $ \Label' \compL \Label'' $, $ \Typed{\Label''}{\Sort_1} \in \Gamma $, $ \Typed{\Label'''}{\Sort_2} \in \Gamma $, $ \Gamma \compS \Typed{\Args}{\Sort_1} \vdash \PT[Q] \triangleright \Delta_{\PT[Q]} \compS \Typed{\AT{\Chan}{\Role_1}}{\LT} $, $ \Gamma \Vdash \Typed{\Expr[v]}{\Sort_2} $, and $ \Gamma \vdash \MQ{\Chan}{\Role_2}{\Role_1}{\Queue} \triangleright \MQ{\Chan}{\Role_2}{\Role_1}{\MT} $.
			Since $ \Typed{\Label''}{\Sort_1} \in \Gamma $, $ \Typed{\Label'''}{\Sort_2} \in \Gamma $, and $ \Label'' \compL \Label \compL \Label' \compL \Label''' $, we have $ \Label'' = \Label''' $ and $ \Sort_1 = \Sort_2 $.
			By Rule~(\textsf{Par}), then $ \Gamma \vdash \PT' \triangleright \Delta' $.
			\\
			By Rule (\textsf{UGet}) of Figure~\ref{fig:reductionSessionEnv}, then $ \Delta \stackrel{\Chan}{\Mapsto} \Delta' $.
			Since $ \Gamma, \Delta $ are coherent and by transitivity of $ \stackrel{\Chan}{\Mapsto} $, then $ \Gamma, \Delta' $ are coherent.}
		%%%%%%%%%%%
		%  USkip  %
		%%%%%%%%%%%
		\textcolor{blue}{\item[Case of Rule~(\textsf{USkip})] In this case $ \PT = \PGetU{\Chan}{\Role_1}{\Role_2}{\Label}{\Expr[dv]}{\Args}{\PT} $, $ \PT' = \PT[Q]\Subst{\Expr[dv]}{\Args} $, and we use alpha conversion to ensure that $ \Args \sharp \left( \Gamma, \Delta, \Chan \right) $.
			By (\textsf{UGet}), $ \Gamma \vdash \PT \triangleright \Delta $ implies that there are $ \Delta_{\PT[Q]}, \Sort, \LT $ such that $ \Delta = \Delta_{\PT[Q]} \compS \Typed{\AT{\Chan}{\Role_1}}{\LGetU{\Role_2}{\Label'}{\Sort}{\LT}} $, $ \Label \compL \Label' $, $ \Typed{\Label'}{\Sort} \in \Gamma $, $ \Gamma \Vdash \Typed{\Expr[dv]}{\Sort} $, and $ \Gamma \compS \Typed{\Args}{\Sort} \vdash \PT[Q] \triangleright \Delta_{\PT[Q]} \compS \Typed{\AT{\Chan}{\Role_1}}{\LT} $.
			By Lemma \ref{lem:substitution}, then $ \Gamma \vdash \PT' \triangleright \Delta' $ with $ \Delta' = \Delta_{\PT[Q]} \compS \Typed{\AT{\Chan}{\Role_1}}{\LT} $.
			\\
			By Rule (\textsf{USkip}) of Figure~\ref{fig:reductionSessionEnv}, then $ \Delta \stackrel{\Chan}{\Mapsto} \Delta' $.
			Since $ \Gamma, \Delta $ are coherent and by transitivity of $ \stackrel{\Chan}{\Mapsto} $, then $ \Gamma, \Delta' $ are coherent.}
		%%%%%%%%
		%  ML  %
		%%%%%%%%
		\textcolor{blue}{\item[Case of Rule~(\textsf{ML})] In this case $ \PT = \MQ{\Chan}{\Role_1}{\Role_2}{\MessU{\Label}{\Expr[v]}\#\Queue} $, $ \PT' = \MQ{\Chan}{\Role_1}{\Role_2}{\Queue} $.
			By the typing rules for message queues, $ \Gamma \vdash \PT \triangleright \Delta $ implies that there are $ \Sort, \MT $ such that $ \Delta = \MQ{\Chan}{\Role_1}{\Role_2}{\MessU{\Label'}{\Sort}\#\MT} $, $ \Label \compL \Label' $, $ \Typed{\Label'}{\Sort} \in \Gamma $, and $ \Gamma \vdash \PT' \triangleright \Delta' $ with $ \Delta' = \MQ{\Chan}{\Role_1}{\Role_2}{\MT} $.
			\\
			By Rule (\textsf{ML}) of Figure~\ref{fig:reductionSessionEnv}, then $ \Delta \stackrel{\Chan}{\Mapsto} \Delta' $.
			Since $ \Gamma, \Delta $ are coherent and by transitivity of $ \stackrel{\Chan}{\Mapsto} $, then $ \Gamma, \Delta' $ are coherent.}
		%%%%%%%%%%
		%  RSel  %
		%%%%%%%%%%
		\item[Case of Rule~(\textsf{RSel})] In this case $ \PT = \PPar{\PSelR{\Chan}{\Role_1}{\Role_2}{\Label}{\PT[Q]}}{\MQ{\Chan}{\Role_1}{\Role_2}{\Queue}} $, and $ \PT' = \PPar{\PT[Q]}{\MQ{\Chan}{\Role_1}{\Role_2}{\Queue\#\MessBR{\Label}}} $.
			By the Rules~(\textsf{Par}), (\textsf{RSel}), and the typing rules for message queues, $ \Gamma \vdash \PT \triangleright \Delta $ implies that there are $ \Delta_{\PT[Q]}, \indexSet, j, \MT $ and for all $ i \in \indexSet $ there are $ \Label_i, \LT_i $ such that $ \Delta = \Delta_{\PT[Q]} \compS \Typed{\AT{\Chan}{\Role_1}}{\LSelR{\Role_2}{\Set{ \Label_i, \LT_i }_{i \in \indexSet}}} \compS \MQ{\Chan}{\Role_1}{\Role_2}{\MT} $, $ j \in \indexSet $, $ \Label \compL \Label_j $, $ \Gamma \vdash \PT[Q] \triangleright \Delta_{\PT[Q]} \compS \Typed{\AT{\Chan}{\Role_1}}{\LT_j} $, and $ \Gamma \vdash \MQ{\Chan}{\Role_1}{\Role_2}{\Queue} \triangleright \MQ{\Chan[s]}{\Role_1}{\Role_2}{\MT} $.
			By Rule~(\textsf{MQBranR}), then $ \Gamma \vdash \MQ{\Chan}{\Role_1}{\Role_2}{\Queue\#\MessBR{\Label}} \triangleright \MQ{\Chan}{\Role_1}{\Role_2}{\MT\#\MessBR{\Label_j}} $.
			Since $ \Delta_{\PT[Q]} \compS \Typed{\AT{\Chan}{\Role_1}}{\LSelR{\Role_2}{\Set{ \Label_i, \LT_i }_{i \in \indexSet}}} \compS \MQ{\Chan}{\Role_1}{\Role_2}{\MT} $ is defined, so is $ \Delta' = \Delta_{\PT[Q]} \compS \Typed{\AT{\Chan}{\Role_1}}{\LT_j} \compS \MQ{\Chan}{\Role_1}{\Role_2}{\MT\#\MessBR{\Label_j}} $.
			By Rule~(\textsf{Par}), then $ \Gamma \vdash \PT' \triangleright \Delta' $.
			\\
			By Rule~(\textsf{RSel}) of Figure~\ref{fig:reductionSessionEnv}, then $ \Delta \stackrel{\Chan}{\Mapsto} \Delta' $.
			Since $ \Gamma, \Delta $ are coherent and by transitivity of $ \stackrel{\Chan}{\Mapsto} $, then $ \Gamma, \Delta' $ are coherent.
		%%%%%%%%%%%
		%  RBran  %
		%%%%%%%%%%%
		\item[Case of Rule~(\textsf{RBran})] In this case $ \PT = \PPar{\PBranR{\Chan}{\Role_1}{\Role_2}{\Set{ \Label_i.\PT[Q]_i }_{i \in \indexSet_1}}}{\MQ{\Chan}{\Role_2}{\Role_1}{\MessBR{\Label}\#\Queue}} $, $ j \in \indexSet_1 $, $ \Label \compL \Label_j $, and $ \PT' = \PPar{\PT[Q]_j}{\MQ{\Chan}{\Role_2}{\Role_1}{\Queue}} $.
			By the Rules~(\textsf{Par}), (\textsf{RBran}), and the typing rules for message queues, $ \Gamma \vdash \PT \triangleright \Delta $ implies that there are $ \Delta_{\PT[Q]}, \indexSet_2, \MT, \Label'' $ and for all $ i \in \indexSet_2 $ there are $ \Label'_i, \LT_i $ such that $ \Delta = \Delta_{\PT[Q]} \compS \Typed{\AT{\Chan}{\Role_1}}{\LBranR{\Role_2}{\Set{ \Label'_i, \LT_i }_{i \in \indexSet_2}}} \compS \MQ{\Chan}{\Role_2}{\Role_1}{\MessBR{\Label''}\#\MT} $, $ \Gamma \vdash \MQ{\Chan}{\Role_2}{\Role_1}{\Queue} \triangleright \MQ{\Chan[s]}{\Role_2}{\Role_1}{\MT} $, $ \Label_j \compL \Label'' $, and for all $ k \in \indexSet_2 $ exists some $ m \in \indexSet_1 $ such that $ \Label_m \compL \Label'_k $, $ \Gamma \vdash \PT[Q]_m \triangleright \Delta_{\PT[Q]} \compS \Typed{\AT{\Chan}{\Role_1}}{\LT_k} $.
			Since $ \Label \compL \Label_j \compL \Label'' $ and because $ \Gamma, \Delta $ are coherent, there is some $ n \in \indexSet_2 $ such that $ \Label_j \compL \Label'' = \Label_n $ and $ \Gamma \vdash \PT[Q]_j \triangleright \Delta_{\PT[Q]} \compS \Typed{\AT{\Chan}{\Role_1}}{\LT_n} $.
			Since $ \Delta_{\PT[Q]} \compS \Typed{\AT{\Chan}{\Role_1}}{\LBranR{\Role_2}{\Set{ \Label_i, \LT_i }_{i \in \indexSet}}} \compS \MQ{\Chan}{\Role_2}{\Role_1}{\MessBR{\Label''}\#\MT} $ is defined, so is $ \Delta' = \Delta_{\PT[Q]} \compS \Typed{\AT{\Chan}{\Role_1}}{\LT_n} \compS \MQ{\Chan}{\Role_2}{\Role_1}{\MT} $.
			By Rule~(\textsf{Par}), then $ \Gamma \vdash \PT' \triangleright \Delta' $.
			\\
			By Rule~(\textsf{RBran}) of Figure~\ref{fig:reductionSessionEnv}, then $ \Delta \stackrel{\Chan}{\Mapsto} \Delta' $.
			Since $ \Gamma, \Delta $ are coherent and by transitivity of $ \stackrel{\Chan}{\Mapsto} $, then $ \Gamma, \Delta' $ are coherent.
		%%%%%%%%%%
		%  WSel  %
		%%%%%%%%%%
		\textcolor{blue}{\item[Case of Rule~(\textsf{WSel})] In this case $ \PT = \PPar{\PSelW{\Chan}{\Role}{\Role[R]}{\Label}{\PT[Q]}}{\prod_{\Role_i \in \Role[R]} \MQ{\Chan}{\Role}{\Role_i}{\Queue}} $ and we have $ \PT' = \PPar{\PT[Q]}{\prod_{\Role_i \in \Role[R]} \MQ{\Chan}{\Role}{\Role_i}{\Queue_i\#\MessBW{\Label}}} $.
			Let $ \Role[R] = \Set{ \Role_1, \ldots, \Role[n] } $.
			By the Rules~(\textsf{Par}), (\textsf{WSel}), and the typing rules for message queues, $ \Gamma \vdash \PT \triangleright \Delta $ implies that there are $ \Delta_{\PT[Q]}, \indexSet, j, \MT_1, \ldots, \MT_n $ and for all $ i \in \indexSet $ there are $ \Label_i, \LT_i $ such that $ \Delta = \Delta_{\PT[Q]} \compS \Typed{\AT{\Chan}{\Role}}{\LSelW{\Role[R]}{\Set{ \Label_i, \LT_i }_{i \in \indexSet}}} \compS \MQ{\Chan}{\Role}{\Role_1}{\MT} \compS \ldots \compS \MQ{\Chan}{\Role}{\Role_n}{\MT} $, $ j \in \indexSet $, $ \Label \compL \Label_j $, $ \Gamma \vdash \PT[Q] \triangleright \Delta_{\PT[Q]} \compS \Typed{\AT{\Chan}{\Role}}{\LT_j} $, and $ \Gamma \vdash \MQ{\Chan}{\Role}{\Role_1}{\Queue_1} \triangleright \MQ{\Chan[s]}{\Role}{\Role_1}{\MT_1} $, \ldots, $ \Gamma \vdash \MQ{\Chan}{\Role}{\Role_n}{\Queue_n} \triangleright \MQ{\Chan[s]}{\Role}{\Role_n}{\MT_n} $.
			By Rule~(\textsf{MQBranW}), then $ \Gamma \vdash \MQ{\Chan}{\Role}{\Role_1}{\Queue_1\#\MessBW{\Label}} \triangleright \MQ{\Chan}{\Role}{\Role_1}{\MT_1\#\MessBW{\Label_j}} $, \ldots, $ \Gamma \vdash \MQ{\Chan}{\Role}{\Role_n}{\Queue_n\#\MessBW{\Label}} \triangleright \MQ{\Chan}{\Role}{\Role_n}{\MT_n\#\MessBW{\Label_j}} $.
			Since $ \Delta_{\PT[Q]} \compS \Typed{\AT{\Chan}{\Role}}{\LSelW{\Role[R]}{\Set{ \Label_i, \LT_i }_{i \in \indexSet}}} \compS \MQ{\Chan}{\Role}{\Role_1}{\MT_1} \compS \ldots \compS \MQ{\Chan}{\Role}{\Role_n}{\MT_n} $ is defined, so is $ \Delta' = \Delta_{\PT[Q]} \compS \Typed{\AT{\Chan}{\Role}}{\LT_j} \compS \MQ{\Chan}{\Role}{\Role_1}{\MT_1\#\MessBW{\Label_j}} \compS \ldots \compS \MQ{\Chan}{\Role}{\Role_n}{\MT_n\#\MessBW{\Label_j}} $.
			By Rule~(\textsf{Par}), then $ \Gamma \vdash \PT' \triangleright \Delta' $.
			\\
			By Rule~(\textsf{WSel}) of Figure~\ref{fig:reductionSessionEnv}, then $ \Delta \stackrel{\Chan}{\Mapsto} \Delta' $.
			Since $ \Gamma, \Delta $ are coherent and by transitivity of $ \stackrel{\Chan}{\Mapsto} $, then $ \Gamma, \Delta' $ are coherent.}
		%%%%%%%%%%%
		%  WBran  %
		%%%%%%%%%%%
		\textcolor{blue}{\item[Case of Rule~(\textsf{WBran})] Here $ \PT = \PPar{\PBranW{\Chan}{\Role_1}{\Role_2}{\Set{ \Label_i.\PT[Q]_i }_{i \in \indexSet_1, \LabelD}}}{\MQ{\Chan}{\Role_2}{\Role_1}{\MessBW{\Label}\#\Queue}} $, $ j \in \indexSet_1 $, $ \Label \compL \Label_j $, and $ \PT' = \PPar{\PT[Q]_j}{\MQ{\Chan}{\Role_2}{\Role_1}{\Queue}} $.
			By the Rules~(\textsf{Par}), (\textsf{WBran}), and the typing rules for message queues, $ \Gamma \vdash \PT \triangleright \Delta $ implies that there are $ \Delta_{\PT[Q]}, \LabelD', \indexSet_2, \MT, \Label'' $ and for all $ i \in \indexSet_2 $ there are $ \Label'_i, \LT_i $ such that $ \Delta = \Delta_{\PT[Q]} \compS \Typed{\AT{\Chan}{\Role_1}}{\LBranW{\Role_2}{\Set{ \Label'_i, \LT_i }_{i \in \indexSet_2, \LabelD'}}} \compS \MQ{\Chan}{\Role_2}{\Role_1}{\MessBW{\Label''}\#\MT} $, $ \LabelD \compL \LabelD' $, $ \Gamma \vdash \MQ{\Chan}{\Role_2}{\Role_1}{\Queue} \triangleright \MQ{\Chan[s]}{\Role_2}{\Role_1}{\MT} $, $ \Label_j \compL \Label'' $, and for all $ k \in \indexSet_2 $ exists some $ m \in \indexSet_1 $ such that $ \Label_m \compL \Label'_k $, $ \Gamma \vdash \PT[Q]_m \triangleright \Delta_{\PT[Q]} \compS \Typed{\AT{\Chan}{\Role_1}}{\LT_k} $.
			Since $ \Label \compL \Label_j \compL \Label'' $ and because $ \Gamma, \Delta $ are coherent, there is some $ n \in \indexSet_2 $ such that $ \Label_j \compL \Label'' = \Label_n $ and $ \Gamma \vdash \PT[Q]_j \triangleright \Delta_{\PT[Q]} \compS \Typed{\AT{\Chan}{\Role_1}}{\LT_n} $.
			Since $ \Delta_{\PT[Q]} \compS \Typed{\AT{\Chan}{\Role_1}}{\LBranW{\Role_2}{\Set{ \Label_i, \LT_i }_{i \in \indexSet, \LabelD'}}} \compS \MQ{\Chan}{\Role_2}{\Role_1}{\MessBW{\Label''}\#\MT} $ is defined, so is $ \Delta' = \Delta_{\PT[Q]} \compS \Typed{\AT{\Chan}{\Role_1}}{\LT_n} \compS \MQ{\Chan}{\Role_2}{\Role_1}{\MT} $.
			By Rule~(\textsf{Par}), then $ \Gamma \vdash \PT' \triangleright \Delta' $.
			\\
			By Rule~(\textsf{WBran}) of Figure~\ref{fig:reductionSessionEnv}, then $ \Delta \stackrel{\Chan}{\Mapsto} \Delta' $.
			Since $ \Gamma, \Delta $ are coherent and by transitivity of $ \stackrel{\Chan}{\Mapsto} $, then $ \Gamma, \Delta' $ are coherent.}
		%%%%%%%%%%%
		%  WSkip  %
		%%%%%%%%%%%
		\textcolor{blue}{\item[Case of Rule~(\textsf{WSkip})] In this case $ \PT = \PBranW{\Chan}{\Role_1}{\Role_2}{\Set{ \Label_i.\PT[Q]_i }_{i \in \indexSet_1, \LabelD}} $ and $ \PT' = \PT[Q]_{\default} $.
			By the Rule(\textsf{WBran}), $ \Gamma \vdash \PT \triangleright \Delta $ implies that there are $ \Delta_{\PT[Q]}, \LabelD', \indexSet_2 $ and for all $ i \in \indexSet_2 $ there are $ \Label'_i, \LT_i $ such that $ \Delta = \Delta_{\PT[Q]} \compS \Typed{\AT{\Chan}{\Role_1}}{\LBranW{\Role_2}{\Set{ \Label'_i, \LT_i }_{i \in \indexSet_2, \LabelD'}}} $, $ \LabelD \compL \LabelD' $, and for all $ k \in \indexSet_2 $ exists some $ m \in \indexSet_1 $ such that $ \Label_m \compL \Label'_k $, $ \Gamma \vdash \PT[Q]_m \triangleright \Delta_{\PT[Q]} \compS \Typed{\AT{\Chan}{\Role_1}}{\LT_k} $.
			Since $ \LabelD \compL \LabelD' $, there is some $ n \in \indexSet_2 $ such that $ \LabelD \compL \LabelD' = \Label_n $ and $ \Gamma \vdash \PT' \triangleright \Delta' $ with $ \Delta' = \Delta_{\PT[Q]} \compS \Typed{\AT{\Chan}{\Role_1}}{\LT_{\default}} $.
			\\
			By Rule~(\textsf{WSkip}) of Figure~\ref{fig:reductionSessionEnv}, then $ \Delta \stackrel{\Chan}{\Mapsto} \Delta' $.
			Since $ \Gamma, \Delta $ are coherent and by transitivity of $ \stackrel{\Chan}{\Mapsto} $, then $ \Gamma, \Delta' $ are coherent.}
		%%%%%%%%%%%
		%  Crash  %
		%%%%%%%%%%%
		\textcolor{blue}{\item[Case of Rule~(\textsf{Crash})] In this case $ \fpCrash $ and $ \PT' = \PCrash $.
			By the typing rules and Condition~\ref{cond:all}.\ref{cond:crash}, $ \Gamma \vdash \PT \triangleright \Delta $ and $ \fpCrash $ imply $ \Unreliable{\Delta} $.
			By Rule~(\textsf{Crash}), then $ \Gamma \vdash \PCrash \triangleright \Delta $.
			\\
			Since $ \Delta' = \Delta $, $ \Gamma, \Delta' $ is coherent and, by reflexivity, $ \Delta \stackrel{\Chan}{\Mapsto} \Delta' $.}
		%%%%%%%%%%
		%  If-T  %
		%%%%%%%%%%
		\item[Case of Rule~(\textsf{If-T})] In this case $ \PT = \PITE{\Expr}{\PT[Q]}{\PT[Q]'} $ and $ \PT' = \PT[Q] $.
			By Rule~(\textsf{If}), $ \Gamma \vdash \PT \triangleright \Delta $ implies $ \Gamma \vdash \PT' \triangleright \Delta $.
			\\
			Since $ \Delta' = \Delta $, $ \Gamma, \Delta' $ is coherent and, by reflexivity, $ \Delta \stackrel{\Chan}{\Mapsto} \Delta' $.
		%%%%%%%%%%
		%  If-F  %
		%%%%%%%%%%
		\item[Case of Rule~(\textsf{If-F})] In this case $ \PT = \PITE{\Expr}{\PT[Q]}{\PT[Q]'} $ and $ \PT' = \PT[Q]' $.
			By Rule~(\textsf{If}), $ \Gamma \vdash \PT \triangleright \Delta $ implies $ \Gamma \vdash \PT' \triangleright \Delta $.
			\\
			Since $ \Delta' = \Delta $, $ \Gamma, \Delta' $ is coherent and, by reflexivity, $ \Delta \stackrel{\Chan}{\Mapsto} \Delta' $.
		%%%%%%%%%%%
		%  Deleg  %
		%%%%%%%%%%%
		\item[Case of Rule~(\textsf{Deleg})] In this case $ \PT = \PDelA{\Chan}{\Role_1}{\Role_2}{\AT{\Chan'}{\Role}}{\PT[Q]} \mid \MQ{\Chan}{\Role_1}{\Role_2}{\Queue} $ and $ \PT' = \PT[Q] \mid \MQ{\Chan}{\Role_1}{\Role_2}{\Queue\#\AT{\Chan'}{\Role}} $.
			By the Rules~(\textsf{Par}), (\textsf{Deleg}), and the typing rules for message queues, $ \Gamma \vdash \PT \triangleright \Delta $ implies that there are $ \Delta_{\PT[Q]}, \LT, \LT', \MT $ such that $ \Delta = \Delta_{\PT[Q]} \compS \Typed{\AT{\Chan}{\Role_1}}{\LDelA{\Role2}{\Chan'}{\Role}{\LT'}{\LT}} \compS \Typed{\AT{\Chan'}{\Role}}{\LT'} \compS \MQ{\Chan}{\Role_1}{\Role_2}{\MT} $, $ \Gamma \vdash \PT[Q] \triangleright \Delta_{\PT[Q]} \compS \Typed{\AT{\Chan}{\Role_1}}{\LT} $, and $ \Gamma \vdash \MQ{\Chan}{\Role_1}{\Role_s}{\MT} \triangleright \MQ{\Chan}{\Role_1}{\Role_2}{\MT} $.
			By Rule~(\textsf{MQDeleg}), then $ \Gamma \vdash \MQ{\Chan}{\Role_1}{\Role_s}{\MT\#\AT{\Chan'}{\Role}} \triangleright \MQ{\Chan}{\Role_1}{\Role_2}{\MT\#\AT{\Chan'}{\Role}} $.
			Since $ \Delta = \Delta_{\PT[Q]} \compS \Typed{\AT{\Chan}{\Role_1}}{\LDelA{\Role2}{\Chan'}{\Role}{\LT'}{\LT}} \compS \Typed{\AT{\Chan'}{\Role}}{\LT'} \compS \MQ{\Chan}{\Role_1}{\Role_2}{\MT} $ is defined, so is $ \Delta' = \Delta_{\PT[Q]} \compS \Typed{\AT{\Chan}{\Role_1}}{\LT} \compS \MQ{\Chan}{\Role_1}{\Role_2}{\MT\#\AT{\Chan'}{\Role}} $.
			By Rule~(\textsf{Par}), then $ \Gamma \vdash \PT' \triangleright \Delta' $.
			\\
			By Rule~(\textsf{Deleg}) of Figure~\ref{fig:reductionSessionEnv}, then $ \Delta \stackrel{\Chan}{\Mapsto} \Delta' $.
			Since $ \Gamma, \Delta $ are coherent and by transitivity of $ \stackrel{\Chan}{\Mapsto} $, then $ \Gamma, \Delta' $ are coherent.
		%%%%%%%%%%%
		%  SRecv  %
		%%%%%%%%%%%
		\item[Case of Rule~(\textsf{SRecv})] In this case $ \PT = \PDelB{\Chan}{\Role_1}{\Role_2}{\AT{\Chan'}{\Role}}{\PT[Q]} \mid \MQ{\Chan}{\Role_2}{\Role_1}{\AT{\Chan''}{\Role'}\#\Queue} $ and $ \PT' = \PT[Q]\Subst{\Chan''}{\Chan'}\Subst{\Role'}{\Role} \mid \Queue $. We use alpha conversion to ensure that $ \Chan' = \Chan'' $ and $ \Role = \Role'' $.
			By the Rules~(\textsf{Par}), (\textsf{SRecv}), and the typing rules for message queues, $ \Gamma \vdash \PT \triangleright \Delta $ implies that there are $ \Delta_{\PT[Q]}, \LT, \LT', \MT $ such that $ \Delta = \Delta_{\PT[Q]} \compS \Typed{\AT{\Chan}{\Role_1}}{\LDelB{\Role_2}{\Chan'}{\Role}{\LT'}{\LT}} \compS \MQ{\Chan}{\Role_2}{\Role_1}{\AT{\Chan'}{\Role}\#\MT} $, $ \Gamma \vdash \PT[Q] \triangleright \Delta_{\PT[Q]} \compS \Typed{\AT{\Chan}{\Role_1}}{\LT} \compS \Typed{\AT{\Chan'}{\Role}}{\LT'} $, and $ \Gamma \vdash \MQ{\Chan}{\Role_2}{\Role_1}{\Queue} \triangleright \MQ{\Chan}{\Role_2}{\Role_1}{\MT} $.
			Since $ \Delta = \Delta_{\PT[Q]} \compS \Typed{\AT{\Chan}{\Role_1}}{\LDelB{\Role_2}{\Chan'}{\Role}{\LT'}{\LT}} \compS \MQ{\Chan}{\Role_2}{\Role_1}{\AT{\Chan''}{\Role'}\#\MT} $ is defined, so is $ \Delta' = \Delta_{\PT[Q]} \compS \Typed{\AT{\Chan}{\Role_1}}{\LT} \compS \Typed{\AT{\Chan'}{\Role}}{\LT'} \compS \MQ{\Chan}{\Role_2}{\Role_1}{\MT} $.
			By Rule~(\textsf{Par}), then $ \Gamma \vdash \PT' \triangleright \Delta' $.
			\\
			By Rule~(\textsf{SRecv}) of Figure~\ref{fig:reductionSessionEnv}, then $ \Delta \stackrel{\Chan}{\Mapsto} \Delta' $.
			Since $ \Gamma, \Delta $ are coherent and by transitivity of $ \stackrel{\Chan}{\Mapsto} $, then $ \Gamma, \Delta' $ are coherent.
		%%%%%%%%%
		%  Par  %
		%%%%%%%%%
		\item[Case of Rule~(\textsf{Par})] In this case $ \PT = \PPar{\PT[Q]_1}{\PT[Q]_2} $, $ \PT[Q]_1 \step \PT[Q]_1' $, and $ \PT' = \PPar{\PT[Q]_1'}{\PT[Q]_2} $.
			By Rule~(\textsf{Par}), $ \Gamma \vdash \PT \triangleright \Delta $ implies that there are $ \Delta_{\PT[Q]_1}, \Delta_{\PT[Q]_2} $ such that $ \Delta = \Delta_{\PT[Q]_1} \compS \Delta_{\PT[Q]_2} $, $ \Gamma \vdash \PT[Q]_1 \triangleright \Delta_{\PT[Q]_1} $, and $ \Gamma \vdash \PT[Q]_2 \triangleright \Delta_{\PT[Q]_2} $.
			Since $ \Gamma, \Delta $ are coherent, $ \Gamma, \Delta_{\PT[Q]_1} $ is weakly coherent.
			By the induction hypothesis (for the weakly coherent case), $ \Gamma \vdash \PT[Q]_1' \triangleright \Delta_{\PT[Q]_1}' $ with $ \Delta_{\PT[Q]_1} \stackrel{s}{\Mapsto} \Delta_{\PT[Q]_1}' $.
			Since $ \Delta_{\PT[Q]_1} \compS \Delta_{\PT[Q]_2} $ is defined, so is $ \Delta' = \Delta_{\PT[Q]_1} \compS \Delta_{\PT[Q]_2} $.
			By Rule~(\textsf{Par}), then $ \Gamma \vdash \PT' \triangleright \Delta' $.
			\\
			By $ \Delta_{\PT[Q]_1} \stackrel{s}{\Mapsto} \Delta_{\PT[Q]_1}' $, then $ \Delta \stackrel{\Chan}{\Mapsto} \Delta' $.
			Since $ \Gamma, \Delta $ are coherent and by transitivity of $ \stackrel{\Chan}{\Mapsto} $, then $ \Gamma, \Delta' $ are coherent.
		%%%%%%%%%
		%  Res  %
		%%%%%%%%%
		\item[Case of Rule~(\textsf{Res})] In this case $ \PT = \PRes{\Args}{\PT[Q]} $, $ \PT[Q] \step \PT[Q'] $, and $ \PT' = \PRes{\Args}{\PT[Q]'} $.
			Then, one of the Rules~(\textsf{Res1}) or (\textsf{Res2}) was used to type the restriction on $ \Args $ in $ \PT $.
			\begin{description}
				%%%%%%%%%%
				%  Res1  %
				%%%%%%%%%%
				\item[Case of (\textsf{Res1})] Then there is some $ \Sort $ such that $ \Args \sharp \left( \Gamma, \Delta \right) $ and $ \Gamma \compS \Typed{\Args}{\Sort} \vdash \PT[Q] \triangleright \Delta $.
					Since $ \Gamma, \Delta $ are coherent, so are $ \Gamma  \compS \Typed{\Args}{\Sort}, \Delta $.
					By the induction hypothesis, $ \Gamma \compS \Typed{\Args}{\Sort} \vdash \PT[Q]' \triangleright \Delta' $ for some $ \Delta' $ such that $ \Gamma, \Delta' $ is coherent and $ \Delta \stackrel{\Chan}{\Mapsto} \Delta' $.
					With Rule~(\textsf{Res1}), then $ \Gamma \vdash \PT' \triangleright \Delta' $.
				%%%%%%%%%%
				%  Res2  %
				%%%%%%%%%%
				\item[Case of (\textsf{Res2})] In this case $ \Args = \Chan $ and there are $ \GT, \Chan[a], \Delta'' $ such that we have $ \Chan \sharp \left(\Gamma, \Delta \right) $, $ \Set{ \Typed{\AT{\Chan}{\Role}}{\Proj{\GT}{\Role}} \mid \Role \in \Roles{\GT} } \compS \Set{ \MQ{\Chan}{\Role}{\Role'}{\emptyList} \mid \Role, \Role' \in \Roles{\GT'} \wedge \Role \neq \Role' } \stackrel{\Chan}{\Mapsto} \Delta'' $, $ \Typed{\Chan[a]}{G} \in \Gamma $, and $ \Gamma \vdash \PT[Q] \triangleright \Delta \compS \Delta'' $.
					Then $ \Gamma, \Delta \compS \Delta'' $ are coherent.
					By the induction hypothesis, $ \Gamma \vdash \PT[Q]' \triangleright \Delta''' $ for some $ \Delta''' $ such that $ \Delta \compS \Delta'' \stackrel{\Chan}{\Mapsto} \Delta''' $.
					By Rule~(\textsf{Res2}), then $ \Gamma \vdash \PT' \triangleright \Delta $.
					Since $ \Delta' = \Delta $, $ \Gamma, \Delta' $ is coherent and, by reflexivity, $ \Delta \stackrel{\Chan}{\Mapsto} \Delta' $.
			\end{description}
		%%%%%%%%%
		%  Rec  %
		%%%%%%%%%
		\item[Case of (\textsf{Rec})] In this case $ \PT = \PRep{\ProcV}{\PT[Q]} $ and $ \PT' = \PT[Q]\Subst{\PRep{\ProcV}{\PT[Q]}}{\ProcV} $.
			By Rule~(\textsf{Rec}), then there are $ \Delta_{\PT[Q]}, \Chan, \Role, \TypeV, \LT $ such that $ \Delta = \Delta_{\PT[Q]} \compS \Typed{\AT{\Chan}{\Role}}{\LRep{\TypeV}{\LT}} $ and $ \Gamma \compS \Typed{\ProcV}{\Typed{\AT{\Chan}{\Role}}{\TypeV}} \vdash \PT[Q] \triangleright \Delta_{\PT[Q]} \compS \Typed{\AT{\Chan}{\Role}}{\LT} $.
			By replacing in the proof tree of $ \Gamma \compS \Typed{\ProcV}{\Typed{\AT{\Chan}{\Role}}{\TypeV}} \vdash \PT[Q] \triangleright \Delta_{\PT[Q]} \compS \Typed{\AT{\Chan}{\Role}}{\LT} $ all occurrences of Rule~(\textsf{Var}) by the proof tree, we obtain $ \Gamma \vdash \PT' \triangleright \Delta' $ with $ \Delta' = \Delta_{\PT[Q]} \compS \Typed{\AT{\Chan}{\Role}}{\LT\Subst{\LRep{\TypeV}{\LT}}{\TypeV}} $.
			\\
			By Rule~(\textsf{Rec}) of Figure~\ref{fig:reductionSessionEnv}, then $ \Delta \stackrel{\Chan}{\Mapsto} \Delta' $.
			Since $ \Gamma, \Delta $ are coherent and by transitivity of $ \stackrel{\Chan}{\Mapsto} $, then $ \Gamma, \Delta' $ are coherent.
		%%%%%%%%%%%
		%  Struc  %
		%%%%%%%%%%%
		\item[Case of Rule~(\textsf{Struc})] In this case $ \PT \equiv \PT[Q] $, $ \PT[Q] \step \PT[Q]' $, $ \PT[Q]' \equiv \PT' $.
			By Lemma~\ref{lem:structuralCongruencePreservesJudgement}, $ \Gamma \vdash \PT \triangleright \Delta $ and $ \PT \equiv \PT[Q] $ imply $ \Gamma \vdash \PT[Q] \triangleright \Delta $.
			By the induction hypothesis, $ \Gamma \vdash \PT[Q]' \triangleright \Delta' $ for some $ \Delta' $ such that $ \Delta \stackrel{\Chan}{\Mapsto} \Delta' $ and $ \Gamma, \Delta' $ are coherent.
			By Lemma~\ref{lem:structuralCongruencePreservesJudgement}, then $ \Gamma \vdash \PT' \triangleright \Delta' $.
			\qedhere
	\end{description}
\end{proof}

Since we restrict our attention to linear environments, type judgements ensure linearity of session channels.
With subject reduction, this holds for all derivatives of well-typed processes.

\begin{lem}[Linearity]
	\label{lem:linearity}
	Let $ \Gamma \vdash \PT \triangleright \Delta $, $ \Gamma, \Delta $ be coherent, and there are no name clashes on session channels.
	Then all session channels of $ \PT $ are linear, \ie for all $ \PT \steps \PT' $ and all $ \Chan[s], \Role, \Role_1, \Role_2 $ there is at most one unguarded actor $ \AT{\Chan}{\Role} $ and at most one queue $ \MQS{\Chan}{\Role_1}{\Role_2} $ in $ \PT' $.
\end{lem}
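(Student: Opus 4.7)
My proof plan is to reduce the linearity of session channels in $\PT'$ to the linearity of the session environment in a typing judgement for $\PT'$, and then to propagate this along the reduction sequence using Subject Reduction. Concretely, I proceed by induction on the length of $\PT \steps \PT'$. In the base case $\PT' = \PT$ I have to show that a process well-typed under a linear session environment already has linear session channels. In the inductive step, Theorem~\ref{thm:subjectReduction} yields $\Gamma \vdash \PT' \triangleright \Delta'$ for some $\Delta'$ with $\Gamma, \Delta'$ coherent, and in particular $\Delta'$ linear by our standing convention; so the base case applies directly to $\Gamma \vdash \PT' \triangleright \Delta'$.

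For the base case I prove, by induction on the derivation of $\Gamma \vdash \PT \triangleright \Delta$, the invariant that every unguarded actor $\AT{\Chan}{\Role}$ in $\PT$ corresponds to a unique assignment $\Typed{\AT{\Chan}{\Role}}{\LT} \in \Delta$, and every unguarded message queue $\MQS{\Chan}{\Role_1}{\Role_2}$ in $\PT$ corresponds to a unique assignment $\MQ{\Chan}{\Role_1}{\Role_2}{\MT} \in \Delta$. The rules for the action prefixes ((\textsf{RSend}), (\textsf{USend}), (\textsf{RGet}), (\textsf{UGet}), (\textsf{RSel}), (\textsf{WSel}), (\textsf{RBran}), (\textsf{WBran}), (\textsf{Deleg}), (\textsf{SRecv})) each introduce exactly one actor assignment in the conclusion, and the message-queue rules in Figure~\ref{fig:runtimeTypingRules} likewise introduce one queue assignment each. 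Rule~(\textsf{Par}) decomposes $\Delta$ as $\Delta_1 \compS \Delta_2$, whose very definition forces the actor and queue assignments used by the two parallel components to be disjoint; this is exactly the property I need. The rules (\textsf{End}), (\textsf{Crash}), and (\textsf{Var}) contribute no unguarded actor and no queue (noting that $\Actors{\PCrash} = \emptyset$ and $\PCrash$ contains no message queue because $\Unreliable{\Delta}$); (\textsf{Rec}), (\textsf{Res1}), (\textsf{Res2}) and (\textsf{If}) simply propagate the invariant from their premises.

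The main obstacle is the interplay with session initialisation. When Rule~(\textsf{Init}) fires, a fresh session channel $\Chan$ is picked and $\Length{\Roles{\GT}}^2 - \Length{\Roles{\GT}}$ empty queues are introduced simultaneously; I must argue that the corresponding queue assignments generated by (\textsf{Res2}) are pairwise distinct and do not clash with anything already present. This follows from the freshness condition $\Chan \sharp (\Gamma, \Delta)$ in (\textsf{Res2}) together with the assumption that there are no name clashes on session channels, which licences alpha conversion in (\textsf{Init}) to always pick a fresh $\Chan$; linearity of the composition in Definition~\ref{def:coherence} then yields pairwise distinctness of the $\MQ{\Chan}{\Role}{\Role'}{\emptyList}$. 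Once the invariant and this freshness argument are combined, linearity of $\Delta'$ immediately gives that $\PT'$ contains at most one unguarded actor $\AT{\Chan}{\Role}$ and at most one queue $\MQS{\Chan}{\Role_1}{\Role_2}$, as required.
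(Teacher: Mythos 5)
Your proposal is correct and follows essentially the same route as the paper's own proof: subject reduction to obtain a coherent, linear typing of $\PT'$, the disjointness built into $\compS$ (surfacing in Rule~(\textsf{Par})) to separate actors and queues of parallel components, and freshness of bound session channels via (\textsf{Res2}) and the no-name-clash assumption. Your version merely organises the paper's argument more explicitly as an induction on the typing derivation with an invariant linking unguarded actors and queues to environment assignments (with the understanding that for channels bound by (\textsf{Res2}) the unique assignment lives in the $\Delta'$ of that rule's premise rather than in the top-level $\Delta$), which is a reasonable refinement rather than a different proof.
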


\begin{proof}
	By Theorem~\ref{thm:subjectReduction}, there is some $ \Delta' $ such that $ \Gamma \vdash \PT' \triangleright \Delta' $ and $ \Gamma, \Delta' $ are coherent.
	By the Definition~\ref{def:coherence} of coherence and projection, $ \Delta' $ contains at most one actor $ \AT{\Chan}{\Role} $ and at most one queue $ \MQS{\Chan}{\Role_1}{\Role_2} $ for each $ \Typed{\Chan[a]}{\GT} \in \Gamma $ and $ \Role \in \Roles{\GT} $.
	By the Figures~\ref{fig:typingRules} and \ref{fig:runtimeTypingRules}, only the Rules~(\textsf{Req}), (\textsf{Acc}), and (\textsf{Res2}) can introduce new actors or queues.
	The linearity of global environments ensures, that all new actors and queues introduced by the rules are on fresh channel names and are pairwise distinct.
	The Rules~(\textsf{Req}) and (\textsf{Acc}) introduce exactly one actor each on a fresh session channel $ \Chan $ that is bound by a prefix for session initialisation.
	Rule~(\textsf{Res2}) introduces assignments for actors and queues for pairwise different roles on a fresh session channel $ \Chan $ that is bound by restriction.
	Since there are no name clashes, the session channels in binders are pairwise different and distinct from free session channels.
	By the typing rules and because $ \Gamma \vdash \PT' \triangleright \Delta' $, all actors and queues in $ \PT' $ have to satisfy their specification as described by an assignment of this actor or queue towards a local type.
	By the linearity of session environments and since new assignments for actors result from bound session channels, all unguarded actors and queues in $ \PT' $ are pairwise different.
\end{proof}

For \strongR systems coherence ensures that for each actor there is a matching communication partner.
In the case of asynchronous communication, this means that for each sender (or message on a queue) there is a receiver and for each receiver there is a sender or a message on a queue, where the receiver as well as the sender or the message queue appear under the same binder of the session channel or both are free.
In the case of \unrel communication, messages get lost, senders can crash, and receivers can crash themselves or suspect the sender.
In the case of \weakR branching for each sender (or message on a queue) there are all specified receivers that are not crashed and vice versa.

We summarise these properties of \strongR and \weakR interactions in \emph{error-freedom}: for each \strongR sender or message there is a matching receiver and vice versa, for each \weakR sender or message there is a possibly crashed receiver and vice versa.
We obtain similar requirements for session delegation.

\begin{lem}[Error-Freedom]
	\label{lem:errorFredom}
	If $ \Gamma \vdash \PT \triangleright \Delta $ and $ \Gamma, \Delta $ is coherent then:
	\begin{compactitem}
		\item for each unguarded $ \PSendR{\Chan}{\Role_1}{\Role_2}{\Expr[y]}{\PT[Q]_1} $ and each message $ \MessR{\Expr[y]} $ on a message queue $ \MQS{\Chan}{\Role_1}{\Role_2} $ in $ \PT $ there is some $ \PGetR{\Chan}{\Role_2}{\Role_1}{\Args}{\PT[Q]_2} $ in $ \PT $,
		\item for each unguarded $ \PGetR{\Chan}{\Role_2}{\Role_1}{\Args}{\PT[Q]_2} $ in $ \PT $ there is some $ \PSendR{\Chan}{\Role_1}{\Role_2}{\Expr[y]}{\PT[Q]_1} $ or a message $ \MessR{\Expr[y]} $ on a message queue $ \MQS{\Chan}{\Role_1}{\Role_2} $ in $ \PT $,
		\item for each unguarded $ \PSelR{\Chan}{\Role_1}{\Role_2}{\Label}{\PT[Q]} $ and each message $ \MessBR{\Label} $ on a message queue $ \MQS{\Chan}{\Role_1}{\Role_2} $ in $ \PT $ there is some $ j \in \indexSet $ and $ \PBranR{\Chan}{\Role_2}{\Role_1}{\Set{ \Label_i.\PT[Q]_i }_{i \in \indexSet}} $ in $ \PT $ with $ \Label_j \compL \Label $,
		\item for each unguarded $ \PBranR{\Chan}{\Role_2}{\Role_1}{\Set{ \Label_i.\PT[Q]_i }_{i \in \indexSet}} $ in $ \PT $ there is $ j \in \indexSet $ and $ \PSelR{\Chan}{\Role_1}{\Role_2}{\Label}{\PT[Q]} $ or a message $ \MessBR{\Label} $ on a message queue $ \MQS{\Chan}{\Role_1}{\Role_2} $ in $ \PT $ with $ \Label_j \compL \Label $,
		\item for each unguarded $ \PSelW{\Chan}{\Role}{\Role[R]}{\Label}{\PT[Q]} $ and each message $ \MessBW{\Label} $ on a message queue $ \MQS{\Chan}{\Role}{\Role'} $ in $ \PT $ and each $ \Role' \in \Role[R] $ there is some $ \PBranW{\Chan}{\Role'}{\Role}{\Set{ \Label_i.\PT_i }_{i \in \indexSet, \LabelD}} $ and $ j \in \indexSet $ in $ \PT $ with $ \Label_j \compL \Label $ or $ \PT $ does not contain an actor $ \AT{\Chan}{\Role'} $,
		\item for each unguarded $ \PBranW{\Chan[s]}{\Role'}{\Role}{\Set{ \Label_i.\PT_i }_{i \in \indexSet, \LabelD}} $ in $ \PT $ there is $ j \in \indexSet $ and $ \PSelW{\Chan}{\Role}{\Role[R]}{\Label}{\PT[Q]} $ or a message $ \MessBW{\Label} $ on a message queue $ \MQS{\Chan}{\Role}{\Role'} $ in $ \PT $ with $ \Label_j \compL \Label $ and $ \Role' \in \Role[R] $ or $ \PT $ does not contain an actor $ \AT{\Chan}{\Role} $,
		\item for each unguarded $ \PDelA{\Chan}{\Role_1}{\Role_2}{\AT{\Chan'}{\Role}}{\PT[Q]_1} $ and each message $ \AT{\Chan'}{\Role} $ on a message queue $ \MQS{\Chan}{\Role_1}{\Role_2} $ in $ \PT $ there is some $ \PDelB{\Chan}{\Role_2}{\Role_1}{\AT{\Chan''}{\Role'}}{\PT[Q]_2} $ in $ \PT $, and
		\item for each unguarded $ \PDelB{\Chan}{\Role_2}{\Role_1}{\AT{\Chan''}{\Role'}}{\PT[Q]_2} $ in $ \PT $ there is some $ \PDelA{\Chan}{\Role_1}{\Role_2}{\AT{\Chan'}{\Role}}{\PT[Q]_1} $ or a message $ \AT{\Chan'}{\Role} $ on a message queue $ \MQS{\Chan}{\Role_1}{\Role_2} $ in $ \PT $.
	\end{compactitem}
\end{lem}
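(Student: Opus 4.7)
The plan is to reduce the claim to a matching property on the session environment $\Delta$ and then transfer that property back to $\PT$. First, by inversion on the typing rules in Figures~\ref{fig:typingRules} and \ref{fig:runtimeTypingRules} (each rule is uniquely determined by the outermost operator of the process in its conclusion), every unguarded action prefix on a session channel $\Chan$ in $\PT$ forces a corresponding outermost constructor in the local type assigned to the relevant actor in $\Delta$, and every unguarded message on a queue in $\PT$ forces a corresponding message type at the head of the queue type in $\Delta$. Conversely, the typing rules cannot introduce assignments for actors or queues in $\Delta$ unless a matching prefix or message is present in $\PT$. Hence the desired property on $\PT$ reduces to a matching invariant on $\Delta$: whenever a local-type assignment or queue-type entry in $\Delta$ has an outermost interaction constructor (reliable communication/branching, weakly reliable branching, or delegation), the appropriate peer constructor also appears in $\Delta$.

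By coherence (Definition~\ref{def:coherence}), for every session channel $\Chan$ occurring in $\Delta$ there exists $\Typed{\Chan[a]}{\GT} \in \Gamma$ such that the restriction of $\Delta$ on assignments with $\Chan$ arises as the $\stackrel{\Chan}{\Mapsto}$-image of $\Set{ \Typed{\AT{\Chan}{\Role}}{\Proj{\GT}{\Role}} \mid \Role \in \Roles{\GT} } \compS \Set{ \MQ{\Chan}{\Role}{\Role'}{\emptyList} \mid \Role \neq \Role' \in \Roles{\GT} }$. I would then prove the matching invariant by induction on the length of the $\stackrel{\Chan}{\Mapsto}$-derivation. The base case is an inspection of projection: a global $\GComR{\Role_1}{\Role_2}{\Sort}{\GT'}$ projects to a reliable send at $\Role_1$ and a matching reliable receive at $\Role_2$ of the same sort, and analogously for unreliable communication, reliable and weakly reliable branching (the latter producing matching outer types at all roles in $\Role[R]$), and delegation; since all queues start empty, base-case matching is immediate.

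For the induction step I would go through the rules in Figure~\ref{fig:reductionSessionEnv}. Rules (\textsf{RSend}), (\textsf{USend}), (\textsf{RSel}), (\textsf{WSel}), and (\textsf{Deleg}) relocate a send/select/delegate constructor from a local type into a queue, so the peer receive stays matched---now against a queue entry rather than a sending local type. Dual rules (\textsf{RGet}), (\textsf{UGet}), (\textsf{RBran}), (\textsf{WBran}), and (\textsf{SRecv}) simultaneously consume a queue entry and advance the receiver's local type past the matched receive, preserving matching, since by the induction hypothesis the send side had already relocated its message. Rules (\textsf{USkip}) and (\textsf{ML}) concern unreliable actions whose peer does not have to be matched according to the statement. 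Rule (\textsf{Rec}) merely unfolds recursion.

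The main obstacle is the weakly reliable branching case, which carries an explicit disjunct ``or $\PT$ does not contain an actor $\AT{\Chan}{\Role'}$''. Here Condition~\ref{cond:all}.\ref{cond:fpWskip} is essential: rule (\textsf{WSkip}) can fire at a receiver only when the sending actor is already absent from $\PT$ and the corresponding queue is empty, so the invariant on $\Delta$ translates precisely into the stated disjunction on $\PT$. Combined with the broadcast behaviour of (\textsf{WSel})---which atomically inserts a single $\MessBW{\Label}$ into every queue toward the roles in $\Role[R]$---each remaining $\PBranW{}{}{}{}$ receiver in $\PT$ is matched either by a sending $\PSelW{}{}{}{}{}$ prefix, by an enqueued broadcast message, or by crash of the sender; and, symmetrically, each $\PSelW{}{}{}{}{}$ prefix or queued $\MessBW{\Label}$ is matched at every intended receiver either by a $\PBranW{}{}{}{}$ prefix or by absence of the receiver's actor, concluding the argument.
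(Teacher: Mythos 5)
Your proposal follows essentially the same route as the paper's proof: establish the matching property on the initial projected session environment, show it is preserved by every $\stackrel{\Chan}{\mapsto}$ step (send prefixes migrating into queues, reliable/weakly-reliable messages never being dropped), and transfer it to $\PT$ via the typing rules and coherence, with Condition~\ref{cond:all}.\ref{cond:fpWskip} handling the \weakR branching disjunct; the paper states this as a short sketch while you spell out the induction. One small imprecision: your blanket converse claim that a $\Delta$-assignment always forces a matching prefix in $\PT$ is not literally true, since Rule~(\textsf{Crash}) types $\PCrash$ against any unreliable $\Delta$; the reliable and delegation cases go through only because $\Unreliable{\Delta}$ excludes \strongR and delegation types from crashed processes (the paper's ``only unreliable processes can crash''), and you do invoke the crash disjunct correctly where it actually matters, namely for \weakR branching.
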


\begin{proof}
	By coherence and projection for each \strongR and each \weakR sender there is initially a matching receiver for each free session channel in the session environment.
	By the typing rules and Rule~(\textsf{Res2}) in particular, this holds also for restricted session channels.
	Session environments may evolve using $ \stackrel{\Chan}{\mapsto} $ but all such steps preserve the above defined requirements, \ie \strongR or \weakR send prefixes can be mapped onto the type of the respective message in a queue but no such message can be dropped.
	The typing rules ensure that the processes follow their specification in the local types of session environments.
	Then, the first four and the last two cases follow from the typing rules and coherence, and the fact that only unreliable processes can crash.
	The remaining two cases follow from the typing rules and coherence.
\end{proof}

\emph{Session fidelity} claims that the interactions of a well-typed process follow exactly the specification described by its global types, \ie if a system is well-typed \wrt to coherent type environments then the system follows its specification in the global type.
One direction of this property already follows from the above variant of subject reduction.
The steps of well-typed systems are reflected by corresponding steps of the session environment and, thus, respect their specification in global types.
What remains to show is that the specified interactions can indeed be performed.
The above formulation of error-freedom alone is not strong enough to show this, because it ensures only the existence of matching communication partners and not that they can be unguarded.

To obtain session fidelity we prove \emph{progress}.
\emph{Progress} states that no part of a well-typed and coherent system can block other parts, that eventually all matching communication partners as described by error-freedom are unguarded, that interactions specified by the global type can happen, and that there are no communication mismatches.
Subject reduction and progress together then imply \emph{session fidelity}, \ie that processes behave as specified in their global types.

To ensure that the interleaving of sessions and session delegation cannot introduce deadlocks, we assume an interaction type system as introduced in \cite{BettiniAtall08,hondaYoshidaCarbone16}.
For this type system it does not matter whether the considered actions are \strongR, \weakR, or \unrel.
More precisely, we can adapt the interaction type system of \cite{BettiniAtall08} in a straightforward way to the above session calculus, where \unrel communication and \weakR branching is treated in exactly the same way as \strongR communication/branching.
We say that \emph{$ \PT $ is free of cyclic dependencies between sessions} if this interaction type system does not detect any cyclic dependencies.
In this sense fault-tolerance is more flexible than explicit failure handling, which oftens requires a more substantial revision of the interaction type system to cover the additional dependencies that are introduced \eg by the propagation of faults.

In the literature there are different formulations of progress.
We are interested in a rather strict definition of progress that ensures that well-typed systems cannot block.
Therefore, we need an additional assumption on session requests and acceptances.
Coherence ensures the existence of communication partners within sessions only.
If we want to avoid blocking, we need to be sure, that no participant of a session is missing during its initialisation.
Note that without action prefixes all participants either terminated or crashed.

\begin{thm}[Progress/Session Fidelity]
	\label{thm:progress}
	Let $ \Gamma \vdash \PT \triangleright \Delta $, $ \Gamma, \Delta $ be coherent, and let $ \PT $ be free of cyclic dependencies between sessions.
	Assume that in the derivation of $ \Gamma \vdash \PT \triangleright \Delta $, whenever $ \PReq{\Chan[a]}{\Role[n]}{\Chan}{\PT[Q]} $ or $ \PAcc{\Chan[a]}{\Role}{\Chan}{\PT[Q]} $ in $ \PT $, then $ \Typed{\Chan[a]}{\GT} \in \Gamma $, $ \Length{\Roles{G}} = \Role[n] $, and there are $ \PReq{\Chan[a]}{\Role[n]}{\Chan}{\PT[Q]_n} $ as well as $ \PAcc{\Chan[a]}{\Role_i}{\Chan}{\PT[Q]_i} $ in $ \PT $ for all $ 1 \leq \Role_i < \Role[n] $.
	\begin{compactenum}
		\item Then either $ \PT $ does not contain any action prefixes or $ \PT \step \PT' $.
		\item If $ \PT $ does not contain recursion, then there exists $ P' $ such that $ \PT \steps \PT' $ and $ \PT' $ does not contain any action prefixes.
	\end{compactenum}
\end{thm}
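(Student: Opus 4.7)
The plan is a two-part structural argument. For part~(1), I would first use Lemma~\ref{lem:structuralCongruencePreservesJudgement} together with the rules for structural congruence to normalise $\PT$ into the form $\PRes{\Args_1}{\cdots\PRes{\Args_k}{(\PT_1 \mid \cdots \mid \PT_m)}}$ where each $\PT_j$ is either a message queue, a recursion, a conditional, $\PCrash$, or an action prefix followed by its continuation. If no $\PT_j$ contains an action prefix, we are done; otherwise pick an unguarded prefix and case-split on its outermost operator. Conditionals reduce by (If-T)/(If-F), recursions by (Rec), and a crash may always fire by (Crash) since Condition~\ref{cond:all}.\ref{cond:crash} guarantees the resulting $\PCrash$ is still typable in the surrounding environment. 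For $\PReq{\Chan[a]}{\Role[n]}{\Chan}{\cdot}$ or $\PAcc{\Chan[a]}{\Role}{\Chan}{\cdot}$ the extra hypothesis on session initialisation guarantees the presence of the whole quorum, so (Init) fires. For a send or selection prefix, coherence guarantees that the target queue is present at top-level (queues are introduced unguarded by Rule~(\textsf{Res2}) and remain so under reduction), whence (RSend), (USend), (RSel), or (WSel) applies directly.

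The nontrivial inputs are receive-like prefixes. By Lemma~\ref{lem:errorFredom}, for every unguarded $\PGetR{\Chan}{\Role_1}{\Role_2}{\Args}{\PT}$, $\PGetU{\Chan}{\Role_1}{\Role_2}{\Label}{\Expr[v]}{\Args}{\PT}$, $\PBranR{\Chan}{\Role_1}{\Role_2}{\cdot}$, or $\PBranW{\Chan}{\Role_1}{\Role_2}{\cdot}$, there is either a matching message at the head of the corresponding queue --- enabling (RGet), (UGet) (using Condition~\ref{cond:all}.\ref{cond:ugetValid} so that $\fpUGet$ never blocks), (RBran), or (WBran) --- or a matching sender that, once unguarded, can deliver one. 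For weakly reliable branching, if the sender has crashed then Conditions~\ref{cond:all}.\ref{cond:fpCrashImpliesSkip} and \ref{cond:all}.\ref{cond:fpWskip} together make (WSkip) applicable, and Condition~\ref{cond:all}.\ref{cond:fpMLifffpUSkip} symmetrically links (ML) to (USkip). The main obstacle is to show that a matching sender supplied by Lemma~\ref{lem:errorFredom} can actually be brought to top-level: it might be guarded by a prefix belonging to another session, possibly giving rise to a wait-chain running through several sessions and actors. This is precisely what the assumption of freedom from cyclic dependencies between sessions rules out; the wait-for relation is acyclic, so among the finitely many blocked prefixes one may always choose a minimal one whose partner is already unguarded (or becomes so by applying the argument inductively to the smaller subterm that guards it). The remaining cases --- delegation prefixes (\textsf{Deleg}) and (\textsf{SRecv}), parallel composition, and restriction --- are routine.

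For part~(2), recursion-freeness permits a well-founded lexicographic measure $\mu(\PT)$ on the syntactic structure of $\PT$, counting every action prefix weighted by the amount of residual work it causes --- roughly cost $2$ for sends and selections (the prefix plus the message they will produce), cost $1 + \Length{\Role[R]}$ for a weakly reliable broadcast selection, cost $1$ for receives, branches, skips and crashes --- plus the total number of messages currently sitting in message queues. A direct inspection of Figures~\ref{fig:semanticsPartA} and \ref{fig:semanticsPartB} shows that each reduction strictly decreases $\mu$: sends and selections trade a prefix of cost $c$ for $c-1$ messages, gets and branches consume a prefix together with a message, skips, crashes and (\textsf{ML}) remove a prefix or message, and (\textsf{Init}) replaces the $\Role[n]$ request/accept prefixes by their continuations together with empty queues that contribute nothing to $\mu$. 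Since (\textsf{Rec}) is excluded by hypothesis, iterating part~(1) terminates in a process without action prefixes, which together with Theorem~\ref{thm:subjectReduction} yields session fidelity.
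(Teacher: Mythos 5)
Your part~(1) follows essentially the paper's own argument: the paper likewise dispatches unguarded conditionals and recursions first, then uses freedom from cyclic dependencies to pick a minimal session and, within it, a minimal unguarded prefix \emph{or queue-head message}, and closes each case with coherence, Lemma~\ref{lem:errorFredom}, and Conditions~\ref{cond:all}.\ref{cond:ugetValid}--\ref{cond:all}.\ref{cond:fpWskip}. The one point you treat more lightly is that the minimal object may itself be a message at the head of a FIFO queue whose receiver has crashed or already skipped; the paper handles these as separate cases, discharging them with (\textsf{ML}) via Conditions~\ref{cond:all}.\ref{cond:fpMLifffpUSkip} and \ref{cond:all}.\ref{cond:fpCrashImpliesML}, and also note that Lemma~\ref{lem:errorFredom} only supplies a message \emph{somewhere on} the queue, not at its head, so the minimality argument must also order messages within a queue. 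For part~(2) you take a genuinely different route: the paper argues that the session environment strictly decreases under $\stackrel{\Chan}{\Mapsto}$ for every non-conditional step and that recursion-freeness bounds the number of session initialisations, whereas you define an explicit weighted syntactic measure $\mu$ on the process. Your measure is the more concrete and checkable of the two, but as stated the claim that \emph{every} reduction strictly decreases $\mu$ fails for (\textsf{If-T})/(\textsf{If-F}) when the discarded branch is prefix-free, and for (\textsf{Crash}) applied to a prefix-free term; you should add the number of conditional operators to $\mu$ and restrict attention to the reduction sequence produced by the part~(1) strategy (which the existential form of part~(2) permits). The paper has the same wrinkle and patches it the same way, so this is a presentational slip rather than a gap.
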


\begin{proof}
	1.) If $ \PT $ contains an unguarded conditional, then it can perform a step $ \PT \step \PT' $ such that $ \Gamma \vdash \PT' \triangleright \Delta $ using one of the Rules~(\textsf{If-T}) or (\textsf{If-F}) as described in the corresponding cases of the proof of Theorem~\ref{thm:subjectReduction}.
	Similarly, if $ \PT $ contains an unguarded recursion, then it can perform a step $ \PT \step \PT' $ such that $ \Gamma \vdash \PT' \triangleright \Delta' $ with $ \Delta \stackrel{\Chan}{\mapsto} \Delta' $ using Rule~(\textsf{Rec}) as described in the corresponding case of the proof of Theorem~\ref{thm:subjectReduction}.
	Assume that $ \PT $ is not structural congruent to $ \PEnd $ and does not contain unguarded conditionals or recursions.

	Since $ \PT $ is not structural congruent to $ \PEnd $ it contains session channels.
	All session channels of $ \PT $ that are not contained in $ \Delta $ are bound in $ \PT $, \ie the names of $ \Delta $ are exactly the free session channels of $ \PT $.
	Since there are no cyclic dependencies between sessions, we can pick a minimal session in $ \PT $, \ie a session such that its next action is not blocked by any action of another session or session delegation.
	Let $ \Chan $ denote this session channel.
	By the typing rules in the Figures~\ref{fig:typingRules} and \ref{fig:runtimeTypingRules}, there are some $ \GT, \Chan[b] $ such that $ \Typed{\Chan[b]}{\GT} \in \Gamma $ and $ \GT $ specifies the session $ \Chan $.
	Since the next action of this session is not blocked, $ \PT $ contains at least one unguarded prefix or at least one unguarded not empty queue on $ \Chan $.
	Among all such unguarded prefixes and messages that are head of a queue we pick a minimal, \ie one that is typed by the projection of a part of $ \GT $ such that no part of $ \GT $ that is guarding it is used to type another unguarded prefix or message in $ \PT $:
	\begin{description}
		\item[$ \PReq{\Chan[a]}{\Role[n]}{\Chan}{\PT[Q]} $] Then $ \Chan \sharp \Delta $ and $ \Typed{\Chan[a]}{\GT} \in \Gamma $.
			By the assumption on session initialisation, then for all $ 1 \leq \Role_i < \Role[n] $ we have $ \PAcc{\Chan[a]}{\Role_i}{\Chan} $ in $ \PT $.
			Since $ \Chan $ is minimal, all these session acceptances are unguarded.
			By Rule~(\textsf{Init}), then $ \PT \step \PT' $.
		\item[$ \PAcc{\Chan[a]}{\Role}{\Chan}{\PT[Q]} $] Then $ \Chan \sharp \Delta $ and $ \Typed{\Chan[a]}{\GT} \in \Gamma $.
			By the assumption on session initialisation, then $ \PReq{\Chan[a]}{\Role[n]}{\Chan}{\PT[Q]} $ and for all $ 1 \leq \Role_i < \Role[n] $ with $ \Role_i \neq \Role $ we have $ \PAcc{\Chan[a]}{\Role_i}{\Chan} $ in $ \PT $.
			Since $ \Chan $ is minimal, all this session request and these session acceptances are unguarded.
			By Rule~(\textsf{Init}), then $ \PT \step \PT' $.
		\item[$ \PSendR{\Chan}{\Role_1}{\Role_2}{\Expr[y]}{\PT[Q]} $] If $ \Chan \sharp \Delta $ then the typing rules in the Figures~\ref{fig:typingRules} and \ref{fig:runtimeTypingRules} and (\textsf{Req}), (\textsf{Acc}), and (\textsf{Res2}) in particular ensure that $ \MQ{\Chan}{\Role_1}{\Role_2}{\MT} $ in $ \PT $.
			If $ \Chan $ is free in $ \Delta $, then $ \MQ{\Chan}{\Role_1}{\Role_2}{\MT} $ in $ \PT $ because of coherence.
			Since $ \Chan $ and the action are minimal, $ \MQ{\Chan}{\Role_1}{\Role_2}{\MT} $ is unguarded.
			By Rule~(\textsf{RSend}), then $ \PT \step \PT' $.
		\item[$ \MQ{\Chan}{\Role_1}{\Role_2}{\MessR{\Expr[v]}\#\Queue} $] By Lemma~\ref{lem:errorFredom}, then $ \PGetR{\Chan}{\Role_2}{\Role_1}{\Args}{\PT[Q]} $ in $ \PT $.
			Since $ \Chan $ and the action are minimal, $ \PGetR{\Chan}{\Role_2}{\Role_1}{\Args}{\PT[Q]} $ is unguarded.
			By Rule~(\textsf{RGet}), then $ \PT \step \PT' $.
		\item[$ \PGetR{\Chan}{\Role_2}{\Role_1}{\Args}{\PT[Q]} $] By Lemma~\ref{lem:errorFredom}, then $ \MQ{\Chan}{\Role_1}{\Role_2}{\MessR{\Expr[v]}\#\Queue} $ in $ \PT $.
			Since $ \Chan $ and the action are minimal, $ \MQ{\Chan}{\Role_1}{\Role_2}{\MessR{\Expr[v]}\#\Queue} $ is unguarded.
			By Rule~(\textsf{RGet}), then $ \PT \step \PT' $.
		\textcolor{blue}{\item[$ \PSendU{\Chan}{\Role_1}{\Role_2}{\Label}{\Expr[y]}{\PT[Q]} $] If $ \Chan \sharp \Delta $ then the typing rules in the Figures~\ref{fig:typingRules} and \ref{fig:runtimeTypingRules} and (\textsf{Req}), (\textsf{Acc}), and (\textsf{Res2}) in particular ensure that $ \MQ{\Chan}{\Role_1}{\Role_2}{\MT} $ in $ \PT $.
			If $ \Chan $ is free in $ \Delta $, then $ \MQ{\Chan}{\Role_1}{\Role_2}{\MT} $ in $ \PT $, because of coherence.
			Since $ \Chan $ and the action are minimal, $ \MQ{\Chan}{\Role_1}{\Role_2}{\MT} $ is unguarded.
			By Rule~(\textsf{USend}), then $ \PT \step \PT' $.}
		\textcolor{blue}{\item[$ \MQ{\Chan}{\Role_1}{\Role_2}{\MessU{\Label}{\Expr[v]}\#\Queue} $] Then the typing rules and coherence ensure that either there is no actor $ \AT{\Chan}{\Role_2} $, the actor $ \AT{\Chan}{\Role_2} $ skipped, or $ \PGetU{\Chan}{\Role_2}{\Role_1}{\Label}{\Expr[dv]}{\Args}{\PT[Q]} $ in $ \PT $.
			If there is no actor $ \AT{\Chan}{\Role_2} $, then it has crashed.
			Then $ \fpCrash(\PT[Q], \ldots) $ was satisfied with $ \AT{\Chan}{\Role_2} \in \Actors{\PT[Q]} $.
			By Condition~\ref{cond:all}.\ref{cond:fpCrashImpliesML}, then eventually $ \fpML(\Chan, \Role_1, \Role_2, \Label, \ldots) $.
			By Rule~(\textsf{ML}), then $ \PT \step \PT' $.
			If the actor $ \AT{\Chan}{\Role_2} $ skipped this reception, then $ \fpUSkip(\Chan, \Role_2, \Role_1, \Label, \ldots) $.
			By Condition~\ref{cond:all}.\ref{cond:fpMLifffpUSkip}, then $ \fpML(\Chan, \Role_1, \Role_2, \Label, \ldots) $.
			By Rule~(\textsf{ML}), then $ \PT \step \PT' $.
			If $ \PGetU{\Chan}{\Role_2}{\Role_1}{\Label}{\Expr[dv]}{\Args}{\PT[Q]} $ in $ \PT $, since $ \Chan $ and the action are minimal, then this term is unguarded.
			By Condition~\ref{cond:all}.\ref{cond:ugetValid} and Rule~(\textsf{UGet}), then $ \PT \step \PT' $.}
		\textcolor{blue}{\item[$ \PGetU{\Chan}{\Role_2}{\Role_1}{\Label}{\Expr[dv]}{\Args}{\PT[Q]} $] Then the typing rules and coherence ensure that there is the queue $ \MQS{\Chan}{\Role_1}{\Role_2} $ and either $ \MessU{\Label}{\Expr[v]} $ is on top of it, or this message was dropped, or the sender did not yet send this message, or the sender crashed before transmitting this message.
			If $ \MessU{\Label}{\Expr[v]} $ is on top of the queue, then $ \PT \step \PT' $ by Rule~(\textsf{UGet}) and Condition~\ref{cond:all}.\ref{cond:ugetValid}.
			If the message was dropped, then $ \fpML(\Chan, \Role_1, \Role_2, \Label, \ldots) $.
			By Condition~\ref{cond:all}.\ref{cond:fpMLifffpUSkip}, then $ \fpUSkip(\Chan, \Role_2, \Role_1, \Label, \ldots) $.
			By Rule~(\textsf{USkip}), then $ \PT \step \PT' $.
			If the sender did not yet send the message, we proceed as in the Case~$ \PSendU{\Chan}{\Role_1}{\Role_2}{\Label}{\Expr[y]}{\PT[Q]} $ above.
			If the sender crashed, then $ \fpCrash(\PT[Q], \ldots) $ with $ \AT{\Chan}{\Role_1} \in \Actors{\PT[Q]} $.
			By Condition~\ref{cond:all}.\ref{cond:fpCrashImpliesSkip}, then eventually $ \fpUSkip(\Chan, \Role_2, \Role_1, \Label, \ldots) $.
			By Rule~(\textsf{USkip}), then $ \PT \step \PT' $.}
		\item[$ \PSelR{\Chan}{\Role_1}{\Role_2}{\Label}{\PT[Q]} $] If $ \Chan \sharp \Delta $ then the typing rules in the Figures~\ref{fig:typingRules} and \ref{fig:runtimeTypingRules} and (\textsf{Req}), (\textsf{Acc}), and (\textsf{Res2}) in particular ensure that $ \MQ{\Chan}{\Role_1}{\Role_2}{\MT} $ in $ \PT $.
			If $ \Chan $ is free in $ \Delta $, then $ \MQ{\Chan}{\Role_1}{\Role_2}{\MT} $ in $ \PT $, because of coherence.
			Since $ \Chan $ and the action are minimal, $ \MQ{\Chan}{\Role_1}{\Role_2}{\MT} $ is unguarded.
			By Rule~(\textsf{RSel}), then $ \PT \step \PT' $.
		\item[$ \MQ{\Chan}{\Role_1}{\Role_2}{\MessBR{\Label}\#\Queue} $] By Lemma~\ref{lem:errorFredom}, then $ \PBranR{\Chan}{\Role_2}{\Role_1}{\Set{ \Label_i.\PT[Q]_i }_{i \in \indexSet}} $ in $ \PT $.
			Since $ \Chan $ and the action are minimal, $ \PBranR{\Chan}{\Role_2}{\Role_1}{\Set{ \Label_i.\PT[Q]_i }_{i \in \indexSet}} $ is unguarded.
			By (\textsf{RBran}), then $ \PT \step \PT' $.
		\item[$ \PBranR{\Chan}{\Role_2}{\Role_1}{\Set{ \Label_i.\PT[Q]_i }_{i \in \indexSet}} $] By Lemma~\ref{lem:errorFredom}, then $ \MQ{\Chan}{\Role_1}{\Role_2}{\MessBR{\Label}\#\Queue} $ in $ \PT $ with $ j \in \indexSet $ and $ \Label \compL \Label_j $.
			Since $ \Chan $ and the action are minimal, $ \MQ{\Chan}{\Role_1}{\Role_2}{\MessBR{\Label}\#\Queue} $ is unguarded.
			By (\textsf{RBran}), then $ \PT \step \PT' $.
		\textcolor{blue}{\item[$ \PSelW{\Chan}{\Role}{\Role[R]}{\Label}{\PT[Q]} $] If $ \Chan \sharp \Delta $ then the typing rules and (\textsf{Req}), (\textsf{Acc}), and (\textsf{Res2}) in particular ensure that $ \MQ{\Chan}{\Role}{\Role_i}{\MT_i} $ in $ \PT $ for all $ \Role_i \in \Role[R] $.
			If $ \Chan $ is free in $ \Delta $, then $ \MQ{\Chan}{\Role}{\Role_i}{\MT_i} $ in $ \PT $ for all $ \Role_i \in \Role[R] $, because of coherence.
			Since $ \Chan $ and the action are minimal, all $ \MQ{\Chan}{\Role}{\Role_i}{\MT_i} $ are unguarded.
			By Rule~(\textsf{WSel}), then $ \PT \step \PT' $.}
		\textcolor{blue}{\item[$ \MQ{\Chan}{\Role_1}{\Role_2}{\MessBW{\Label}\#\Queue} $] Then the typing rules, coherence, and Condition~\ref{cond:all}.\ref{cond:fpWskip} ensure that either there is no actor $ \AT{\Chan}{\Role_2} $ or $ \PBranW{\Chan}{\Role_2}{\Role_1}{\Set{ \Label_i.\PT[Q]_i }_{i \in \indexSet, \LabelD}} $ in $ \PT $.
			In the former case the actor $ \AT{\Chan}{\Role_2} $ has crashed.
			Then, since the message queue will not be needed any more, proceed with the next session and action that are minimal if you ignore the queue $ \MQS{\Chan}{\Role_1}{\Role_2} $.}
		\textcolor{blue}{\item[$ \PBranW{\Chan}{\Role_2}{\Role_1}{\Set{ \Label_i.\PT[Q]_i }_{i \in \indexSet, \LabelD}} $] Then the typing rules and coherence ensure that there is the queue $ \MQS{\Chan}{\Role_1}{\Role_2} $ and either $ \MessBW{\Label} $ is on top of it, or the sender did not yet send this message, or the sender crashed before transmitting this message.
			If $ \MessBW{\Label} $ is on top of the queue, then $ \PT \step \PT' $ by Rule~(\textsf{WBran}).
			If the sender did not yet send the message, we proceed as in the Case~$ \PSelW{\Chan}{\Role}{\Role[R]}{\Label}{\PT[Q]} $ above.
			If the sender crashed, then $ \fpCrash(\PT[Q]', \ldots) $ with $ \AT{\Chan}{\Role_1} \in \PT[Q]' $.
			By Condition~\ref{cond:all}.\ref{cond:fpCrashImpliesSkip}, then eventually $ \fpWSkip(\Chan, \Role_2, \Role_1, \Label, \ldots) $ for $ \AT{\Chan}{\Role_2} $.
			By Rule~(\textsf{WSkip}), then $ \PT \step \PT' $.}
		\item[$ \PDelA{\Chan}{\Role_1}{\Role_2}{\AT{\Chan'}{\Role}}{\PT[Q]} $] If $ \Chan \sharp \Delta $ then the typing rules in the Figures~\ref{fig:typingRules} and \ref{fig:runtimeTypingRules} and (\textsf{Req}), (\textsf{Acc}), and (\textsf{Res2}) in particular ensure that $ \MQ{\Chan}{\Role_1}{\Role_2}{\MT} $ in $ \PT $.
			If $ \Chan $ is free in $ \Delta $, then $ \MQ{\Chan}{\Role_1}{\Role_2}{\MT} $ in $ \PT $ because of coherence.
			Since $ \Chan $ and the action are minimal, $ \MQ{\Chan}{\Role_1}{\Role_2}{\MT} $ is unguarded.
			By Rule~(\textsf{Deleg}), then $ \PT \step \PT' $.
		\item[$ \MQ{\Chan}{\Role_1}{\Role_2}{\AT{\Chan'}{\Role}\#\Queue} $] By Lemma~\ref{lem:errorFredom}, then $ \PDelB{\Chan}{\Role_2}{\Role_1}{\AT{\Chan''}{\Role'}}{\PT[Q]} $ in $ \PT $.
			Since $ \Chan $ and the action are minimal, $ \PDelB{\Chan}{\Role_2}{\Role_1}{\AT{\Chan''}{\Role'}}{\PT[Q]} $ is unguarded.
			By Rule~(\textsf{SRecv}), then $ \PT \step \PT' $.
		\item[$ \PDelB{\Chan}{\Role_2}{\Role_1}{\AT{\Chan''}{\Role'}}{\PT[Q]} $] By Lemma~\ref{lem:errorFredom}, then $ \MQ{\Chan}{\Role_1}{\Role_2}{\AT{\Chan'}{\Role}\#\Queue} $ in $ \PT $.
			Since $ \Chan $ and the action are minimal, $ \MQ{\Chan}{\Role_1}{\Role_2}{\AT{\Chan'}{\Role}\#\Queue} $ is unguarded.
			By Rule~(\textsf{SRecv}), then $ \PT \step \PT' $.
	\end{description}
	2.) By Theorem~\ref{thm:subjectReduction}, $ \PT' $ is well-typed \wrt coherent $ \Gamma, \Delta' $ with $ \Delta \stackrel{\Chan}{\Mapsto} \Delta' $.
	If $ \PT $ does not contain recursion, then the session environment strictly reduces with every reduction of the process that does not reduce a conditional, though it may grow in cases of session initialisation.
	Since $ \PT $ is finite and loop-free there are only finitely many possible session initialisations.
	We can repeat the above proof for 1.) to show that $ \PT $ cannot get stuck as long as it contains action prefixes.
\end{proof}

The proof of progress relies on the Condition~\ref{cond:all}.\ref{cond:ugetValid}--\ref{cond:all}.\ref{cond:fpWskip} to ensure that failures cannot block the system: in the failure-free case \unrel messages are eventually received (\ref{cond:all}.\ref{cond:ugetValid}), the receiver of a lost message can skip (\ref{cond:all}.\ref{cond:fpMLifffpUSkip}), no receiver is blocked by a crashed sender (\ref{cond:all}.\ref{cond:fpCrashImpliesSkip}), messages towards receivers that crashed or skipped can be dropped (\ref{cond:all}.\ref{cond:fpCrashImpliesML} + \ref{cond:all}.\ref{cond:fpMLifffpUSkip}), and branching requests cannot be ignored (\ref{cond:all}.\ref{cond:fpWskip}).

%%%%%%%%%%%%%%%%%%%%%%%%%%%%%%%%%%%%%%
%  Verifying Distributed Algorithms  %
%%%%%%%%%%%%%%%%%%%%%%%%%%%%%%%%%%%%%%

\section{The Rotating Coordinator Algorithm}
\label{sec:example}

To illustrate our approach we study a Consensus algorithm by Chandra and Toueg (\cf \cite{ChandraToueg96,nestmann07}).
This algorithm is small but not trivial.
It was designed for systems with crash failures, but the majority of the algorithm can be implemented with \unrel communication.

As this algorithm models consensus, the goal is that every agent $i$ eventually decides on a proposed belief value, where no two agents decide on different values.
It is a round based algorithm, where each round consists of four phases.
In each round, one process acts as a coordinator decided by round robin, denoted by $ \Role[c] $.
\begin{compactdesc}
	\item[In Phase~1] every agent $ i $ sends its current belief to the coordinator $ \Role[c] $.
	\item[In Phase~2] the coordinator waits until it has received at least half of the messages of the current round and then sends the best belief to all other agents.
	\item[In Phase~3] the agents either receive the message of the coordinator or suspect the coordinator to have crashed and reply with ack or nack accordingly. Suspicion can yield false positives.
	\item[In Phase~4] the coordinator waits, as in Phase~2, until it has received at least half of the messages of the current round and then sends a \weakR broadcast if at least half of the messages contained ack.
\end{compactdesc}

It is possible for agents to skip rounds by suspecting the coordinator of the current round and by proceeding to the next round.
There are also no synchronisation fences thus it is possible for the agents to be in different rounds and have messages of different rounds in the system.
Having agents in different rounds makes proving correctness much more difficult.

\subsection{Specification and Implementation}

Let $ {\left( \bigodot_{1 \leq i \leq n} \pi_i \right)}.\GT $ abbreviate $ \pi_1.\ldots.\pi_n.\GT $ to simplify the presentation, where $ \GT \in \globalTypes $ is a global type and $ \pi_1, \ldots, \pi_n $ are sequences of prefixes. More precisely, each $ \pi_i $ is of the form $ \pi_{i, 1}.\ldots.\pi_{i, m} $ and each $ \pi_{i, j} $ is a type prefix of the form $ \GComUS{\Role_1}{\Role_2}{\Label}{\Sort} $ or $ \GBranW{\Role}{\Role[R]}{\Label_1.\LT_1 \oplus \ldots \oplus \Label_n.\LT_n \oplus \LabelD} $, where the latter case represents a \weakR branching prefix with the branches $ \Label_1, \ldots, \Label_n, \LabelD $, the default branch $ \LabelD $, and where the next global type provides the missing specification for the default case.

We assume the sorts $\SortBel = \Set{0,1}$ and $ \SortAck = \Set{\true, \false} $.
Let $ \Role[n] $ be the number of agents.
We start with the specification of the algorithm as a global type.
\begin{align*}
	\GT_{rc}{\left( \Role[n] \right)} \deff{} & \GRep{\TypeV}{}\bigodot_{1 \leq \Role[c] \leq \Role[n]}\Big(
		{\Big( \bigodot_{1 \leq \Role[i] \leq \Role[n], \Role[i] \neq \Role[c]} \GComUS{\Role[i]}{\Role[c]}{\Label[p]_1}{\SortBel} \Big)}.
		{\Big( \bigodot_{1 \leq \Role[i] \leq \Role[n], \Role[i] \neq \Role[c]} \GComUS{\Role[c]}{\Role[i]}{\Label[p]_2}{\SortBel} \Big)}.\\
		& \hspace{5em} {\Big( \bigodot_{1 \leq \Role[i] \leq \Role[n], \Role[i] \neq \Role[c]} \GComUS{\Role[i]}{\Role[c]}{\Label[p]_3}{\SortAck} \Big)}.\\
		& \hspace{5.5em} \GBranW{\Role[c]}{\Set{\Role[i] | 1 \leq \Role[i] \leq \Role[n], \Role[i] \neq \Role[c]}}{\Label[Zero].\GEnd \oplus \Label[One].\GEnd \oplus \LabelD}
	\Big).\TypeV
\end{align*}
It specifies a loop containing a collection of $ \Role[n] $ rounds, where each process functions as a coordinator once.
This collection of $ \Role[n] $ rounds is specified with the first $ \bigodot $, \ie the continuation of $ \LabelD $ in the end of the description is the specification of round $ \Role + \Role[1] $ for all rounds $ \Role < \Role[n] $ whereas in the last round $ \Role[n] $ we have $ \LabelD.\TypeV $.
By unfolding the recursion on $ \TypeV $, $ \GT_{rc}{\left( \Role[n] \right)} $ starts the next $ \Role[n] $ rounds.
The following three $ \bigodot $ specify the Phases~1--3 of the algorithm within one round.
Phase~4 is specified by a \weakR branching that does not need a $ \bigodot $, since it is a broadcast.

In Phase~1 all processes except the coordinator $ \Role[c] $ transmit a belief to $ \Role[c] $ using label $ \Label[p]_1 $.
In Phase~2 $ \Role[c] $ transmits a belief to all other processes using label $ \Label[p]_2 $.
Then all processes transmit a value of type $ \SortAck $ to the coordinator using label $ \Label[p]_3 $ in Phase~3.
Finally, in Phase~4 the coordinator broadcasts one of the labels $ \Label[Zero] $, $ \Label[One] $, or $ \LabelD $, where the first two labels represent a decision and terminate the protocol, whereas the default label $ \LabelD $ specifies the need for another round.
All interactions in the specification are \unrel or \weakR.

Let $ \left( \bigodot_{1 \leq i \leq n} \pi_i \right).\PT $ abbreviate the sequence $ \pi_1.\ldots.\pi_n.\PT $, where $ \PT \in \processes $ is a process and $ \pi_1, \ldots, \pi_n $ are sequences of prefixes.
\begin{align*}
	& \RCSys \deff \PReq{\Chan[a]}{\Role[n]}{\Chan}{\RCP[n]{\InitKnowledge[n]}} \mid \prod_{1 \leq \Role[i] < \Role[n]} \PAcc{\Chan[a]}{\Role[i]}{\Chan}{\RCP[i]{\InitKnowledge[i]}}\\
	& \RCP{\Knowledge} \! \deff \PRep{\ProcV}{{\Big( \bigodot_{1 \leq \Role[c] \leq \Role[n]} \PITE{\Role[i] = \Role[c]}{\RCPioS{i}}{\RCPio[NC]{i}}\!\!\Big)}. \ProcV}
\end{align*}
$\RCSys$ describes the session initialisation of a system with $\Role[n]$ participants and the initial knowledge \InitKnowledge, where \InitKnowledge[i] is a vector that contains only the initial knowledge of role $\Role[i]$.
Let $ \Length{\Knowledge} \deff \Length{\Set{i \mid v_i \neq \bot }} $ return the number of non-empty entries.
$\RCP{\Knowledge}$ describes a process $\Role[i]$ in a set of $\Role[n]$ processes.
Each process then runs for at most $\Role[n]$ rounds and then loops.
\begin{align*}
	\RCPioS{c} \deff & \Big( \bigodot_{1 \leq \Role[i] \leq \Role[n], \Role[i] \neq \Role[c]} \PGetUS{\Chan}{\Role[c]}{\Role[i]}{\Label[p]_1}{\Args[\bot]}{\Args[v]_i} \Big).\\
	& \myif \; \Length{\Knowledge} \geq \ceil{\frac{\Role[n]-1}{2}}
	\; \mythen \; \RCPiioS{c}{ \FuncBest(\Knowledge)} \; \myelse \; \RCPiioS{c}{\InitKnowledge[c]}\\
	\RCPio[NC]{i} \deff & \PSendU{\Chan}{\Role[i]}{\Role[c]}{\Label[p]_1}{\Args[v]_i}{\RCPiio[NC]{i}{\Knowledge}}
\end{align*}
Every non-coordinator $ \RCPio[NC]{i} $ sends its own belief via unreliable communication to the coordinator and proceeds to Phase~2.
The coordinator receives (some of) these messages and proceeds to Phase~2.
If the reception of at least half of the messages was successful, it is updating its belief using the function $ \FuncBest() $ that replaces all belief values with the best one.
Otherwise, it discards all beliefs except its own.
We are using $\big\lceil\frac{\Role[n]-1}{2}\big\rceil$ to check for a majority, since in our implementation processes do not transmit to themselves.
\begin{align*}
	\RCPiioS{c}{\Knowledge} \deff & \Big( \bigodot_{1 \leq \Role[i] \leq \Role[n], \Role[i] \neq \Role[c]} \PSendUS{\Chan}{\Role[c]}{\Role[i]}{\Label[p]_2}{\Args[v]_i} \Big).\RCPiiioS{c}\\
	\RCPiio[NC]{i}{\Knowledge} \deff & \PGetUS{\Chan}{\Role[i]}{\Role[c]}{\Label[p]_2}{\Args[\bot]}{\Args[x]}.\\
	& \myif \; \Args[x] = \Args[\bot] \; \mythen \; \RCPiiin{i}{\Knowledge}{\false}
	\; \myelse \; \RCPiiin{i}{\FuncUpdate(\Knowledge, \Role[i], \Args[x])}{\true}
\end{align*}
In Phase~2, the coordinator sends its updated belief to all other processes via unreliable communication and proceeds.
Note that, $ \Args[v]_i $ is either $ \bot $ for all $ \Role[i] \neq \Role[c] $ or the best belief identified in Phase~1.
If a non-coordinator process successfully receives a belief other than $ \bot $, it updates its own belief with the received value and proceeds to Phase~3, where we use the Boolean value $ \true $ for the acknowledgement.
If the coordinator is suspected to have crashed or $ \bot $ was received, the process proceeds to Phase~3 with the Boolean value $ \false $, signalling nack.
\begin{align*}
	\RCPiiioS{c} \deff & \Big( \bigodot_{1 \leq \Role[i] \leq \Role[n], \Role[i] \neq \Role[c]} \PGetUS{\Chan}{\Role[c]}{\Role[i]}{\Label[p]_3}{\Args[\bot]}{\Args[v]_i} \Big).\RCPivoS{c}\\
	\RCPiiin{i}{\Knowledge}{\Args[b]} \deff & \PSendUS{\Chan}{\Role[i]}{\Role[c]}{\Label[p]_3}{\Args[b]}.\RCPivo[NC]{i}
\end{align*}
In Phase~3, every non-coordinator sends either ack or nack to the coordinator.
If the coordinator successfully receives the message, it writes the Boolean value at the index of the sender into its knowledge vector.
In case of failure, $\bot$ is used as default.
After that the processes continue with Phase 4.
Let $ \mathcal{I} = \Set{\Role[i] \mid 1 \leq \Role[i] \leq \Role[n], \Role[i] \neq \Role[c]} $.
\begin{align*}
	\RCPivoS{c} \deff & \myif\; \FuncCountAck(\Knowledge) \geq \ceil{\frac{\Role[n]-1}{2}} \; \mythen \; (\myif \; \Args[v]_{\Role[c]} = 0 \; \mythen \; \PSelW{\Chan}{\Role[c]}{\mathcal{I}}{\Label[Zero]}{\PEnd}\\
	& \myelse \; \PSelW{\Chan}{\Role[c]}{\mathcal{I}}{\Label[One]}{\PEnd}) \; \myelse \; \PSelWS{\Chan}{\Role[c]}{\mathcal{I}}{\LabelD}\\
	\RCPivo[NC]{i} \deff & \PBranW{\Chan}{\Role[i]}{\Role[n]}{\Label[Zero].\PEnd \oplus \Label[One].\PEnd \oplus \LabelD}
\end{align*}
In Phase 4 the coordinator checks if at least half of the non-coordinator roles signalled acknowledgement, utilising the function \FuncCountAck{} to count.
If it received enough acknowledgements it transmits the decision via $ \Label[Zero] $ or $ \Label[One] $ and causes all participants to terminate.
Otherwise, the coordinator sends the default label and continues with the next round.
Remember that the missing continuation after the default label $ \LabelD $ for coordinators and non-coordinators is implemented by the next round.

We use a \weakR branching mechanism in conjunction with \unrel communication.
The algorithm was modelled for systems with crash failures but without message loss.
However, as long as the branching mechanism (\ie the specified broadcast of decisions) is \weakR, we can relax the system requirements for the remainder of the algorithm.
To ensure termination, however, we have to further restrict the number of lost messages.

\subsection{Failure Patterns}

Chandra and Toueg introduce in \cite{ChandraToueg96} also the failure detector \FDS.
The failure detector \FDS is called \emph{eventually strong}, meaning that
(1) eventually every process that crashes is permanently suspected by every correct process and
(2) there is a time after which some correct process is never suspected by any other process.
We observe that the suspicion of senders is only possible in Phase~3, where processes may suspect the coordinator of the round.
Accordingly, the failure pattern $ \fpUSkip $ implements this failure detector to allow processes to suspect unreliable coordinators in Phase~2, \ie with label~$ \Label[p]_2 $.
In Phase~1 and Phase~3 $ \fpUSkip $ may allow to suspect processes that are not crashed after the coordinator received enough messages. In all other cases these two patterns eventually return true iff the respective sender is crashed.
$ \fpUGet $ can be used to reject outdated messages, since this is not important for this algorithm we implement it with the constant true.
To ensure that messages of wrongly suspected coordinators in Phase~2 do not block the system, $ \fpML $ is eventually true for messages with label~$ \Label[p]_2 $ that were suspected using \FDS or skipped $ \Label[p]_1 $/$ \Label[p]_3 $-messages and otherwise returns false.
By the system requirements in \cite{ChandraToueg96}, no messages get lost, but it is realistic to assume that receivers can drop messages of skipped receptions on their incoming message queues.
As there are at least half of the processes required to be correct for this algorithm, we implement $ \fpCrash $ by false if only half of the processes are alive and true otherwise.
For the \weakR broadcast, $ \fpWSkip $ returns true if and only if the respective coordinator is crashed, \ie not suspected but indeed crashed.
In \cite{ChandraToueg96} this broadcast, which is called just reliable in \cite{ChandraToueg96}, is used to announce the decision.
Since we use it for branching even before a decision was reached, our implementation is less efficient compared to \cite{ChandraToueg96}. We briefly discuss in the conclusions how to regain the original algorithm.
These failure patterns satisfy the Condition~\ref{cond:all}.\ref{cond:crash}--\ref{cond:all}.\ref{cond:fpWskip}.

\subsection{Termination, Agreement, and Validity}

Following \cite{Tel94,Lynch96}, a network of processes solves Consensus if
\begin{compactdesc}
	\item[Termination] all non-failing participants eventually decide,
	\item[Agreement] all decision values are the same, and
	\item[Validity] each decision value is an initial value of some participant.
\end{compactdesc}
By \cite{nestmann07} the Rotating Coordinator algorithm solves Consensus in the presence of crash-failures.

In the proof of termination well-typedness of the implementation and progress in Theorem~\ref{thm:progress} are the main ingredients.
Well-typedness ensures that the implementation follows its specification and progress ensures that it cannot get stuck.
Apart from that, we only need to deduce from the system requirements, \ie the used failure detector, that the implementation will eventually exit the loop.

\begin{lem}
	\label{lem:termination}
	The implementation of the rotating coordinator example satisfies termination.
\end{lem}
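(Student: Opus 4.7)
The plan is to combine well-typedness of $\RCSys$ against $\GT_{rc}(\Role[n])$ with Theorem~\ref{thm:progress} (progress/session fidelity) and the liveness guarantees of the failure detector $\FDS$ encoded in the failure patterns. First I would check that $\RCSys$ is well-typed with respect to a global environment $\Gamma$ containing $\Typed{\Chan[a]}{\GT_{rc}(\Role[n])}$ and an empty session environment: the typing rules in Figure~\ref{fig:typingRules} apply componentwise to $\RCPio{i}$ modulo projecting $\GT_{rc}(\Role[n])$ onto each role, and the $\myif$-branches in $\RCPioS{c}$ and $\RCPivoS{c}$ only change values, not the communication structure. The session initialisation assumption of Theorem~\ref{thm:progress} holds because all $\Role[n]$ acceptances and the single request share the same $\Chan[a]$ and target $\Chan$, and the absence of delegation and of nested sessions trivially gives freedom from cyclic dependencies.

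Next I would apply Theorem~\ref{thm:progress} to conclude that from any reachable state either every actor has terminated or crashed, or some reduction is possible. Together with subject reduction (Theorem~\ref{thm:subjectReduction}), this pins down that every reachable configuration follows $\GT_{rc}(\Role[n])$: the system proceeds through the four phases of each round in order, and the $\myif$-branches ensure that the \weakR broadcast of Phase~4 chooses either $\Label[Zero]$, $\Label[One]$, or the default $\LabelD$. The $\Label[Zero]$ and $\Label[One]$ branches lead to $\PEnd$ in every non-crashed $\RCPio{i}$, so decision in any round suffices to terminate all non-crashed participants.

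The core, forward-going step is to show that some round eventually produces a $\Label[Zero]$- or $\Label[One]$-decision. Here I would use the defining property of $\FDS$: there exists a correct process $\Role[c^*]$ and a time after which $\Role[c^*]$ is never suspected by any correct process. By construction the round-robin scheduler eventually designates $\Role[c^*]$ as coordinator at a round after that time. In that round the failure patterns $\fpUSkip$ and $\fpML$ instantiated in Section~\ref{sec:example} are false on messages to/from $\Role[c^*]$ with labels $\Label[p]_2$ and $\Label[p]_3$, so by the reduction rules (\textsf{UGet}) and the absence of (\textsf{USkip}) or (\textsf{ML}) firings, all correct non-coordinators receive the coordinator's Phase~2 belief, reply with $\true$ in Phase~3, and the coordinator counts at least $\lceil(\Role[n]-1)/2\rceil$ acks by the majority assumption encoded in $\fpCrash$. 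Hence the $\myif$-guard of $\RCPivoS{c^*}$ selects $\Label[Zero]$ or $\Label[One]$ and (\textsf{WSel}) broadcasts the decision, after which each non-crashed $\RCPivo[NC]{i}$ fires (\textsf{WBran}) on that label and reduces to $\PEnd$.

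The main obstacle is the last step: reasoning across rounds that the failure patterns give us a ``good'' round at all. Progress alone only guarantees individual steps, so the argument requires lifting the eventually-strong property of $\FDS$ and the majority bound of $\fpCrash$ to a statement about reachable states. I would formalise this by a measure that decreases each round either by reducing the number of suspected coordinators (once $\Role[c^*]$ stops being suspected) or by triggering the decision branch; combined with the fact that $\fpWSkip$ can fire only after the coordinator has actually crashed (Condition~\ref{cond:all}.\ref{cond:fpWskip}), this rules out livelock through the default $\LabelD$-branch and yields termination for all non-crashed participants.
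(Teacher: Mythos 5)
Your proposal is correct and follows essentially the same route as the paper: establish well-typedness, invoke Theorem~\ref{thm:progress} to rule out deadlock, and then use the eventually-strong property of $\FDS$ (as encoded in the failure patterns) to argue that some round eventually has an unsuspected correct coordinator, in which the decision is reached and broadcast so that all non-crashed participants terminate. The paper's proof is just a terser version of your argument; your additional remarks on formalising the ``good round'' via a measure go beyond what the paper writes down but do not change the approach.
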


\begin{proof}
  The global type ensures progress of the system, \ie it either loops forever or terminates after a broadcast with $ \Label[Zero] $ or $ \Label[One] $.
  Following the assumption of \cite{ChandraToueg96}, there is eventually a round in which the coordinator is not suspected by any correct process and in which the coordinator does not suspect correct processes (or at least trusts a majority of them).
  Then the coordinator receives enough beliefs in Phase~1, transmits the best belief in Phase~2, receives enough acknowledgements in Phase~3, and transmits $ \Label[Zero] $ or $ \Label[One] $ in Phase~4. Finally, all correct processes receive the decision and terminate.
  Well-typedness and progress in Theorem~\ref{thm:progress} ensure that no process deadlocks along this way.
\end{proof}

Validity can be checked by analysing the code of a single process and check whether it uses (to decide and to transmit) only its own initial belief or the belief it received from others. Session fidelity in Theorem~\ref{thm:progress} then ensures that there are no communication mismatches and all steps preserve this property.

\begin{lem}
	\label{lem:validity}
	The implementation of the rotating coordinator example satisfies validity.
\end{lem}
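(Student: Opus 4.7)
The plan is to establish an invariant on system configurations that tracks the provenance of belief values, and then show that any value carried by a $ \Label[Zero] $ or $ \Label[One] $ decision must satisfy the invariant. Specifically, let $ \ValidBeliefs \deff \Set{ \InitKnowledge[i].i \mid 1 \leq \Role[i] \leq \Role[n] } \subseteq \SortBel $ denote the set of initial beliefs. I would prove, by induction on the reductions starting from $ \RCSys $, that in every reachable configuration, every non-$ \bot $ value of sort $ \SortBel $ that occurs either as an entry $ \Args[v]_i $ of some knowledge vector of an unguarded process actor or as the payload of a belief-carrying message on some queue $ \MQS{\Chan}{\Role_1}{\Role_2} $ belongs to $ \ValidBeliefs $.

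The base case is immediate: in $ \RCSys $, the only belief values occurring anywhere are the entries $ \InitKnowledge[i].i $ themselves. For the induction step, I would do a case analysis on the reduction rule applied. The communication steps (\textsf{USend})/(\textsf{UGet})/(\textsf{USkip}) and (\textsf{ML}) on the Phase~1 and Phase~2 queues only move, duplicate, or drop belief values already present in the configuration, so the invariant is preserved. Session fidelity (Theorem~\ref{thm:progress}) together with subject reduction (Theorem~\ref{thm:subjectReduction}) guarantees that no communication mismatch can feed a value of a different sort into a belief slot. The only points at which new values are written into knowledge vectors are (i)~reception in Phase~1 and Phase~2, which copies an incoming payload that is in $ \ValidBeliefs $ by the induction hypothesis, and (ii)~the coordinator's recomputation $ \FuncBest(\Knowledge) $ at the start of Phase~2, which by definition picks one of the values already in $ \Knowledge $ and propagates it to all vector entries; all such entries therefore stay in $ \ValidBeliefs $. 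In the minority branch of Phase~2 the coordinator resets to $ \InitKnowledge[c] $, which is trivially in $ \ValidBeliefs $. Phase~3 and Phase~4 manipulate only values of sort $ \SortAck $ and branch labels, and hence cannot alter the invariant on belief values.

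Finally, a decision occurs only via rule (\textsf{WSel}) when a coordinator in Phase~4 selects $ \Label[Zero] $ or $ \Label[One] $; inspection of $ \RCPivoS{c} $ shows that the chosen label is determined by $ \Args[v]_{\Role[c]} $, the coordinator's own slot in its current knowledge vector. By the invariant, $ \Args[v]_{\Role[c]} \in \ValidBeliefs $, and by the conditional guarding this write ($ \Args[v]_{\Role[c]} = 0 $ versus otherwise) the emitted label faithfully encodes this value. A process that later receives the label through rule (\textsf{WBran}) decides on the same value, and the (\textsf{WSkip}) branch does not yield a decision at all. Hence every decided value is in $ \ValidBeliefs $, which is exactly validity.

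I expect the main obstacle to be a careful bookkeeping step rather than a conceptual one: formalising the invariant so that it survives the interplay between default values (the $ \Args[\bot] $ used by $ \PGetU{}{}{}{}{}{}{} $), the shape of $ \FuncBest $ on partially-filled vectors, and the fact that the same session may simultaneously contain messages from different rounds still in flight. Once the invariant is phrased so that it speaks only about values currently stored or currently in transit (rather than about rounds), the case analysis becomes mechanical and relies only on session fidelity plus the local definitions of $ \FuncBest $ and $ \FuncUpdate $.
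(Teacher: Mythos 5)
Your proof is correct and follows essentially the same route as the paper: the paper likewise argues that every belief value is either an initial belief or obtained from another process (with $\FuncBest$ only selecting among received values and the $\Label[Zero]$/$\Label[One]$ labels encoding the coordinator's current belief), and invokes session fidelity from Theorem~\ref{thm:progress} to rule out communication mismatches. You merely make the paper's informal per-process code inspection explicit as a provenance invariant preserved by every reduction, which is a faithful formalisation of the same argument.
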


\begin{proof}
	Well-typedness and the global type ensure that the communication structure is as specified.
	By the implementation, in Phase~1 every non-coordinator sends its own belief to the coordinator on label $\Label[p_{1}]$ of which the coordinator selects the best value using the function \FuncBest, thus altering its own belief with a valid value.
	By assumption, \FuncBest{} does not modify any received belief value but only chooses one among them.
	Label $\Label[p_{2}]$ is used to propagate the best belief of the coordinator of this round.
	On successful reception of this message, the non-coordinators update their own belief with the received value, which is a valid value.
	$ \Label[Zero] $ and $ \Label[One] $ are chosen matching to the current belief of the coordinator, \ie the local belief is sent and updated on reception without alternation.
	As there are no more labels left on which belief values are sent, we conclude, that validity holds for each local process.
Because of the global type and the communication protocol it ensures following session fidelity in Theorem~\ref{thm:progress}---there are no communication mismatches and the received messages are of the expected sort---, then validity holds globally.
\end{proof}

Agreement follows from the \weakR broadcast, because the decision is broadcasted.

\begin{lem}
	\label{lem:ageement}
  	The implementation of the rotating coordinator example satisfies agreement.
\end{lem}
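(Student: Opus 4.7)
The plan is to combine the session-type guarantees from Theorem~\ref{thm:progress} with a classical Chandra--Toueg quorum-intersection argument. Decisions occur only in Phase~4, and only when the coordinator broadcasts either $\Label[Zero]$ or $\Label[One]$ via the \weakR branching prefix $ \PSelW{\Chan}{\Role[c]}{\mathcal{I}}{\cdot}{} $; the default branch $ \LabelD $ does not decide but merely proceeds to the next round. So the obligation splits into two parts: (i) \emph{intra-round agreement}, \ie all processes that decide in the same round decide the same value; and (ii) \emph{inter-round agreement}, \ie if a decision $ v $ is reached in some round $ r $, every later decision is also $ v $.

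Intra-round agreement follows directly from the \weakR broadcast semantics. By well-typedness of $ \RCSys $ and session fidelity (Theorem~\ref{thm:progress}), the Phase~4 communication is the broadcast specified in $ \GT_{rc}{\left( \Role[n] \right)} $. By error-freedom (Lemma~\ref{lem:errorFredom}) together with the conditions on $ \fpWSkip $ from Condition~\ref{cond:all}.\ref{cond:fpWskip}, any non-crashed actor $ \AT{\Chan}{\Role[i]} $ with $ \Role[i] \in \mathcal{I} $ either consumes the very $ \MessBW{\Label} $ sent by $ \Role[c] $ in that round via Rule~(\textsf{WBran}) or applies Rule~(\textsf{WSkip}); the latter is only enabled once $ \Role[c] $ has actually crashed, in which case by Condition~\ref{cond:all}.\ref{cond:fpCrashImpliesML} no live broadcast message is left pending. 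Hence every process that takes a decision branch in round $ r $ takes the branch selected by that round's coordinator.

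For inter-round agreement, I would prove the standard locking invariant: if the coordinator of round $ r $ broadcasts $ \Label[Zero] $ (resp.\ $ \Label[One] $), then in every round $ r' \geq r $ the coordinator, whenever it advances past Phase~1, computes $ \FuncBest(\Knowledge) = 0 $ (resp.\ $ 1 $), and every non-coordinator that updates its belief in Phase~2 updates it to $ 0 $ (resp.\ $ 1 $). The base case uses that a $ \Label[Zero] $-broadcast required $ \FuncCountAck(\Knowledge) \geq \lceil (\Role[n]-1)/2 \rceil $ acks in Phase~3, which in turn required those acks' senders to have successfully received the coordinator's Phase~2 message and updated their belief to $ 0 $. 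For the inductive step, the new coordinator only proceeds past Phase~1 when it has received $ \geq \lceil (\Role[n]-1)/2 \rceil $ beliefs, and by quorum intersection at least one of these belongs to a process that acked in round $ r $ and has since only overwritten its belief with values from coordinators whose beliefs were themselves locked to $ 0 $ by the inductive hypothesis. Assuming that $ \FuncBest $ returns the unique locked value whenever any element of $ \Knowledge $ carries it (the natural assumption under which this algorithm is known to solve consensus; this is where validity of $ \FuncBest $ enters), the invariant is preserved.

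The main obstacle is the inductive step of the locking invariant, specifically the bookkeeping that keeps track of \emph{whose} belief has been overwritten and with what value across interleaved rounds, given that Theorem~\ref{thm:progress} allows processes to be in different rounds simultaneously and message queues may still contain stale $ \Label[p]_2 $ messages from earlier rounds. Here session fidelity is essential: because the protocol is a single recursion whose body has no intra-round branching on beliefs, the sort of every queued payload is $ \SortBel $ and the update function $ \FuncUpdate $ never assigns a value outside those contributed by some earlier coordinator, which lets the invariant be carried through along reduction chains by subject reduction (Theorem~\ref{thm:subjectReduction}).
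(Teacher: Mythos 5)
Your intra-round argument is, in essence, the paper's entire proof: the paper observes that a decision can only be taken via the \weakR broadcast, that $\MessBW{\Label[Zero]}$ and $\MessBW{\Label[One]}$ cannot be dropped and cannot be skipped while the sender is alive (Condition~\ref{cond:all}.\ref{cond:fpWskip}), and concludes that all decision values originate from one and the same broadcast. What the paper exploits --- and what your proposal misses --- is that this observation already closes the inter-round case. A process can only move past the Phase~4 branching of round $r$ by consuming a $\MessBW{\LabelD}$ or by Rule~(\textsf{WSkip}), and the latter is enabled only once the coordinator of round $r$ has crashed \emph{and} the corresponding queue is empty. So if the coordinator of round $r$ broadcasts $\Label[Zero]$ or $\Label[One]$, every non-crashed participant is forced to consume exactly that message and terminates with $\PEnd$; nobody ever enters a later round, there is no second deciding broadcast, and inter-round agreement is vacuous.

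This matters because the inter-round locking argument you propose does not actually go through for this implementation. First, the quorum-intersection step fails: the coordinator waits for only $\ceil{\frac{\Role[n]-1}{2}}$ messages out of the $\Role[n]-1$ non-coordinators of \emph{its own} round, and two such ``at least half'' sets drawn from the different non-coordinator sets of rounds $r$ and $r'$ need not intersect (take $\Role[n]=5$: the threshold is $2$, and two disjoint $2$-element subsets of the four respective non-coordinators are possible). Second, the locking step needs $\FuncBest$ to prefer the most recently locked estimate --- in Chandra--Toueg this is arranged with timestamps --- whereas the paper only assumes $\FuncBest$ selects some received value, which suffices for validity but not for locking. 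Neither assumption is available here, and neither is needed once you notice that the \weakR broadcast of the decision prevents any subsequent round from being reached. I would keep your first paragraph and the \fpWSkip{}/error-freedom reasoning, and replace the locking invariant by the observation that at most one deciding broadcast can ever occur.
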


\begin{proof}
	The only way to decide is via a broadcast.
	Let us fix a coordinator $ \Role[c] $ of such a deciding round.
	To decide and broadcast a value, the coordinator $ \Role[c] $ needs a majority of processes in its round.
	Messages $ \MessBW{\Label[Zero]} $ and $ \MessBW{\Label[One]} $ cannot be dropped and the failure pattern $ \fpWSkip $ does not allow to suspect a sender $ \Role[c] $ of \weakR branching that is not crashed. Hence, a majority of processes receive the decision.
	Accordingly all decision values originate from the same broadcast.
	This ensures agreement.
\end{proof}

Combining the three above lemmata, we conclude that our implementation of the algorithm solves Consensus.

\begin{thm}
	\label{thm:algo}
	The algorithm solves Consensus, \ie satisfies termination, validity, and agreement.
\end{thm}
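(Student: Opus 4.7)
The plan is to observe that this theorem is essentially a packaging result: the three constituent properties of Consensus have each been established separately as Lemmas~\ref{lem:termination}, \ref{lem:validity}, and \ref{lem:ageement}, so the proof amounts to invoking them and noting that their conjunction matches the definition of solving Consensus given just before Lemma~\ref{lem:termination} (following \cite{Tel94,Lynch96}).

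First I would recall the three conditions of Consensus: every non-failing participant eventually decides (termination), all decision values coincide (agreement), and each decision value originates as some participant's initial value (validity). Then I would simply cite Lemma~\ref{lem:termination} for the first, Lemma~\ref{lem:ageement} for the second, and Lemma~\ref{lem:validity} for the third. Since the three lemmas refer to the same implementation under the same failure-pattern instantiation satisfying Conditions~\ref{cond:all}.\ref{cond:crash}--\ref{cond:all}.\ref{cond:fpWskip} (and the system assumptions inherited from \cite{ChandraToueg96} concerning the eventually strong failure detector $\FDS$), no additional compatibility argument is needed: each lemma is stated for precisely the system $\RCSys$ whose Consensus-correctness we now claim.

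The only substantive remark worth including is that the three lemmas draw on complementary parts of the type-theoretic machinery developed in the paper: termination relies on Theorem~\ref{thm:progress} (progress), validity relies on Theorem~\ref{thm:progress} (session fidelity) together with well-typedness of the coordinator/non-coordinator code, and agreement relies on the \weakR broadcast guarantees enforced by Condition~\ref{cond:all}.\ref{cond:fpWskip} together with the impossibility of losing $\MessBW{\Label[Zero]}$ or $\MessBW{\Label[One]}$ messages. There is no genuine obstacle; the proof is a one-line combination of the preceding lemmata, and I would write it as such, with perhaps a single sentence reiterating that these three properties together are by definition the specification of Consensus.
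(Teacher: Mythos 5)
Your proposal matches the paper's proof exactly: Theorem~\ref{thm:algo} is proved there by simply citing Lemmata~\ref{lem:termination}, \ref{lem:validity}, and \ref{lem:ageement}, which together cover the three defining properties of Consensus. Your additional remarks on which machinery each lemma draws on are accurate but not needed for the argument.
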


\begin{proof}
	By the Lemmata~\ref{lem:termination}, \ref{lem:validity}, and \ref{lem:ageement}.
\end{proof}

%%%%%%%%%%%%%%%%%
%  Conclusions  %
%%%%%%%%%%%%%%%%%

\section{Conclusions}
\label{sec:conclusions}

We present a fault-tolerant variant of \MPST for systems that may suffer from message loss or crash failures.
We implement \unrel communication and \weakR branching.
The failure patterns in the semantics allow to verify algorithms modulo system requirements.
We prove subject reduction and progress and present a small but relevant case study.

An open question is how to conveniently type unreliable recursive parts of protocols.
Distributed algorithms are often recursive and exit this recursion if a result was successfully computed.
%How do we ensure that processes branch---loop or exit---consistently in the case of unreliable communication?
We present a first attempt to solve this problem using a \weakR branching.
In further research we want to analyse, whether or in how far branching can be extended to the case of message loss.

Indeed our implementation of the rotating coordinator algorithm is not ideal.
It implements the decision making procedure correctly and also allows processes to be in different rounds at the same time. So, it represents a non-trivial variant of the rotating coordinator algorithm.
But it does not allow the processes to diverge in their rounds as freely as the original rotating coordinator algorithm, because the \weakR branching implementation implies that the coordinator has always to be the first process to leave a round.
We can solve this problem by wrapping each round in an \unrel sub-session (\eg an \unrel variant of the sub-sessions introduced in \cite{DemangeonHonda12,Demangeon15}). If we allow processes to skip such an \unrel sub-session altogether, we obtain the intended behaviour.

We considered \strongR session delegation.
Next we want to study whether and in how far we can introduce \weakR or \unrel session delegation.
Similarly, we want to study \unrel variants of session initialisation including process crashes and lost messages during session initialisation.
Moreover, \unrel variants of session initialisation open a new perspective on \MPST-frameworks such as \cite{char16} with dynamically changing network topologies and sessions for that the number of roles is determined at run-time.

In Section~\ref{sec:typing} we fix one set of conditions on failure patterns to prove subject reduction, session fidelity, and progress.
We can also think of other sets of conditions.
The failure pattern $ \fpUGet $ can be used to reject the reception of outdated messages.
Therefore, we drop Condition~\ref{cond:all}.\ref{cond:ugetValid} and instead require for each message $ m $ whose reception is refused that $ \fpML $ ensures that $ m $ is eventually dropped from the respective queue and that $ \fpUSkip $ allows to skip the reception of these messages.
An interesting question is to find minimal requirements and minimal sets of conditions that allow to prove correctness in general.

%The type check can be automated easily (at least for a given global type).
It would be nice to also fully automate the remaining proofs for the distributed algorithm in Section~\ref{sec:example}. The approach in \cite{petersWagnerNestmann19} sequentialises well-typed systems and gives the much simpler remaining verification problem to a model checker. Interestingly, the main challenges to adopt this approach are not the \unrel or \weakR prefixes but the failure patterns.

%%%%%%%%%%%%%%%%%%
%  Bibliography  %
%%%%%%%%%%%%%%%%%%

\bibliographystyle{alphaurl}
\bibliography{FTMPST}

\end{document}